\renewcommand{\phi}{\varphi}
\newcommand{\mc}[1]{\mathcal{#1}}
\newcommand{\mf}[1]{\mathfrak{#1}}
\newcommand{\mb}[1]{\mathbb{#1}}
\newcommand{\id}{\mathbbm{1}}
\newcommand{\tint}{{\textstyle\int}}
\newcommand{\ldb}{\{\!\!\{}
\newcommand{\rdb}{\}\!\!\}}
\DeclareMathOperator{\Mat}{Mat}
\DeclareMathOperator{\diag}{diag}
\DeclareMathOperator{\tr}{tr}
\DeclareMathOperator{\res}{Res}
\DeclareMathOperator{\im}{Im}
\DeclareMathOperator{\Res}{Res}
\DeclareMathOperator{\mult}{m}
\theoremstyle{plain}
\newtheorem{theorem}{Theorem}[section]
\newtheorem{lemma}[theorem]{Lemma}
\newtheorem{proposition}[theorem]{Proposition}
\newtheorem{corollary}[theorem]{Corollary}
\theoremstyle{definition}
\newtheorem{definition}[theorem]{Definition}
\newtheorem{example}[theorem]{Example}
\theoremstyle{remark}
\newtheorem{remark}[theorem]{Remark}
\numberwithin{equation}{section}
\definecolor{light}{gray}{.9}
\begin{document}

\title[A new scheme of integrability]{A new scheme of integrability for (bi)Hamiltonian PDE}

\author{Alberto De Sole}
\address{Dipartimento di Matematica, Sapienza Universit\`a di Roma,
P.le Aldo Moro 2, 00185 Rome, Italy}
\email{desole@mat.uniroma1.it}
\urladdr{www1.mat.uniroma1.it/\$$\sim$\$desole}

\author{Victor G. Kac}
\address{Dept of Mathematics, MIT,
77 Massachusetts Avenue, Cambridge, MA 02139, USA}
\email{kac@math.mit.edu}

\author{Daniele Valeri}
\address{Yau Mathematical Sciences Center, Tsinghua University, 100084 Beijing, China}
\email{daniele@math.tsinghua.edu.cn}



\begin{abstract}
We develop a new method for constructing integrable Hamiltonian hierarchies
of Lax type equations, which combines the fractional powers technique of Gelfand and Dickey,
and the classical Hamiltonian reduction technique of Drinfeld and Sokolov.
The method is based on the notion of an Adler type matrix pseudodifferential operator
and the notion of a generalized quasideterminant.
We also introduce the notion of a dispersionless Adler type series,
which is applied to the study of dispersionless Hamiltonian equations.
Non-commutative Hamiltonian equations are discussed in this framework as well.
\end{abstract}

\keywords{
Integrable Hamiltonian hierarchies,
Lax pair,
Adler type pseudodifferential operators,
generalized quasideterminants.
}

\maketitle

\tableofcontents

\section{Introduction}\label{sec:1}

In the present paper we propose a general method of constructing
integrable (bi)Hamiltonian hierarchies of PDE's,
which combines two most famous approaches.
The first one is the Gelfand-Dickey technique,
based on the Lax pair method,
which consists in taking fractional powers of pseudodifferential operators \cite{GD76,Dic03},
and the second one is the classical Hamiltonian reduction technique, 
combined with the Zakharov-Shabat method,
as developed by Drinfeld and Sokolov \cite{DS85}.

The central notion of our method is that of a
matrix pseudodifferential operator of \emph{Adler type},
as defined in our paper \cite{DSKV15a}.
It has been derived there starting from Adler's formula
for the second Poisson structure for the $M$-th KdV hierarchy \cite{Adl79}.
A pseudodifferential operator $A(\partial)$ over a differential algebra $(\mc V,\partial)$ 
is called of \emph{Adler type},
with respect to a $\lambda$-bracket $\{\cdot\,_\lambda\,\cdot\}$ on $\mc V$, if
\begin{equation}\label{eq:adler-scalar}
\begin{split}
\{A(z)_\lambda A(w)\}
& = A(w+\lambda+\partial)\iota_z(z\!-\!w\!-\!\lambda\!-\!\partial)^{-1}A^*(\lambda-z)
\\
& - A(z)\iota_z(z\!-\!w\!-\!\lambda\!-\!\partial)^{-1}A(w)
\,.
\end{split}
\end{equation}
Recall that a $\lambda$-bracket is a bilinear (over the base field $\mb F$)
map $\{\cdot\,_\lambda\,\cdot\}:\,\mc V\times\mc V\to\mc V[\lambda]$
satisfying the sesquilinearity and Leibniz rules axioms
(see Section \ref{sec:2.0}),
$A(z)$ is the symbol of $A(\partial)$,
$A^*(\partial)$ is the formal adjoint of $A(\partial)$,
and $\iota_z$ denotes the expansion in the geometric series for large $z$.

The definition of a rectangular matrix pseudodifferential matrix operator of Adler type
is similar, see \cite{DSKV15a} or Definition \ref{def:adler} of the present paper.

The first basic property of an Adler type matrix pseudodifferential operator $A(\partial)$,
proved in \cite{DSKV15a}, is that the $\lambda$-bracket on $\mc V$
restricted to the subalgebra $\mc V_1$
generated by the coefficients of the entries of $A(\partial)$
satisfies the skewcommutativity and Jacobi identity axioms of a Poisson vertex algebra (PVA)
(in Section \ref{sec:2.0} we recall its definition).
Thus an Adler type operator automatically provides $\mc V_1$
with a Poisson structure.

For example, we note in \cite{DSKV15a}
that, for each positive integer $M$, the ``generic''
pseudodifferential operator
(resp. ``generic'' differential operator)
\begin{equation}\label{eq:intro1}
L_M(\partial)
=
\sum_{j=-\infty}^M u_j\partial^j
\qquad\Big(\text{resp. }\,\, 
L_{(M)}(\partial)
=
\sum_{j=0}^M u_j\partial^j
\,\Big)
\,,\qquad\text{ with }
u_M=1
\,,
\end{equation}
is of Adler type.
Consequently we automatically get on the algebra of differential polynomials
in $\{u_j\,|\,-\infty<j<M\}$
(resp. $\{u_j\,|\,0\leq j<M\}$)
the so called $2$-nd Poisson structure for the KP hierarchy
(resp. for the $M$-th KdV hierarchy).
This Poisson structure for the $M$-th KdV hierarchy has been conjectured by Adler
in \cite{Adl79},
and subsequently proved by a lengthy calculation in \cite{GD78}.
The second Poisson structures for the KP hierarchy
were discovered by Radul in \cite{Rad87}.
Note that these structures are different for different $M$,
though the corresponding hierarchies of equations are isomorphic.
Note also that the $1$-st Poisson structure for the KP hierarchy
was previously discovered in \cite{Wat83},
and it is well known to be easily derived from the $2$-nd structure.
In fact, this is the case in general:
if $A(\partial)+\epsilon$ is of Adler type for any constant $\epsilon$,
then we call $A(\partial)$ of bi-Adler type,
and we have on $\mc V_1$ a pencil of Poisson structures,
see Section \ref{sec:6.1}.

The second basic property of an Adler type $M\times M$-matrix pseudodifferential
operator $A(\partial)$
is that it provides a hierarchy of compatible
Lax type equations via the method of fractional powers:
\begin{equation}\label{eq:intro2}
\frac{dA(\partial)}{dt_{n,B}}
=
[B(\partial)^n_+,A(\partial)]
\,,
\end{equation}
where $B(\partial)$ is a $k$-th root of $A(\partial)$,
and $k,n$ are positive integers,
see \cite{DSKV15a} and Section \ref{sec:2.3} of the present paper.
Here, as usual, the subscript $+$ stands for the positive part
of a pseudodifferential operator.
Of course, in particular, for $A(\partial)=L_1(\partial)$ and $k=1$
(resp. $A(\partial)=L_{(M)}(\partial)$ and $k=M$),
equation \eqref{eq:intro2} is Sato's KP hierarchy \cite{Sat81}
(resp. Gelfand-Dickey's $M$-th KdV hierarchies \cite{GD76}).
See also \cite{Dic03}.

Furthermore, equations \eqref{eq:intro2}
are Hamiltonian with respect to the Poisson structure described above,
and they are bi-Hamiltonian if $A(\partial)$ is of bi-Adler type, see Section \ref{sec:6}.

The third basic property of Adler type pseudodifferential operators $A(\partial)$
is that it provides an infinite set of conserved densities
for the hierarchy \eqref{eq:intro2}:
\begin{equation}\label{eq:intro3}
h_{n,B}
=
\frac{-k}{n}\Res_\partial\tr B(\partial)^n
\,,\,\,n\in\mb Z_{>0}\,.
\end{equation}
Moreover, provided that $A(\partial)$ is bi-Adler,
these conserved densities satisfy the (generalized)
Lenard-Magri scheme, introduced in \cite{Mag78}.
See \cite{DSKV15a} and Sections \ref{sec:2.3} and \ref{sec:6}
of the present paper for details.

However, our approach allows one not only to recover the basic results
of the KP theory and $M$-th KdV theory,
including the matrix case,
as has been demonstrated in \cite{DSKV15a},
but to go far beyond that
in constructing new integrable (bi)Hamiltonian hierarchies of Lax type equations.
Of course, for this we need to construct new Adler type operators.

Our basic example is the family of $N\times N$-matrix differential operators
\begin{equation}\label{eq:intro4}
A_{S}(\partial)
=
\id_N\partial+\sum_{i,j=1}^Nq_{ji}E_{ij}+S^t
\,\in\Mat_{N\times N}\mc V[\partial]
\,,
\end{equation}
where $\mc V$ is the algebra of differential polynomials in the generators $q_{ij}$,
and $S^t$ is the transpose of a constant matrix $S\in\Mat_{N\times N}\mb F$.
It is not difficult to show,
see Example \ref{ex:affine-adler},
that the operator $A_S(\partial)$ is of Adler type
with respect to the following $\lambda$-bracket on $\mc V$,
making it the affine PVA over the Lie algebra $\mf{gl}_N$:
\begin{equation}\label{eq:intro5}
\{a_\lambda b\}_S
=
[a,b]+\tr(ab)\lambda+\tr(S^t[a,b])
\,\,,\,\,\,\,
a,b\in\mf{gl}_N
\,.
\end{equation}

Unfortunately, the operator $A_S(\partial)$ does not have ``good'' $K$-th roots,
see Appendix \ref{sec:app}.
The way out is provided by yet another remarkable property of a square matrix non-degenerate
Adler type operator $A(\partial)$:
its inverse is of Adler type as well, with respect to the negative of the $\lambda$-bracket
for $A(\partial)$ on $\mc V$.
It follows that the generalized quasideterminants of $A(\partial)$
are of Adler type with respect to the same $\lambda$-bracket as for $A(\partial)$,
see Section \ref{sec:3}.

Recall that a \emph{quasideterminant} of an invertible matrix $A$
over an associative (not necessarily commutative) unital ring
is the inverse (if it exists) of an entry of $A^{-1}$.
This notion was the key to the systematic development of linear algebra
over non-commutative associative rings
in a series of papers by Gelfand and Retakh, and their collaborators,
see \cite{GGRW05} for a review.
The basic idea of this work is that in linear algebra
one does not need determinants,
all one needs are inverse matrices!
(We find it surprising that,
nevertheless, in the above mentioned review the authors of a modern textbook in linear algebra
are criticized for the claim that ``determinants ... are of much less importance that they once were''.)

In the present paper we use the following generalization of quasideterminants:
if $I\in\Mat_{N\times M}\mb F$ and $J\in\Mat_{M\times N}\mb F$
are rectangular matrices of rank $M\leq N$,
the $(I,J)$ quasideterminant of an invertible matrix $A$ is $|A|_{IJ}=(JA^{-1}I)^{-1}$,
assuming that the $M\times M$ matrix $JA^{-1}I$ is invertible.

To demonstrate our new method of integrability,
consider, for example, the matrix $A_{\epsilon S}$ defined in \eqref{eq:intro4},
which is of Adler type with respect to the pencil of PVA $\lambda$-brackets
$\{\cdot\,_\lambda\,\cdot\}_\epsilon=\{\cdot\,_\lambda\,\cdot\}_0+\epsilon\{\cdot\,_\lambda\,\cdot\}_1$
on $\mc V$, defined by \eqref{eq:intro5} with $S$ replaced by $\epsilon S$.
Let $I\in\Mat_{N\times M}\mb F$ and $J\in\Mat_{M\times N}\mb F$
be rectangular matrices of rank $M\leq N$ such that $S=IJ$
($I$ and $J$ are uniquely defined up to a change of basis in $\mb F^M$).
Then, by the results of Section \ref{sec:3},
the generalized quasideterminant $|A|_{IJ}$ is 
a matrix pseudodifferential operator of bi-Adler type
with respect to the same pencil of PVA $\lambda$-brackets $\{\cdot\,_\lambda\,\cdot\}_\epsilon$.
As a consequence, 
we automatically get, by the above mentioned method of fractional powers (cf. \eqref{eq:intro3}),
a sequence of Hamiltonian functionals
$\tint h_n=-\frac1n\tint\Res_\partial\tr (|A(\partial)|_{IJ})^n$, $n\geq1$, $\tint h_0=0$,
which are in involution with respect to both PVA $\lambda$-brackets
$\{\cdot\,_\lambda\,\cdot\}_0$ and $\{\cdot\,_\lambda\,\cdot\}_1$,
they satisfy the Lenard-Magri recursion relation
in the subalgebra $\mc V_1\subset\mc V$ generated by the coefficients of $|A|_{IJ}$,
and therefore produce an integrable hierarchy of bi-Hamiltonian equations
$\frac{du}{dt_n}
=
\{{h_n}_\lambda u\}_0\big|_{\lambda=0}
=
\{{h_{n+1}}_\lambda u\}_1\big|_{\lambda=0}$, $n\geq0$, $u\in\mc V_1$.

Unfortunately it is a well established tradition
to call the Poisson structures corresponding to the $\lambda$-brackets
$\{\cdot\,_\lambda\,\cdot\}_0$ and $\{\cdot\,_\lambda\,\cdot\}_1$,
the second and the first Poisson structures respectively,
which disagrees with our notation.

In Section \ref{sec:7}
we describe this example in detail in the case of $\mf{gl}_2$,
deriving the first non-trivial equation of the corresponding hierarchy,
for all possible choices of the matrix $S$ in \eqref{eq:intro4}.

Next, we introduce in Section \ref{sec:8}
the notion of a dispersionless Adler type series, which allows us to construct
integrable Hamiltonian dispersionless hierarchy in a unified fashion.
For example, the symbols $L_M(z)$ and $L_{(M)}(z)$
of the ``generic'' pseudodifferential operators $L_M(\partial)$ and $L_{(M)}(\partial)$
given by \eqref{eq:intro1}
are dispersionless Adler type series.
The series $L_1(z)$ gives rise to the dispersionless KP hierarchy,
denoted by dKP, introduced in full generality in \cite{KG89},
of which the first non-trivial equation is Benney's equation for moments,
as described by Lebedev and Manin in \cite{LM79}.
As a result we endow the dKP hierarchy with an infinite series of bi-Poisson structures,
associated to $L_M(z)$ for every $M\geq1$.
The series $L_{(M)}(z)$ provides similar results for the dispersionless $M$-th KdV hierarchy.

Finally, in Section \ref{sec:9}
we develop further the notion of an Adler type pseudodifferential operator
with respect to a double PVA, introduced in \cite{DSKV15b}.

As an application of our method,
in our next paper \cite{DSKVfuture} we will 
construct an integrable hierarchy of bi-Hamiltonian equations
associated to any affine $\mc W$-algebra $\mc W(\mf{gl}_N,f)$,
where $f$ is any nilpotent element of the Lie algebra $\mf{gl}_N$.
%
This was previously known only for nilpotent elements $f$ 
of very specific type, see \cite{DSKV13} and references there.

Our base field $\mb F$ is a field of characteristic $0$.

\medskip

The first two authors would like to acknowledge
the hospitality of IHES, France,
where this work was completed in the summer of 2015.
The third author would like to acknowledge 
the help of the Italian government
in the evacuation from the earthquake in Nepal in April 2015,
as well as the friendship and hospitality of the Nepalese people,
and the hospitality of the University of Rome La Sapienza
during his subsequent visit in Rome.
The first author is supported by National FIRB grant RBFR12RA9W,
National PRIN grant 2012KNL88Y, and University grant C26A158K8A,
the second author is supported by an NSF grant,
and the third author is supported by an NSFC ``Research Fund for International
Young Scientists'' grant.

\section{Differential algebras, \texorpdfstring{$\lambda$}{lambda}-brackets and Poisson vertex algebras}\label{sec:2}

\subsection{Differential algebras and matrix pseudodifferential operators}\label{sec:2.-1}

By a differential algebra $\mc V$ we mean a unital commutative associative algebra
with a derivation $\partial$.
As usual, we denote by $\tint:\,\mc V\to\mc V/\partial\mc V$ the canonical quotient map.

Let $\mc V((\partial^{-1}))$ be the algebra of scalar pseudodifferential operators over $\mc V$.
Given $A(\partial)\in\mc V((\partial^{-1}))$,
we denote by $A(z)\in\mc V((z^{-1}))$ its symbol, 
obtained by replacing $\partial^n$ (on the right) by $z^n$.
The product $\circ$ on $\mc V((\partial^{-1}))$ can be defined in terms of symbols by
\begin{equation}\label{eq:prod-symbol}
(A\circ B)(z)=A(z+\partial)B(z)
\,.
\end{equation}
Here and further, we always expand an expression as $(z+\partial)^n$ in non-negative powers of $\partial$.
The formal adjoint of the pseudodifferential operator $A(\partial)=\sum_{n=-\infty}^Na_n\partial^n$ is
defined as 
$$
A^*(\partial)=\sum_{n=-\infty}^N(-\partial)^n\circ a_n
\,.
$$

Introduce an important notation which will be used throughout the paper.
Given a formal Laurent series $A(z)=\sum_{n=-\infty}^Na_nz^n\in\mc V((z^{-1}))$ 
and elements $b,c\in \mc V$, we let:
\begin{equation}\label{eq:notation}
A(z+x)\big(\big|_{x=\partial}b)c
=\sum_{n=-\infty}^Na_n\big((z+\partial)^nb\big)c
\,.
\end{equation}
For example, with this notation, the formal adjoint of a pseudodifferential
operator $A(\partial)\in\mc V((\partial^{-1}))$ has symbol
\begin{equation}\label{eq:adjoint}
A^*(z)=\big(\big|_{x=\partial}A(-z-x)\big)
\,.
\end{equation}

Let $\Mat_{M\times N}\mc V((\partial^{-1}))$ be the space of $M\times N$ matrix pseudodifferential 
operators over $\mc V$.
Let $A(\partial)=\big(A_{ij}(\partial)\big)_{\substack{{\scriptscriptstyle i=1,\dots,M}\\
\scriptscriptstyle j=1,\dots,N}}
\in\Mat_{M\times N}\mc V((\partial^{-1}))$
be a matrix pseudodifferential operator.
Its formal adjoint $A^*(\partial)\in\Mat_{N\times M}\mc V((\partial^{-1}))$ has entries
$(A^*)_{ij}(\partial)=(A_{ji})^*(\partial)$.
Let $I\subset\{1,\dots,M\}$ and $J\subset\{1,\dots,N\}$ be arbitrary subsets.
Given an $M\times N$ matrix $A=\big(A_{ij}\big)_{\substack{{\scriptscriptstyle i=1,\dots,M}\\ {\scriptscriptstyle j=1,\dots,N}}}$,
we denote by $A_{IJ}$ the corresponding $|I|\times|J|$ submatrix:
\begin{equation}\label{eq:submatrix}
A_{IJ}=\big(A_{ij}\big)_{i\in I,j\in J}
\,.
\end{equation}

Recall that the residue $\Res_zA(z)$ of a formal Laurent series $A(z)\in\mc V((z^{-1}))$
is defined as the coefficient of $z^{-1}$.
We introduce the following notation: 
\begin{equation}\label{eq:iota}
\iota_z(z-w)^{-1}=\sum_{n\in\mb Z_+}z^{-n-1}w^n
\,,
\end{equation}
i.e. $\iota_z$ denotes the geometric series expansion in the domain of large $z$
(and similarly, $\iota_w$ denotes the expansion for large $w$).
According to the previous convention (as in \eqref{eq:prod-symbol}), 
a Laurent series in $(z+\partial)$ is always expanded using $\iota_z$.
Similarly, if we have a Laurent series in $(z+\lambda)$ we omit the symbol $\iota_z$,
and $\iota_z$ or $\iota_w$ will only be used in Laurent series involving both variables $z$ and $w$.
Recall that, for a formal Laurent series $A(z)=\sum_{n=-\infty}^N a_nz^n\in\mc V((z^{-1}))$, we have
\begin{equation}\label{eq:positive}
\Res_z A(z)\iota_z(z-w)^{-1}
=
A(w)_+
:=
\sum_{n=0}^N a_nw^n
\,.
\end{equation}

We state here some simple facts about matrix pseudodifferential operators
which will be used later.
\begin{lemma}\phantomsection\label{lem:hn1}
\begin{enumerate}[(a)]
\item
Given two pseudodifferential operators $A(\partial),B(\partial)\in\mc V((\partial^{-1}))$, we have
\begin{equation}\label{eq:hn1a}
\Res_z A(z)B^*(\lambda-z)=\Res_zA(z+\lambda+\partial)B(z)
\,.
\end{equation}
\item
Given two matrix pseudodifferential operators 
$A(\partial),B(\partial)\in\Mat_{N\times N}\mc V((\partial^{-1}))$, we have
\begin{equation}\label{eq:hn1b}
\tint \Res_z \tr(A(z+\partial)B(z))=\tint \Res_z\tr(B(z+\partial)A(z))
\,.
\end{equation}
\end{enumerate}
\end{lemma}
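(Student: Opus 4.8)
The plan is to prove both identities by reducing to monomials via bilinearity, computing the relevant residues explicitly, and matching the resulting coefficients with the binomial ``upper negation'' identity $\binom{p}{m}=(-1)^m\binom{m-p-1}{m}$, valid for every integer $m\geq0$ and every $p$. With $m=p+q+1$ (so that $m-p-1=q$) this reads $\binom{p}{p+q+1}=(-1)^{p+q+1}\binom{q}{p+q+1}$, and this single relation is what ultimately makes both statements work.

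For part (a), by bilinearity it suffices to take $A(\partial)=a\partial^p$ and $B(\partial)=b\partial^q$. Using $B^*(z)=(-(z+\partial))^q b$ from \eqref{eq:adjoint}, hence $B^*(\lambda-z)=(z-\lambda-\partial)^q b$, and expanding $(z+\lambda+\partial)^p$ and $(z-\lambda-\partial)^q$ for large $z$ as prescribed by the $\iota_z$ convention, a direct computation gives
\begin{align*}
\Res_z A(z+\lambda+\partial)B(z)
&= \binom{p}{p+q+1}\,a\big((\lambda+\partial)^{p+q+1}b\big),\\
\Res_z A(z)B^*(\lambda-z)
&= (-1)^{p+q+1}\binom{q}{p+q+1}\,a\big((\lambda+\partial)^{p+q+1}b\big),
\end{align*}
so the two sides agree by the binomial identity above. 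The only point needing care in the reduction is that for each fixed power of $\lambda$ only finitely many pairs $(p,q)$ contribute to the residue (those with $p\leq N_A$, $q\leq N_B$ and $p+q+1$ fixed), so all sums are finite and the identity extends by bilinearity to arbitrary $A,B\in\mc V((\partial^{-1}))$.

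For part (b) I would first reduce to the scalar case. Since $\tr\big(A(z+\partial)B(z)\big)=\sum_{i,j}A_{ij}(z+\partial)B_{ji}(z)$, summing over $i,j$ and relabelling reduces the claim to showing, for scalar operators $a,b$, that $\tint\Res_z a(z+\partial)b(z)=\tint\Res_z b(z+\partial)a(z)$. By \eqref{eq:prod-symbol} the two sides equal $\tint\Res_\partial(a\circ b)$ and $\tint\Res_\partial(b\circ a)$, so this is the statement that $\tint\Res_\partial$ vanishes on commutators. Taking once more $a=f\partial^p$, $b=g\partial^q$, one computes $\Res_\partial(a\circ b)=\binom{p}{p+q+1}f(\partial^{p+q+1}g)$ and $\Res_\partial(b\circ a)=\binom{q}{p+q+1}g(\partial^{p+q+1}f)$; integrating by parts $p+q+1$ times modulo $\partial\mc V$ turns the first into $(-1)^{p+q+1}\binom{p}{p+q+1}g(\partial^{p+q+1}f)$, and the same binomial identity closes the argument.

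These computations are routine, so I do not expect a conceptual obstacle; the real work is bookkeeping. The main points to get right are the signs produced by the $\iota_z$ expansions and by integration by parts, and the verification that in each reduction to monomials only finitely many terms contribute, so that the termwise identities legitimately assemble into the stated ones. Both parts collapse, in the end, to the relation $\binom{p}{p+q+1}=(-1)^{p+q+1}\binom{q}{p+q+1}$.
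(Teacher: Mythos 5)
Your proof is correct and follows essentially the same route as the paper: both arguments reduce to monomials and come down to the upper-negation identity $\binom{p}{p+q+1}=(-1)^{p+q+1}\binom{q}{p+q+1}$, which is exactly the binomial identity $\binom{k-n-1}{k}=(-1)^k\binom{n}{k}$ invoked in the paper's proof of part (a). The only organizational difference is in part (b), where the paper deduces the statement from part (a) with $\lambda=0$ together with the adjoint formula and integration by parts, while you reprove the underlying fact that $\tint\Res_\partial$ vanishes on commutators by a second monomial computation; the content is the same.
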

\begin{proof}
First, for arbitrary $m,n\in\mb Z$, we have
$\Res_zz^m(z-\lambda)^n=\Res_z(z+\lambda)^mz^n\,\in\mb F[\lambda]((z^{-1}))$
(it reduces to the identity on binomial coefficients $\binom{k-n-1}{k}=(-1)^k\binom{n}{k}$ for $k\in\mb Z_+$).
Hence, by linearity, 
\begin{equation}\label{eq:hn1-pr2}
\Res_zA(z)B(z-\lambda)=\Res_zA(z+\lambda)B(z)
\,.
\end{equation}
for every $A(z),B(z)\in\mc V((z^{-1}))$.
On the other hand, in notation \eqref{eq:notation} equation \eqref{eq:hn1a} can be rewritten as
\begin{equation}\label{eq:hn1-pr1}
\Res_z A(z)\big(\big|_{x=\partial}B(z-\lambda-x)\big)=\Res_zA(z+\lambda+x)\big(\big|_{x=\partial}B(z)\big)
\,,
\end{equation}
which holds by \eqref{eq:hn1-pr2} (with $\lambda+x$ in place of $\lambda$).
Next, we prove part (b). We have
\begin{equation*}
\begin{split}
& \int\Res_z\tr(A(z+\partial)B(z))
=
\sum_{i,j=1}^N\int\Res_z A_{ij}(z+\partial)B_{ji}(z)
\\
& =
\sum_{i,j=1}^N\int\Res_z (B_{ji})^*(-z)A_{ij}(z)
=
\sum_{i,j=1}^N\int\Res_z \big(\big|_{x=\partial}B_{ji}(z-x)\big)A_{ij}(z)
\\
& =
\sum_{i,j=1}^N\int\Res_z B_{ji}(z+x)\big(\big|_{x=\partial}A_{ij}(z)\big)
=
\sum_{i,j=1}^N\int\Res_z B_{ji}(z+\partial)A_{ij}(z)
\\
& =
\int\Res_z \tr(B(z+\partial)A(z))
\,.
\end{split}
\end{equation*}
In the second equality we used equation \eqref{eq:hn1a} with $\lambda=0$, 
in the third equality we used \eqref{eq:adjoint},
and in the fourth equality we performed integration by parts.
\end{proof}

\subsection{\texorpdfstring{$\lambda$}{lambda}-brackets and Poisson vertex algebras}\label{sec:2.0}

Recall from \cite{BDSK09} that a $\lambda$-\emph{bracket} on the differential algebra $\mc V$ 
is a bilinear (over $\mb F$) map $\{\cdot\,_\lambda\,\cdot\}:\,\mc V\times\mc V\to\mc V[\lambda]$, 
satisfying the following
axioms ($a,b,c\in\mc V$):
\begin{enumerate}[(i)]
\item
sesquilinearity:
$\{\partial a_\lambda b\}=-\lambda\{a_\lambda b\}$,
$\{a_\lambda\partial b\}=(\lambda+\partial)\{a_\lambda b\}$;
\item
Leibniz rules (see notation \eqref{eq:notation}):
$$
\{a_\lambda bc\}=\{a_\lambda b\}c+\{a_\lambda c\}b
\,\,,\,\,\,\,
\{ab_\lambda c\}=\{a_{\lambda+x} c\} \big(\big|_{x=\partial}b\big)
+\{b_{\lambda+x} c\} \big(\big|_{x=\partial}a\big)
\,.
$$
\end{enumerate}
We say that $\mc V$ is a \emph{Poisson vertex algebra} (PVA) if the $\lambda$-bracket $\{\cdot\,_\lambda\,\cdot\}$
satisfies ($a,b,c\in\mc V$)
\begin{enumerate}[(i)]
\setcounter{enumi}{2}
\item
skewsymmetry:
$\{b_\lambda a\}=-\big(\big|_{x=\partial}\{a_{-\lambda-x} b\}\big)$;
\item
Jacobi identity:
$\{a_\lambda \{b_\mu c\}\}-\{b_\mu\{a_\lambda c\}\}
=\{\{a_\lambda b\}_{\lambda+\mu}c\}$.
\end{enumerate}
Recall \cite{BDSK09} that if $\{u_i\}_{i\in I}$ is a set of generators for the differential algebra $\mc V$,
then a $\lambda$-bracket on $\mc V$ is uniquely determined by $\{{u_i}_\lambda{u_j}\}$ for $i,j\in I$.
Moreover, if the skewsymmetry and Jacobi identity axioms hold on generators,
then $\mc V$ is a PVA.
If a differential algebra $\mc V$ is endowed with a $\lambda$-bracket
satisfying skewsymmetry and Jacobi identity,
we shall often say that $\mc V$ carries a PVA structure, or simply a Poisson structure.
\begin{example}\label{ex:affine-pva}
Let $\mc V$ be the algebra of differential polynomials over $\mf{gl}_N$.
To keep the notation separate,
we shall denote by $E_{ij}\in\Mat_{N\times N}\mb F$ the standard matrix with $1$ in position $(i,j)$ and $0$ everywhere else,
and by $q_{ij}\in\mf{gl}_N$ the same matrix, when viewed as an element of the differential algebra $\mc V$.
Then $\mc V=\mb F[q_{ij}^{(n)}\,|\,i,j=1,\dots,N,\,n\in\mb Z_+]$.
Let $S\in\Mat_{N\times N}\mb F$ be a fixed matrix
and let $S^t$ be its transpose.
The corresponding \emph{affine Poisson vertex algebra} $\mc V_S(\mf{gl}_N)$ is the differential algebra $\mc V$
endowed with the $\lambda$-bracket $\{\cdot\,_\lambda\,\cdot\}_S$ defined by
\begin{equation}\label{eq:affine-lb}
\{a_\lambda b\}_S
=
[a,b]+\tr(ab)\lambda+\tr(S^t[a,b])
\,,\qquad
a,b\in\mf{gl}_N\,,
\end{equation}
and (uniquely) extended to a PVA $\lambda$-bracket on $\mc V$
by the sesquilinearity axioms and the Leibniz rules.
\end{example}

\subsection{Properties of \texorpdfstring{$\lambda$}{lambda}-brackets}\label{sec:2.0b}

Later on we shall use the following results from \cite{DSK13},
which hold on an arbitrary differential algebra $\mc V$ 
with a $\lambda$-bracket $\{\cdot\,_\lambda\,\cdot\}$.
\begin{lemma}\label{lem:lambda-product}
For $A(\partial),B(\partial)\in\Mat_{N\times N}\mc V((\partial^{-1}))$,
$a\in\mc V$, and $i,j=1,\dots,N$, we have the following identities in $\mc V[\lambda]((z^{-1}))$
\begin{equation}\label{eq:lambda-product}
\begin{split}
& \{a_\lambda (A\circ B)_{ij}(z)\}
=
\sum_{k=1}^N \{a_\lambda A_{ik}(z+x)\}\big(\big|_{x=\partial}B_{kj}(z)\big)
\\
& \,\,\,\,\,\,\,\,\,\,\,\,\,\,\,\,\,\,\,\,\,\,\,\,\,\,\,\,\,\,\,\,\,\,\,\,\,\,\,\,\,\,\,\,\,
+\sum_{k=1}^N A_{ik}(z+\lambda+\partial)\{a_\lambda B_{kj}(z)\}
\,,\\
& \{(A\circ B)_{ij}(z)_\lambda a\}
=
\sum_{k=1}^N \{B_{kj}(z)_{\lambda+x}a\}\big(\big|_{x=\partial}(A^*)_{ki}(\lambda-z)\big)
\\
& \,\,\,\,\,\,\,\,\,\,\,\,\,\,\,\,\,\,\,\,\,\,\,\,\,\,\,\,\,\,\,\,\,\,\,\,\,\,\,\,\,\,\,\,\,
+\sum_{k=1}^N \{A_{ik}(z+x)_{\lambda+x}a\}\big(\big|_{x=\partial}B_{kj}(z)\big)
\,.
\end{split}
\end{equation}
\end{lemma}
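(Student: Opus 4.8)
The plan is to prove both identities in \eqref{eq:lambda-product} by reducing to the case of monomial entries and then invoking only the defining axioms of a $\lambda$-bracket together with the product--symbol formula \eqref{eq:prod-symbol} and the expression \eqref{eq:adjoint} for the adjoint symbol. Both sides of each identity are $\mb F$-bilinear in $A$ and $B$, and the matrix entry $(A\circ B)_{ij}(z)=\sum_{k=1}^N A_{ik}(z+\partial)B_{kj}(z)$ is a sum over $k$ of products of scalar symbols. Since, for every fixed power of $z$, only finitely many coefficients of $A_{ik}$ and $B_{kj}$ contribute, it will suffice to establish the two identities when $A_{ik}(\partial)=f\partial^p$ and $B_{kj}(\partial)=g\partial^q$ with $f,g\in\mc V$ and $p,q\in\mb Z$; the general statement then follows by bilinearity and summation over $k$. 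At the outset I would record that $(z+\partial)$, $(z+\lambda+\partial)$, $(z-\lambda-\partial)$ and $(\lambda+\partial)$ are commuting polynomials in the single operator $\partial$ (with $z,\lambda$ central), which legitimizes the reorderings used below when collapsing binomial sums.

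For the first identity I would start from $(A\circ B)(z)=f(z+\partial)^p(g)\,z^q=\sum_{r\ge0}\binom{p}{r}f(\partial^r g)\,z^{p-r+q}$, apply $\{a_\lambda\,\cdot\,\}$, and use that it is a derivation of the commutative product (left Leibniz rule) together with left sesquilinearity $\{a_\lambda\partial^r g\}=(\lambda+\partial)^r\{a_\lambda g\}$. This splits the result into a sum carrying $\{a_\lambda f\}$ and a sum carrying $\{a_\lambda g\}$. Recognizing $\sum_r\binom{p}{r}z^{p-r}\partial^r=(z+\partial)^p$ and $\sum_r\binom{p}{r}z^{p-r}(\lambda+\partial)^r=(z+\lambda+\partial)^p$ as operators in $\partial$ then identifies the two sums with $\{a_\lambda A(z+x)\}\big(\big|_{x=\partial}B(z)\big)$ and $A(z+\lambda+\partial)\{a_\lambda B(z)\}$ respectively. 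This part is essentially bookkeeping.

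The second identity is where the real work lies, because the formal adjoint $(A^*)_{ki}(\lambda-z)$ has to be produced. Here I would apply the right Leibniz rule to $\{f(\partial^r g)_\lambda a\}$, which yields an ``$A$-term'' $\{f_{\lambda+x}a\}\big(\big|_{x=\partial}\partial^r g\big)$ and a ``$B$-term'' $\{(\partial^r g)_{\lambda+x}a\}\big(\big|_{x=\partial}f\big)$, with $x$ the placeholder of \eqref{eq:notation}. The $A$-term resums exactly as in the first identity, now with the shifted bracket variable $\lambda+x$, to give $\{A(z+x)_{\lambda+x}a\}\big(\big|_{x=\partial}B(z)\big)$. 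For the $B$-term I would first use left sesquilinearity $\{\partial b_\mu a\}=-\mu\{b_\mu a\}$ with $\mu=\lambda+x$ to extract the factor $(-(\lambda+x))^r$, and then invoke the key resummation $\sum_{r\ge0}\binom{p}{r}(-(\lambda+x))^r z^{p-r}=(z-\lambda-x)^p$. Comparing $\sum_p (z-\lambda-\partial)^p f$ with the adjoint symbol $(A^*)_{ki}(\lambda-z)=\sum_p(z-\lambda-\partial)^p f$ read off from \eqref{eq:adjoint} then identifies the $B$-term with $\{B(z)_{\lambda+x}a\}\big(\big|_{x=\partial}(A^*)_{ki}(\lambda-z)\big)$, as claimed.

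The main obstacle is therefore organizational but genuinely delicate: correctly tracking the three distinct variables --- the symbol variable $z$, the bracket variable $\lambda$, and the placeholder $x$ of \eqref{eq:notation} --- through the right Leibniz rule, and verifying that the binomial resummation above, fed into the definition \eqref{eq:adjoint}, reproduces \emph{precisely} the adjoint $(A^*)_{ki}(\lambda-z)$ rather than some neighboring expression. Once the monomial case is checked, extending by bilinearity in $A$ and $B$ and summing over $k$ is immediate, since for each power of $z$ the sums are finite.
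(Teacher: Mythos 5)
Your proof is correct and follows exactly the route the paper intends: the paper's own proof of this lemma is a one-line reference to the Leibniz rules and sesquilinearity axioms (citing \cite{DSK13}), and your reduction to monomial entries $f\partial^p$, $g\partial^q$ followed by the binomial resummations $\sum_r\binom{p}{r}z^{p-r}(\lambda+\partial)^r=(z+\lambda+\partial)^p$ and $\sum_r\binom{p}{r}(-(\lambda+x))^rz^{p-r}=(z-\lambda-x)^p$, the latter matched against \eqref{eq:adjoint}, is precisely the computation being suppressed. The bookkeeping of the placeholder $x$ in the adjoint term checks out.
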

\begin{proof}
It follows immediately from the Leibniz rules and the sesquilinearity axioms
(cf. the proof of \cite[Cor.3.11]{DSK13}).
\end{proof}
\begin{corollary}\label{cor:lambda-power}
For $A(\partial)\in\Mat_{N\times N}\mc V((\partial^{-1}))$,
$a\in\mc V$, $i,j=1,\dots,N$, and $n\in\mb Z_+$,
the following identities hold in $\mc V[\lambda]((z^{-1}))$
\begin{equation}\label{eq:lambda-power}
\begin{split}
& \{a_\lambda (A^n)_{ij}(z)\}
 =
\sum_{\ell=0}^{n-1}\sum_{h,k=1}^N
(A^{n-1-\ell})_{ih}(z+\lambda+\partial)
\{a_\lambda A_{hk}(z+x)\}
\big(\big|_{x=\partial}(A^\ell)_{kj}(z)\big)
\,,\\
& \{(A^n)_{ij}\!(z)_\lambda a\}
\\
&=
\sum_{\ell=0}^{n-1}\!\sum_{h,k=1}^N
\{A_{hk}(z+x)_{\lambda+x+y}a\}
\big(\big|_{x=\partial} (A^{n-1-\ell})_{kj}(z)\big)
\big(\big|_{y=\partial} (A^{*\ell})_{hi}(\lambda-z)\big)
\,.
\end{split}
\end{equation}
\end{corollary}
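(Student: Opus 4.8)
The plan is to prove Corollary \ref{cor:lambda-power} by induction on $n$, using Lemma \ref{lem:lambda-product} applied to the factorization $A^n = A \circ A^{n-1}$. First I would verify the base case $n=0$: here $A^0 = \id_N$, so $(A^0)_{ij}(z) = \delta_{ij}$ is a constant in $\mb F$, both sides of each identity are empty sums (the sum $\sum_{\ell=0}^{-1}$ is empty), and the left-hand side $\{a_\lambda \delta_{ij}\} = 0$ by bilinearity over $\mb F$, so both formulas hold trivially. The case $n=1$ reduces directly to the sesquilinearity axioms together with the single-product identity, giving the desired single-term sum $\ell=0$, $h=i$, $k=j$.

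For the inductive step of the first identity, I would write $(A^n)_{ij}(z) = (A \circ A^{n-1})_{ij}(z)$ and apply the first formula of Lemma \ref{lem:lambda-product} with $B = A^{n-1}$. This splits $\{a_\lambda (A^n)_{ij}(z)\}$ into two groups of terms. The first group, $\sum_{k} \{a_\lambda A_{ik}(z+x)\}\big(\big|_{x=\partial}(A^{n-1})_{kj}(z)\big)$, already has the shape of the $\ell = n-1$ summand (with $h=i$, the factor $(A^{n-1-\ell})_{ih} = (A^0)_{ii} = \delta$ collapsing to a single index). The second group, $\sum_{k} A_{ik}(z+\lambda+\partial)\{a_\lambda (A^{n-1})_{kj}(z)\}$, is then expanded using the inductive hypothesis applied to $\{a_\lambda (A^{n-1})_{kj}(z)\}$; combining the prefactor $A_{ik}(z+\lambda+\partial)$ with the inductive sum $\sum_{\ell=0}^{n-2}\sum_{h',k'}(A^{n-2-\ell})_{k h'}(z+\lambda+\partial)\{\dots\}$ produces exactly the summands for $\ell = 0, \dots, n-2$ after reindexing, since $(A^{n-2-\ell})$ becomes $(A^{n-1-\ell})$ under the multiplication $A_{ik}(z+\lambda+\partial)(A^{n-2-\ell})_{kh'}(z+\lambda+\partial) = (A^{n-1-\ell})_{ih'}(z+\lambda+\partial)$, which follows from \eqref{eq:prod-symbol}.

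The second identity is handled analogously but using the second formula of Lemma \ref{lem:lambda-product}, which carries the adjoint factors $(A^*)_{ki}(\lambda - z)$. Here the inductive step exploits the factorization $A^n = A^{n-1} \circ A$, so that $(A^n)_{ij} = (A^{n-1} \circ A)_{ij}$ and the second product formula contributes both a term with $\{A_{hk}(z+x)_{\lambda+x}a\}$ and a term involving $\{(A^{n-1})_{kj}(z)_{\lambda+x}a\}$ paired against adjoint coefficients. The appearance of the double shift $\lambda + x + y$ and the product $(A^{*\ell})_{hi}(\lambda - z)$ in the target formula tracks the accumulation of adjoint factors from each application of the product rule, using the identity $(A^* \circ B^*)(\lambda - z) = (B \circ A)^*(\lambda - z)$ to merge $(A^*)$ with $(A^{*(\ell-1)})$ into $(A^{*\ell})$.

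The main obstacle I anticipate is purely bookkeeping: correctly matching the two auxiliary variables $x$ and $y$ and the compositional substitutions $\big|_{x=\partial}$, $\big|_{y=\partial}$ through the reindexing, especially in the second identity where the nested evaluation-at-$\partial$ substitutions interact with the formal adjoint $(\lambda - z)$ arguments. Verifying that the double shift $\lambda + x + y$ emerges with the correct order of the two substitution operators — so that \eqref{eq:notation} is applied consistently and the adjoint powers $(A^{*\ell})_{hi}(\lambda-z)$ assemble correctly — is the delicate point; the algebraic content is entirely contained in Lemma \ref{lem:lambda-product} and the definitions \eqref{eq:prod-symbol}, \eqref{eq:adjoint}, so no genuinely new identity is needed beyond careful tracking of these symbolic substitutions.
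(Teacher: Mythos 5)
Your proposal is correct and is essentially the paper's own argument: the paper simply states that the corollary ``follows immediately from Lemma \ref{lem:lambda-product}'', and the induction on $n$ you describe (splitting $A^n$ as a product, absorbing the prefactor into $(A^{n-1-\ell})$ via \eqref{eq:prod-symbol}, and tracking the adjoint factors and the substitutions $\big|_{x=\partial}$, $\big|_{y=\partial}$) is exactly the intended expansion of that one-line proof.
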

\begin{proof}
It follows immediately from Lemma \ref{lem:lambda-product}.
\end{proof}
\begin{lemma}\label{lem:lambda-inverse}
Let $A(\partial)\in\Mat_{N\times N}\mc V((\partial^{-1}))$
be an invertible matrix pseudodifferential operator,
and let $A^{-1}(\partial)\in\Mat_{N\times N}\mc V((\partial^{-1}))$
be its inverse.
Then, for $a\in\mc V$ and $i,j=1,\dots,N$, the following identities hold in $\mc V[\lambda]((z^{-1}))$
\begin{equation}\label{eq:lambda-inverse}
\begin{split}
& \big\{a_\lambda (A^{-1})_{ij}(z)\big\}
=
-\sum_{s,t=1}^N
(A^{-1})_{is}(z\!+\!\lambda\!+\!\partial)
\{a_\lambda A_{st}(z+x)\}
\Big(\Big|_{x=\partial} (A^{-1})_{tj}(z)\Big)
\,,\\
& \big\{(A^{-1})_{ij}(z)_\lambda a\big\}
\!=
\!-\!\sum_{s,t=1}^N\!
\{A_{st}(z\!+\!x)_{\lambda+x+y}a\}
\Big(\Big|_{x=\partial}\!\! (A^{-1})_{tj}(z)\Big)
\Big(\Big|_{y=\partial}\!\! (A^{*-1})_{si}(\lambda\!-\!z)\Big)
\,.
\end{split}
\end{equation}
(Note that $A^{*-1}=(A^{-1})^*$.)
\end{lemma}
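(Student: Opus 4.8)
The plan is to derive both identities by differentiating the defining relation $A\circ A^{-1}=\id_N$, i.e.\ $(A\circ A^{-1})_{ij}(z)=\delta_{ij}$, which is a constant and hence has zero $\lambda$-bracket with any $a\in\mc V$. I would apply the two formulas of Lemma \ref{lem:lambda-product} to the product $A\circ A^{-1}$ and solve for the bracket involving the entries of $A^{-1}$.

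First I would prove the left identity (the bracket $\{a_\lambda(A^{-1})_{ij}(z)\}$). Setting $B=A^{-1}$ in the first formula of \eqref{eq:lambda-product} and using that $(A\circ A^{-1})_{ij}(z)=\delta_{ij}$, the left-hand side vanishes, giving
\begin{equation*}
0=
\sum_{k=1}^N \{a_\lambda A_{ik}(z+x)\}\big(\big|_{x=\partial}(A^{-1})_{kj}(z)\big)
+\sum_{k=1}^N A_{ik}(z+\lambda+\partial)\{a_\lambda (A^{-1})_{kj}(z)\}
\,.
\end{equation*}
This is a linear system for the unknowns $\{a_\lambda(A^{-1})_{kj}(z)\}$ with coefficient ``matrix'' $A_{ik}(z+\lambda+\partial)$. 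To solve it I would apply $(A^{-1})_{si}(z+\lambda+\partial)$ on the left and sum over $i$, using that $\sum_i (A^{-1})_{si}(z+\lambda+\partial)A_{ik}(z+\lambda+\partial)=\delta_{sk}$ (this is just the symbol of $A^{-1}\circ A=\id$ evaluated at the shifted argument $z+\lambda$, which is legitimate since the product of symbols in \eqref{eq:prod-symbol} commutes with such a uniform shift). Renaming indices then yields exactly the first line of \eqref{eq:lambda-inverse}.

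The right identity is handled analogously, using the second formula of \eqref{eq:lambda-product} with $B=A^{-1}$ and the fact that $\{(A\circ A^{-1})_{ij}(z)_\lambda a\}=0$. This gives
\begin{equation*}
0=
\sum_{k=1}^N \{(A^{-1})_{kj}(z)_{\lambda+x}a\}\big(\big|_{x=\partial}(A^*)_{ki}(\lambda-z)\big)
+\sum_{k=1}^N \{A_{ik}(z+x)_{\lambda+x}a\}\big(\big|_{x=\partial}(A^{-1})_{kj}(z)\big)
\,.
\end{equation*}
Here one isolates $\{(A^{-1})_{kj}(z)_{\lambda+x}a\}$ by composing with $(A^{*-1})_{si}(\lambda-z)$ and using $(A^*)^{-1}=(A^{-1})^*$, which again follows by adjoining $A^{-1}\circ A=\id$; the shifts by $x$ and $y$ are bookkept by the substitution notation \eqref{eq:notation}.

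The main obstacle is purely notational rather than conceptual: the ``solving the linear system'' step requires composing the vanishing relation with $A^{-1}$ in the correct variable, and one must be careful that the shifted-argument composition $\sum_i (A^{-1})_{si}(z+\lambda+\partial)A_{ik}(z+\lambda+\partial)=\delta_{sk}$ really does follow from $A^{-1}\circ A=\id$ together with the symbol-product rule \eqref{eq:prod-symbol}. In the second identity the presence of two auxiliary variables $x$ and $y$, and the need to track the formal adjoint through the inversion identity $(A^*)^{-1}=(A^{-1})^*$, make the index and substitution bookkeeping the only genuinely delicate part; once this is set up correctly the result is immediate.
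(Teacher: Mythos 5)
Your proposal is correct and is essentially the argument the paper relies on: the paper's ``proof'' is just a citation of \cite[Lem.3.9]{DSK13}, and that lemma is obtained exactly as you propose, by applying the two Leibniz-type formulas of Lemma \ref{lem:lambda-product} to $A\circ A^{-1}=\id_N$ and solving the resulting linear relations by composing with $A^{-1}$ (resp.\ $(A^*)^{-1}=(A^{-1})^*$) at the shifted argument. The composition identity $\sum_i(A^{-1})_{si}(z+\lambda+\partial)\circ A_{ik}(z+\lambda+\partial)=\delta_{sk}$ that you flag as the delicate point is legitimate and is used verbatim by the authors themselves in the proof of Proposition \ref{prop:properties-adler}(c).
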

\begin{proof}
This is the same as \cite[Lem.3.9]{DSK13} written in notation \eqref{eq:notation}.
\end{proof}
\begin{corollary}\label{cor:lambda-neg-power}
Let $A(\partial)\in\Mat_{N\times N}\mc V((\partial^{-1}))$
be an invertible matrix pseudodifferential operator.
Then, for $a\in\mc V$ and $i,j=1,\dots,N$, and $n\leq-1$,
the following identities hold in $\mc V[\lambda]((z^{-1}))$
\begin{equation}\label{eq:lambda-neg-power}
\begin{split}
& \{a_\lambda (A^n)_{ij}(z)\}
=
-\sum_{\ell=n}^{-1}\sum_{h,k=1}^N
(A^{n-1-\ell})_{ih}(z+\lambda+\partial)
\{a_\lambda A_{hk}(z+x)\}
\big(\big|_{x=\partial}(A^\ell)_{kj}(z)\big)
\,,\\
& \{(A^n)_{ij}\!(z)_\lambda a\}
\\
& =
-\sum_{\ell=n}^{-1}\sum_{h,k=1}^N
\{A_{hk}(z+x)_{\lambda+x+y}a\}
\big(\big|_{x=\partial} (A^{n-1-\ell})_{kj}(z)\big)
\big(\big|_{y=\partial} (A^{*\ell})_{hi}(\lambda-z)\big)
\,.
\end{split}
\end{equation}
\end{corollary}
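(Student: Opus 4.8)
The plan is to reduce the case of negative powers to the already established case of positive powers, applied to the inverse operator. Set $B(\partial)=A^{-1}(\partial)$, which is again an invertible matrix pseudodifferential operator, and which satisfies $B^p=A^{-p}$ for all $p\in\mb Z$. Given $n\leq-1$, put $m=-n\geq1$, so that $A^n=B^m$ with $m$ a positive integer. First I would apply Corollary \ref{cor:lambda-power} to the operator $B$ and the positive power $m$, obtaining
\[
\{a_\lambda (B^m)_{ij}(z)\}=\sum_{\ell'=0}^{m-1}\sum_{h,k=1}^N (B^{m-1-\ell'})_{ih}(z+\lambda+\partial)\{a_\lambda B_{hk}(z+x)\}\big(\big|_{x=\partial}(B^{\ell'})_{kj}(z)\big)\,.
\]
Since $B_{hk}=(A^{-1})_{hk}$, the inner bracket $\{a_\lambda B_{hk}(z+x)\}$ is exactly the quantity computed by the first identity of Lemma \ref{lem:lambda-inverse}, which expresses it as $\{a_\lambda A_{st}\}$ sandwiched between two entries of $A^{-1}$, with an overall minus sign.

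The second step is to substitute that expression and to collapse the resulting multiple sum. After substitution, each summand carries a factor $(B^{m-1-\ell'})_{ih}=(A^{n+1+\ell'})_{ih}$ to the left of an entry $(A^{-1})_{hs}$ coming from Lemma \ref{lem:lambda-inverse}, and, symmetrically, a factor $(A^{-1})_{tk}$ to the left of $(B^{\ell'})_{kj}=(A^{-\ell'})_{kj}$ on the right. Summing over the contracted indices $h$ and $k$ and using that composition of symbols is governed by \eqref{eq:prod-symbol}, these products telescope via $A^p\circ A^q=A^{p+q}$: one finds $\sum_h (A^{n+1+\ell'})_{ih}(z+\lambda+\partial)(A^{-1})_{hs}(\cdots)=(A^{n+\ell'})_{is}(\cdots)$ and $\sum_k (A^{-1})_{tk}(\cdots)(A^{-\ell'})_{kj}(z)=(A^{-\ell'-1})_{tj}(\cdots)$. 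Finally, reindexing by $\ell=-\ell'-1$, so that $\ell$ runs from $-1$ down to $n$ as $\ell'$ runs from $0$ to $m-1$, and noting that $n+\ell'=n-1-\ell$ and $-\ell'-1=\ell$, turns the expression into precisely the right-hand side of the first identity in \eqref{eq:lambda-neg-power}, including the overall minus sign inherited from Lemma \ref{lem:lambda-inverse}. The second identity in \eqref{eq:lambda-neg-power} is obtained in the same way, starting from the second identity of Corollary \ref{cor:lambda-power} for $B^m$ and substituting the second identity of Lemma \ref{lem:lambda-inverse}.

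The main obstacle, and the only genuinely delicate point, is the bookkeeping of the formal variables in the notation \eqref{eq:notation}: the arguments $z+x$, $z+\lambda+\partial$ and $\lambda-z$ must be tracked through the substitution so that the two contractions above are honest compositions of symbols in the sense of \eqref{eq:prod-symbol}, and hence may be resummed as powers of $A$. Once one checks that the placeholders line up — that the $(z+\lambda+\partial)$ acting from the left and the shifted arguments of the $A^{-1}$ factors combine correctly — the telescoping $A^p\circ A^q=A^{p+q}$ and the reindexing are purely formal. As a sanity check, the case $n=-1$ (i.e. $m=1$) consists of the single term $\ell'=0$ and reproduces Lemma \ref{lem:lambda-inverse} verbatim.
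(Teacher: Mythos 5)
Your proposal is correct and follows exactly the paper's route: the paper's proof of this corollary is the single sentence ``It follows from Corollary \ref{cor:lambda-power} and Lemma \ref{lem:lambda-inverse}'', i.e.\ apply the positive-power formula to $B=A^{-1}$ and substitute the inverse-matrix formula, which is precisely what you carry out. Your index bookkeeping (the telescoping $A^p\circ A^q=A^{p+q}$, the reindexing $\ell=-\ell'-1$, and the $n=-1$ sanity check) is a faithful expansion of the details the paper leaves implicit.
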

\begin{proof}
It follows from Corollary \ref{cor:lambda-power} and Lemma \ref{lem:lambda-inverse}.
\end{proof}

\section{Operators of Adler type and their properties}\label{sec:2b}

\subsection{Adler type matrix pseudodifferential operators}\label{sec:2.1}

\begin{definition}[see {\cite{DSKV15a}}]\label{def:adler}
An $M\times N$ matrix pseudodifferential operator $A(\partial)$
over a differential algebra $\mc V$
is of \emph{Adler type} with respect to a $\lambda$-bracket $\{\cdot\,_\lambda\,\cdot\}$ on $\mc V$,
if, for every $(i,j),(h,k)\in\{1,\dots,M\}\times\{1,\dots,N\}$, we have
\begin{equation}\label{eq:adler}
\begin{split}
\{A_{ij}(z)_\lambda A_{hk}(w)\}
& = A_{hj}(w+\lambda+\partial)\iota_z(z\!-\!w\!-\!\lambda\!-\!\partial)^{-1}(A_{ik})^*(\lambda-z)
\\
& - A_{hj}(z)\iota_z(z\!-\!w\!-\!\lambda\!-\!\partial)^{-1}A_{ik}(w)
\,.
\end{split}
\end{equation}
(In \eqref{eq:adler} $(A_{ik})^*(\partial)$ denotes the formal adjoint of the scalar 
pseudodifferential operator $A_{ik}(\partial)$,
and $(A_{ik})^*(z)$ is its symbol.)
\end{definition}

Note that in \cite{DSKV15a} the Adler type operators are defined with the opposite sign 
in the RHS of \eqref{eq:adler}.
Obviously, this difference in sign does not affect any of the results in that paper,
but the present choice of sign seems to be more convenient.
\begin{remark}\label{rem:delta}
Note that the RHS of \eqref{eq:adler} is in fact regular at $z=w+\lambda+\partial$,
namely we can replace the expansion $\iota_z$ by the expansion $\iota_w$.
To see this, just observe that
$$
\iota_z(z-w-\lambda-\partial)^{-1}-\iota_w(z-w-\lambda-\partial)^{-1}=\delta(z-w-\lambda-\partial)
\,,
$$
the formal $\delta$-function, and use the following properties of the formal delta function:
$$
\begin{array}{l}
\displaystyle{
\vphantom{\Big(}
A(z)\delta(z-w-\lambda-\partial)
=
A(w+\lambda+\partial)\delta(z-w-\lambda-\partial)
\,,} \\
\displaystyle{
\vphantom{\Big(}
\delta(z-w-\lambda-\partial)A^*(\lambda-z)
=
\delta(z-w-\lambda-\partial)A(w)
\,,}
\end{array}
$$
for any scalar pseudodifferential operator $A(\partial)$.
\end{remark}
\begin{example}\label{ex:constant}
Any $M\times N$ matrix with entries in $\mb F$
is a pseudodifferential operator of Adler type
for any $\lambda$-bracket on $\mc V$.
\end{example}
\begin{example}\label{ex:affine-adler}
Consider the affine PVA $\mc V_S(\mf{gl}_N)$ from Example \ref{ex:affine-pva}.
Consider the following $N\times N$ matrix differential operator over $\mc V_S(\mf{gl}_N)$:
\begin{equation}\label{eq:mcA}
A_S(\partial)=\id_N\partial+\sum_{i,j=1}^N q_{ji}E_{ij}+S^t
=
\left(\begin{array}{llll}
\partial+q_{11}&q_{21}&\dots&q_{N1} \\
q_{12}&\partial+q_{22}&\dots&q_{N2} \\
\vdots&\vdots&\ddots&\vdots \\
q_{1N}&q_{2N}&\dots&\partial+q_{NN}
\end{array}\right)
+S^t
\,.
\end{equation}
We claim that the matrix differential operator $\mc A_S(\partial)$ is of Adler type
with respect to the affine $\lambda$-bracket \eqref{eq:affine-lb}.
Indeed, the RHS of \eqref{eq:adler} is, for the matrix \eqref{eq:mcA},
\begin{equation*}
\begin{split}
& A_{hj}(w+\lambda+\partial)\iota_z(z\!-\!w\!-\!\lambda\!-\!\partial)^{-1}(A_{ik})^*(\lambda-z)
- A_{hj}(z)\iota_z(z\!-\!w\!-\!\lambda\!-\!\partial)^{-1}A_{ik}(w)
\\
& =
(\delta_{hj}(w+\lambda+\partial)+q_{jh}+s_{jh})\iota_z(z-w-\lambda-\partial)^{-1}(\delta_{ik}(z-\lambda)+q_{ki}+s_{ki})
\\
& 
\,\,\,\,\,\,\,
- (\delta_{hj}z+q_{jh}+s_{jh})\iota_z(z-w-\lambda-\partial)^{-1}(\delta_{ik}w+q_{ki}+s_{ki})
\\
& =
\delta_{ik}(q_{jh}+s_{jh})-\delta_{hj}(q_{ki}+s_{ki})+\lambda\delta_{hj}\delta_{ik}
\\
& =
\{{q_{ji}}_\lambda{q_{kh}}\}_S
=\{A_{ij}(z)_\lambda A_{hk}(w)\}_S
\,.
\end{split}
\end{equation*}
\end{example}
\begin{example}\label{ex:affine-generic}
Let $M$ be a positive integer
and let $\mc V$ be the algebra of differential polynomials
in the infinitely many variables $u_{j;ab}$, 
where $1\leq a,b\leq N$ and $-\infty<j< M$
(resp. in the finitely many variables
$u_{j;ab}$, where $1\leq a,b\leq N$ and $0\leq j< M$).
Consider the following ``generic'' $N\times N$ matrix pseudodifferential 
(resp. differential) operators:
\begin{equation}\label{eq:gener1}
L_{MN}(\partial)
=
\sum_{j=-\infty}^M U_j\partial^j
\qquad\Big(\text{resp. }\,\,
L_{(MN)}(\partial)
=
\sum_{j=0}^M U_j\partial^j
\Big)
\,\,\text{ with }\,\,
U_M=\id_N
\,,
\end{equation}
where $U_j=\big(u_{j;ab}\big)_{a,b=1}^N\in\Mat_{N\times N}\mc V$ for every $j<M$.
We claim that, for every $S\in\Mat_{N\times N}\mb F$,
the one parameter family of operators $A(\partial)=L_{MN}(\partial)+\epsilon S$ 
(resp. $A(\partial)=L_{(MN)}+\epsilon S$), $\epsilon\in\mb F$, is of Adler type
with respect to a pencil of $\lambda$-brackets
$\{\cdot\,_\lambda\,\cdot\}_\epsilon=\{\cdot\,_\lambda\,\cdot\}_0+\epsilon\{\cdot\,_\lambda\,\cdot\}_1$
on $\mc V$.
Indeed, in this case the LHS of \eqref{eq:adler}
becomes a Laurent series (resp. polynomial) in $z$ and $w$ of highest degree $N-1$
in $z$ and in $w$,
whose coefficients are the unknown $\lambda$-brackets
$\{{u_{i;ab}}_\lambda{u_{j;cd}}\}_\epsilon$, which are subject only to the
conditions coming from the skewsymmetry axiom of Section \ref{sec:2.0},
and to the assumption of being linear in $\epsilon$.
On the other hand, the RHS of \eqref{eq:adler}
has obviously degree at most $N-1$ in $z$,
and therefore it has also degree at most $N-1$ in $w$ due to 
Remark \ref{rem:delta},
it is a quadratic expression in $\epsilon$,
and the coefficient of $\epsilon^2$ is immediately seen to be zero.
In conclusion, the pencil of $\lambda$-brackets 
$\{\cdot\,_\lambda\,\cdot\}_\epsilon$ is defined on the generators of $\mc V$
by \eqref{eq:adler}, and it extends to the algebra $\mc V$
by the sesquilinearity axioms and the Leibniz rules, \cite{BDSK09}.
We point out that, as we shall show in \cite{DSKVfuture},
$L_{(MN)}(\partial)$ is of Adler type with respect to the PVA structure
of the $\mc W$-algebra associated to the Lie algebra $\mf{gl}_{NM}$
and rectangular nilpotent $f\in\mf{gl}_{NM}$ with $N$ Jordan blocks of size $M$.
\end{example}

The relation between Adler type operators and Poisson vertex algebras
is given by the following result, 
which is an immediate consequence of \cite[Lem.4.1]{DSKV15a} (see also \cite[Rem.2.6]{DSKV15a}):
\begin{theorem}\label{thm:main-adler}
Let $A(\partial)\in\Mat_{M\times N}\mc V((\partial^{-1}))$ be an $M\times N$-matrix pseudodifferential operator
of Adler type with respect to a $\lambda$-bracket $\{\cdot\,_\lambda\,\cdot\}$ on $\mc V$,
and assume that the coefficients of the entries of the matrix $A(\partial)$
generate $\mc V$ as a differential algebra.
Then $\mc V$ is a Poisson vertex algebra (with $\lambda$-bracket $\{\cdot\,_\lambda\,\cdot\}$).
\end{theorem}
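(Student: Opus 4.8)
The plan is to apply the criterion recalled in Section~\ref{sec:2.0}: a $\lambda$-bracket satisfying skewsymmetry and the Jacobi identity on a set of generators of $\mc V$ automatically makes $\mc V$ a Poisson vertex algebra. By hypothesis the coefficients of the entries $A_{ij}(z)$ generate $\mc V$, so it suffices to verify the two axioms on these coefficients. Since a $\lambda$-bracket among coefficients of the $A_{ij}(z)$ is completely encoded, as a generating series in the spectral variables, by the Adler identity \eqref{eq:adler}, I would reformulate both axioms as identities among the generating series $\{A_{ij}(z)_\lambda A_{hk}(w)\}$ and check them directly from \eqref{eq:adler}.

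First, for skewsymmetry, I would compute $\{A_{hk}(w)_\lambda A_{ij}(z)\}$ by applying \eqref{eq:adler} with the two index pairs and the two spectral variables interchanged, and compare it with $-\big(\big|_{x=\partial}\{A_{ij}(z)_{-\lambda-x}A_{hk}(w)\}\big)$, obtained by applying the substitution $\lambda\mapsto-\lambda-\partial$ to the right-hand side of \eqref{eq:adler}. The comparison is carried out by rewriting the kernels $\iota_z(z-w-\lambda-\partial)^{-1}$ via the delta-function reformulation of Remark~\ref{rem:delta} (which lets one trade $\iota_z$ for $\iota_w$) together with the definition \eqref{eq:adjoint} of the adjoint symbol. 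The two expressions then match term by term, so skewsymmetry holds; this step is short.

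The main work is the Jacobi identity, which I would verify as an identity among generating series in three spectral variables $z,w,v$ and two formal parameters $\lambda,\mu$, namely
\begin{equation*}
\{A_{ij}(z)_\lambda \{A_{hk}(w)_\mu A_{pq}(v)\}\}
-\{A_{hk}(w)_\mu \{A_{ij}(z)_\lambda A_{pq}(v)\}\}
=\{\{A_{ij}(z)_\lambda A_{hk}(w)\}_{\lambda+\mu}A_{pq}(v)\}
\,.
\end{equation*}
To expand the first term I substitute \eqref{eq:adler} for the inner bracket $\{A_{hk}(w)_\mu A_{pq}(v)\}$, obtaining a sum of two products of symbols joined by an $\iota_w$-expanded kernel, and then push $\{A_{ij}(z)_\lambda\,\cdot\,\}$ through these products by the first Leibniz-type identity of Lemma~\ref{lem:lambda-product}, substituting \eqref{eq:adler} once more for each elementary bracket $\{A_{ij}(z)_\lambda A_{\bullet\bullet}(\cdot)\}$ that appears. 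The second term is treated symmetrically. For the right-hand side I first expand the inner bracket $\{A_{ij}(z)_\lambda A_{hk}(w)\}$ by \eqref{eq:adler}, and then apply $\{\,\cdot\,_{\lambda+\mu}A_{pq}(v)\}$ using the second identity of Lemma~\ref{lem:lambda-product}, which moves the bracket onto the spectral variable of the product. After all substitutions, each of the three terms becomes a sum of products of four symbols $A_{\alpha\beta}$ connected by two kernel factors, and the Jacobi identity reduces to a purely combinatorial cancellation among these terms.

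I expect this final cancellation to be the main obstacle: it requires careful bookkeeping of the three geometric expansions associated to the pairs $(z,w)$, $(w,v)$, $(z,v)$, of the formal adjoints via \eqref{eq:adjoint}, and of the placeholder substitutions $\big(\big|_{x=\partial}\big)$ of notation \eqref{eq:notation}. The most efficient organizing device is again Remark~\ref{rem:delta}, used to collapse regular parts and to identify terms that differ only by the variable in which a kernel is expanded. Finally, since the defining identity \eqref{eq:adler} differs from the one in \cite{DSKV15a} only by an overall sign, and both the skewsymmetry and the Jacobi axioms are preserved under rescaling the $\lambda$-bracket by a nonzero scalar (each term being homogeneous of the same degree in the bracket), the conclusion follows at once from \cite[Lem.4.1]{DSKV15a}, which is exactly this computation.
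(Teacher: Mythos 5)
Your proposal is correct and follows essentially the same route as the paper, which gives no independent proof but declares the theorem an immediate consequence of \cite[Lem.4.1]{DSKV15a} (the sign discrepancy being harmless, exactly as you argue, since skewsymmetry is linear and the Jacobi identity quadratic in the bracket). Your sketch of the direct verification of skewsymmetry and Jacobi on the generating series matches the computation carried out in \cite[Lem.2.2 \& Lem.2.5]{DSKV15a}, to which the paper itself points when adapting this argument in the dispersionless setting.
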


\subsection{Properties of Adler type pseudodifferential operators}\label{sec:2.4}

\begin{proposition}\label{prop:properties-adler}
Let $\mc V$ be a differential algebra with a $\lambda$-bracket $\{\cdot\,_\lambda\,\cdot\}$.
Let $A(\partial)$ be an $M\times N$ matrix pseudodifferential operator of Adler type
with respect to the $\lambda$-bracket of $\mc V$.
Then the following properties hold:
\begin{enumerate}[(a)]
\item 
For every subsets $I\subset\{1,\dots,M\}$, $J\subset\{1,\dots,N\}$,
the corresponding $|I|\times|J|$ submatrix $A_{IJ}$ is of Adler type 
(with respect to the same $\lambda$-bracket $\{\cdot\,_\lambda\,\cdot\}$).
In particular, every entry $A_{ij}(\partial)$ of the matrix $A(\partial)$
is a scalar pseudodifferential operator of Adler type.
\item
The formal adjoint $N\times M$ matrix pseudodifferential operator $A^*(\partial)$
is of Adler type 
(with respect to the same $\lambda$-bracket $\{\cdot\,_\lambda\,\cdot\}$).
\item
Assuming that $M=N$ and that the square matrix pseudodifferential operator $A(\partial)$ is invertible
in the algebra $\Mat_{N\times N}\mc V((\partial^{-1}))$,
the inverse matrix $A(\partial)^{-1}$ is of Adler type
with respect to the opposite $\lambda$-bracket \, $-\{\cdot\,_\lambda\,\cdot\}$.
\item
If $A(\partial)\in\Mat_{M\times N}\mc V((\partial^{-1}))$ 
and $B(\partial)\in\Mat_{N\times P}\mc V((\partial^{-1}))$
are both matrices of Adler type with respect to the $\lambda$-bracket $\{\cdot\,_\lambda\,\cdot\}$,
and if $\{A_{ij}(z)_\lambda B_{hk}(w)\}=0$ for all $i,j,h,k$,
then the product $A(\partial)\circ B(\partial)\in\Mat_{M\times P}\mc V((\partial^{-1}))$
is a matrix of Adler type.
In particular, if $S\in\Mat_{P\times M}\mb F$ and $T\in\Mat_{N\times Q}\mb F$
are constant matrices,
then the matrix $SA(\partial)T\in\Mat_{P\times Q}\mc V((\partial^{-1}))$
is of Adler type.
\end{enumerate}
\end{proposition}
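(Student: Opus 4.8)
The plan is to verify each of the four parts by direct computation, checking that the defining Adler identity \eqref{eq:adler} holds for the relevant matrix, using the Adler property of $A(\partial)$ (and of $B(\partial)$ where needed) together with the $\lambda$-bracket identities established in Section \ref{sec:2.0b}. Parts (a) and (b) should be essentially immediate, so I would dispose of them first.

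For part (a), I observe that the Adler identity \eqref{eq:adler} is stated entry-by-entry: it prescribes $\{A_{ij}(z)_\lambda A_{hk}(w)\}$ in terms of the entries $A_{hj}$, $A_{ik}$ of the same matrix. Restricting all four indices to lie in $I$ and $J$ simply reproduces exactly the same identity for the submatrix $A_{IJ}$, whose entries $(A_{IJ})_{ij}=A_{ij}$ for $i\in I$, $j\in J$ are a subset of those of $A$. The only point to check is that the entries $A_{hj}$ and $A_{ik}$ appearing on the right-hand side again have their indices in $I\times J$ when $(i,j),(h,k)\in I\times J$, which is immediate since $h\in I$ and $k\in J$. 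Thus no new computation is needed. For part (b), I recall from Section \ref{sec:2.-1} that $(A^*)_{ij}(\partial)=(A_{ji})^*(\partial)$, and I would substitute this into \eqref{eq:adler} written for $A^*$. The strategy is to rewrite $\{(A^*)_{ij}(z)_\lambda (A^*)_{hk}(w)\}=\{(A_{ji})^*(z)_\lambda (A_{kh})^*(w)\}$, expand the adjoints using \eqref{eq:adjoint} and the notation \eqref{eq:notation}, apply the known Adler identity for $A$ to the inner bracket $\{A_{ji}(\cdot)_\mu A_{kh}(\cdot)\}$, and check that the result matches the Adler identity \eqref{eq:adler} for $A^*$. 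This is a bookkeeping computation with the variable substitutions $z\mapsto-z$, $w\mapsto-w$ induced by the adjoint; I expect it to go through cleanly, possibly invoking Lemma \ref{lem:hn1}(a).

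Part (c) is the heart of the proposition and where I expect the main obstacle. The strategy is to compute $\{(A^{-1})_{ij}(z)_\lambda (A^{-1})_{hk}(w)\}$ using Lemma \ref{lem:lambda-inverse}, which expresses brackets involving entries of $A^{-1}$ in terms of brackets involving entries of $A$. Concretely, I would apply the second formula of \eqref{eq:lambda-inverse} (for $\{(A^{-1})_{ij}(z)_\lambda a\}$) with $a=(A^{-1})_{hk}(w)$, then apply the first formula of \eqref{eq:lambda-inverse} to expand the remaining bracket $\{A_{st}(\cdot)_\mu (A^{-1})_{hk}(w)\}$; this reduces everything to the known quantity $\{A_{st}(z+x)_\mu A_{pq}(w)\}$, which is given by the Adler identity \eqref{eq:adler} for $A$. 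The main difficulty will be the bookkeeping of the multiple $\iota_z$-expansions and the telescoping sums over the internal indices: after substituting the Adler formula one gets a sum of terms each containing a factor $\iota_z(z-w-\lambda-\partial)^{-1}$ sandwiched between products of $A$ and $A^{-1}$ entries, and these products must be collapsed using the defining relations $\sum_s (A^{-1})_{is}(z+\lambda+\partial)A_{sp}(\cdots)=\delta_{ip}$ and its variants. I anticipate that the two terms on the right-hand side of \eqref{eq:adler} recombine, after these contractions, into the Adler identity for $A^{-1}$ with an overall sign flip, accounting for the \emph{negative} of the $\lambda$-bracket; keeping track of that sign, and of the correct placement of the arguments $w+\lambda+\partial$ versus $z$ in the surviving factors, is the delicate part.

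Finally, part (d) concerns products. The plan is to compute $\{(A\circ B)_{ij}(z)_\lambda (A\circ B)_{hk}(w)\}$ by applying Lemma \ref{lem:lambda-product} twice — once to expand with respect to the first factor $A\circ B$ evaluated at $z$, and once with respect to the factor at $w$ — and then to use the hypothesis $\{A_{ij}(z)_\lambda B_{hk}(w)\}=0$ to kill all cross terms, so that only the brackets $\{A_\lambda A\}$ and $\{B_\lambda B\}$ survive. Each surviving bracket is replaced by the Adler identity \eqref{eq:adler} for $A$ and for $B$ respectively. I would then verify that, after inserting a resolution of the identity $\sum_k B_{kj}(\cdots)$-type contraction produced by the product symbol $\circ$ (cf. \eqref{eq:prod-symbol}), the two contributions assemble into the single Adler identity for $A\circ B$, using that the product of symbols is governed by $(A\circ B)(z)=A(z+\partial)B(z)$. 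The specialization to $SA(\partial)T$ then follows by combining this product rule with Example \ref{ex:constant}, since constant matrices are of Adler type and have vanishing $\lambda$-bracket with everything; here one must also note that the hypothesis $\{A_{ij}(z)_\lambda (\text{const})\}=0$ holds automatically, so the cross-term cancellation is trivially satisfied.
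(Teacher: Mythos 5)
Your proposal follows essentially the same route as the paper: (a) by inspection of the entrywise identity \eqref{eq:adler}, (b) via the adjoint formula \eqref{eq:adjoint} and sesquilinearity, (c) via Lemma \ref{lem:lambda-inverse} followed by the Adler identity for $A$ and contraction with the relations $A^{-1}\circ A=\id$ (which is exactly where the sign flip for the opposite bracket emerges), and (d) via Lemma \ref{lem:lambda-product} with the cross-bracket hypothesis killing the mixed terms. The plan is correct and the steps you flag as delicate are precisely the ones the paper carries out explicitly.
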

\begin{proof}
Part (a) is an obvious consequence of the definition \eqref{eq:adler} of Adler type matrix
pseudodifferential operators.
Next, let us prove part (b). 
By the sesquilinearity assumptions, we have, using notation \eqref{eq:notation}
\begin{equation}\label{eq:20150615-1}
\{(A^*)_{ij}(z)_\lambda(A^*)_{hk}(w)\}
=
\Big(\Big|_{x=\partial}\{A_{ji}(\lambda-z)_\lambda A_{kh}(-w-\lambda-x)\}\Big)
\,.
\end{equation}
We can now use condition \eqref{eq:adler} on the matrix $A(\partial)$
to rewrite the RHS of \eqref{eq:20150615-1} as
\begin{equation*}
\begin{split}
& 
\Big(\Big|_{x=\partial}
- A_{ki}(-w-x+\partial)\iota_z(z-w-\lambda-x+\partial)^{-1}(A_{jh})^*(z)
\\
& + A_{ki}(\lambda-z)\iota_z(z-w-\lambda-x+\partial)^{-1}A_{jh}(-w-\lambda-x)
\Big)
\\
& =
- (A^*)_{hj}(z) \iota_z(z-w-\lambda-\partial)^{-1} (A^*)_{ik}(w)
\\
& + (A^*)_{hj}(w+\lambda+\partial) \iota_z(z-w-\lambda-\partial)^{-1} ((A^*)_{ik})^*(\lambda-z)
\,,
\end{split}
\end{equation*}
proving that the matrix pseudodifferential operator $A^*(\partial)$ is of Adler type.

Next, we prove part (c).
By Lemma \ref{lem:lambda-inverse}, we have
\begin{equation}\label{eq:20150615-2}
\begin{split}
& \{(A^{-1})_{ij}(z)_\lambda (A^{-1})_{hk}(w)\}
\\
& =
\sum_{p,q,s,t=1}^N
(A^{-1})_{hs}(w+\lambda+\partial)
\{A_{pq}(z+x_1)_{\lambda+x_1+x_3}A_{st}(w+x_2)\}
\\
& \,\,\,\,\,\,\times
\Big(\Big|_{x_1=\partial} (A^{-1})_{qj}(z)\Big)
\Big(\Big|_{x_2=\partial} (A^{-1})_{tk}(w)\Big)
\Big(\Big|_{x_3=\partial} (A^{*-1})_{pi}(\lambda-z)\Big)
\,.
\end{split}
\end{equation}
We then apply condition \eqref{eq:adler} on the matrix $A(\partial)$
to rewrite the RHS of \eqref{eq:20150615-2} as
\begin{equation}\label{eq:20150615-3}
 \begin{split}
& \sum_{p,q,s,t=1}^N
(A^{-1})_{hs}(w+\lambda+\partial)
\\
& \times
\Big(
A_{sq}(w\!+\!\lambda\!+\!x_1\!+\!x_2\!+\!x_3\!+\!\partial)
\iota_z(z\!-\!w\!-\!\lambda\!-\!x_2\!-\!x_3\!-\!\partial)^{-1}
(A_{pt})^*(\lambda\!+\!x_3\!-\!z)
\\
&\,\,\,\,\,\,\,\,\,\,\,\,\,\,\,\,\,\,\,\,\,\,\,\,\,\,\,\,\,\,\,\,\,\,\,\,\,\,\,\,\,\,\,\,\,
- A_{sq}(z\!+\!x_1)
\iota_z(z\!-\!w\!-\!\lambda\!-\!x_2\!-\!x_3\!-\!\partial)^{-1}
A_{pt}(w+x_2)
\Big)
\\
& \times
\Big(\Big|_{x_1=\partial} (A^{-1})_{qj}(z)\Big)
\Big(\Big|_{x_2=\partial} (A^{-1})_{tk}(w)\Big)
\Big(\Big|_{x_3=\partial} (A^{*-1})_{pi}(\lambda-z)\Big)
\,.
\end{split}
\end{equation}
By definition of inverse matrix we have
\begin{equation*}
\begin{split}
& \sum_{s=1}^N
(A^{-1})_{hs}(w+\lambda+\partial)
\circ A_{sq}(w+\lambda+\partial)
=\delta_{qh}
\,,\\
& \sum_{p=1}^N
(A_{pt})^*(\lambda\!+\!\partial\!-\!z)
(A^{*-1})_{pi}(\lambda-z)
=\delta_{ti}
\,,\\
& \sum_{q=1}^N
A_{sq}(z\!+\!\partial)
(A^{-1})_{qj}(z)
=\delta_{sj}
\,,\,\,
\sum_{t=1}^N
A_{pt}(w+\partial)(A^{-1})_{tk}(w)
=\delta_{pk}
\,.
\end{split}
\end{equation*}
Hence, the RHS of \eqref{eq:20150615-3} can be rewritten as
\begin{equation*}
\begin{split}
& (A^{-1})_{hj}(z)
\iota_z(z-w-\lambda-\partial)^{-1}
(A^{-1})_{ik}(w)
\\
& -
(A^{-1})_{hj}(w+\lambda+\partial)
\iota_z(z-w-\lambda-\partial)^{-1}
(A^{*-1})_{ki}(\lambda-z)\,,
\end{split}
\end{equation*}
proving that $A^{-1}(\partial)$ is of Adler type with respect to the opposite $\lambda$-bracket
\,$-\{\cdot\,_\lambda\,\cdot\}$.

Finally, the proof of (d) is analogous to the proof of \cite[Prop.2.14]{DSKV15a}.
We reproduce here the argument for completeness.
By Lemma \ref{lem:lambda-product} and the assumption that the $\lambda$-bracket
between entries of $A$ and $B$ is zero, we have
\begin{equation}\label{eq:proofd-1}
\begin{split}
& \{(A\circ B)_{ij}(z)_\lambda(A\circ B)_{hk}(w)\}
\\
& =
\sum_{s,t=1}^N
\{A_{is}(z+x)_{\lambda+x} A_{ht}(w+y)\}
\big(\big|_{x=\partial}B_{sj}(z)\big)
\big(\big|_{y=\partial}B_{tk}(w)\big)
\\
& +
\sum_{s,t=1}^N
A_{ht}(w+\lambda+\partial)
\{B_{sj}(z)_{\lambda+x} B_{tk}(w)\}
\big(\big|_{x=\partial}(A_{is})^*(\lambda-z)\big)
\,.
\end{split}
\end{equation}
By the Adler condition \eqref{eq:adler},
the first term in the RHS of \eqref{eq:proofd-1} is equal to
\begin{equation}\label{eq:proofd-2}
\begin{split}
& \sum_{s,t=1}^N
A_{hs}(w+\lambda+\partial)B_{sj}(z)\iota_z(z-w-\lambda-\partial)^{-1}(A_{it})^*(\lambda-z)B_{tk}(w)
\\
& -
(A\circ B)_{hj}(z)\iota_z(z-w-\lambda-\partial)^{-1}(A\circ B)_{ik}(w)
\,,
\end{split}
\end{equation}
while the second term in the RHS of \eqref{eq:proofd-1} is
\begin{equation}\label{eq:proofd-3}
\begin{split}
& (A\circ B)_{hj}(w+\lambda+\partial)\iota_z(z-w-\lambda-\partial)^{-1}((A\circ B)_{ik})^*(\lambda-z)
\\
& - \sum_{s,t=1}^N
A_{ht}(w+\lambda+\partial)B_{tj}(z)\iota_z(z-w-\lambda-\partial)^{-1}(A_{is})^*(\lambda-z)B_{sk}(w)
\,.
\end{split}
\end{equation}
Combining \eqref{eq:proofd-2} and \eqref{eq:proofd-3},
we conclude that $A\circ B$ satisfies the Adler condition \eqref{eq:adler},
as claimed.
\end{proof}

\section{Quasideterminants of operators of Adler type}\label{sec:3}

\subsection{A brief theory of quasideterminants}\label{sec:3.1}

In this section we review the notion of quasideterminants of a matrix over an arbitrary 
unital associative ring, and we prove some of their properties.
For an extended review on the subject, see \cite{GGRW05}.

Let $R$ be a unital associative ring.
Let $A\in\Mat_{N\times N}R$ be an $N\times N$ matrix with entries in $R$.
Let $I,J\subset\{1,\dots,N\}$ be subsets of the same cardinality $|I|=|J|=M$.
\begin{definition}\label{def:quasidet}
The $(I,J)$-\emph{quasideterminant} of $A$ is
defined as (recall the notation \eqref{eq:submatrix})
\begin{equation}\label{eq:def-quasidet}
|A|_{IJ}:=((A^{-1})_{JI})^{-1}
\,\,\in\Mat_{M\times M}R
\,,
\end{equation}
assuming that the RHS makes sense, i.e. that the $N\times N$ matrix $A$ is invertible,
and that the $M\times M$ matrix $(A^{-1})_{JI}$ is also invertible.
(The Gelfand-Retakh quasideterminant was originally defined in the special case $M=1$.)

\end{definition}
When the quasideterminant is defined, it is possible to write a simple formula for it.
This is given by the following
\begin{proposition}\label{prop:quasidet}
Assume that $A\in\Mat_{N\times N}R$ is invertible and that 
the submatrix $A_{I^cJ^c}\in\Mat_{(N-M)\times(N-M)}$ is invertible,
where $I^c$ and $J^c$ denote the complement of the sets $I$ and $J$ in $\{1,\dots,N\}$.
Then the $(I,J)$-quasideterminant of $A$ exists, and it is given by the following formula:
\begin{equation}\label{eq:form-quasidet}
|A|_{IJ}=A_{IJ}-A_{IJ^c}(A_{I^cJ^c})^{-1}A_{I^cJ}
\,.
\end{equation}
\end{proposition}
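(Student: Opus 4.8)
The plan is to verify the formula \eqref{eq:form-quasidet} by direct computation, reducing it to the block structure of $A$ and $A^{-1}$ relative to the partition $\{1,\dots,N\}=I\sqcup I^c=J\sqcup J^c$. First I would set up a convenient change of row and column indices so that $A$ is written as a $2\times2$ block matrix
\begin{equation*}
A=\begin{pmatrix} A_{IJ} & A_{IJ^c} \\ A_{I^cJ} & A_{I^cJ^c}\end{pmatrix}
\,,
\end{equation*}
with $A_{IJ}\in\Mat_{M\times M}$ and $A_{I^cJ^c}\in\Mat_{(N-M)\times(N-M)}$. By Definition \ref{def:quasidet}, what must be shown is that the $(J,I)$ block of $A^{-1}$, call it $(A^{-1})_{JI}$, is invertible with inverse equal to the claimed Schur complement $A_{IJ}-A_{IJ^c}(A_{I^cJ^c})^{-1}A_{I^cJ}$.

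The key step is the standard block-inversion (Schur complement) identity, which I would establish by writing $A^{-1}$ in blocks as
\begin{equation*}
A^{-1}=\begin{pmatrix} P & Q \\ R & T\end{pmatrix}
\end{equation*}
with $P\in\Mat_{M\times M}$ the block in rows $J$ and columns $I$, so that $P=(A^{-1})_{JI}$. Multiplying out $AA^{-1}=\id_N$ and $A^{-1}A=\id_N$ gives four block equations in each case; I would solve these by eliminating $Q,R,T$ using the invertibility of $A_{I^cJ^c}$. Concretely, from the off-diagonal equations one expresses the other blocks in terms of $P$ and the blocks of $A$, and substituting into a diagonal equation yields
\begin{equation*}
\bigl(A_{IJ}-A_{IJ^c}(A_{I^cJ^c})^{-1}A_{I^cJ}\bigr)\,P=\id_M
\,,
\end{equation*}
together with the symmetric relation $P\bigl(A_{IJ}-A_{IJ^c}(A_{I^cJ^c})^{-1}A_{I^cJ}\bigr)=\id_M$. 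These two identities show simultaneously that the Schur complement is a two-sided inverse of $P=(A^{-1})_{JI}$, hence that $(A^{-1})_{JI}$ is invertible and its inverse is exactly \eqref{eq:form-quasidet}. This proves both the existence claim and the formula at once.

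The main point requiring care, rather than a genuine obstacle, is that $R$ is only assumed to be a unital associative ring, not commutative, so every manipulation must respect the order of factors: one may never commute blocks past one another, and each cancellation must be justified by inserting $A_{I^cJ^c}(A_{I^cJ^c})^{-1}=\id$ or $(A_{I^cJ^c})^{-1}A_{I^cJ^c}=\id$ on the correct side. Because of non-commutativity I expect to need both $AA^{-1}=\id$ and $A^{-1}A=\id$ — one of these yields a left inverse of $P$ and the other a right inverse — and it is their combination that delivers genuine two-sided invertibility of $(A^{-1})_{JI}$. No additional input beyond the block equations and the hypothesis that $A$ and $A_{I^cJ^c}$ are invertible is needed, so once the ordering is tracked correctly the argument is purely formal.
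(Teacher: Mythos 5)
Your proposal is correct and follows essentially the same route as the paper: reduce to the $2\times 2$ block form relative to $I\sqcup I^c$, $J\sqcup J^c$ and identify $(A^{-1})_{JI}$ as the inverse of the Schur complement $A_{IJ}-A_{IJ^c}(A_{I^cJ^c})^{-1}A_{I^cJ}$. If anything, your version is marginally more careful: you extract the two-sided invertibility of $(A^{-1})_{JI}$ directly from the block equations $AA^{-1}=\id_N$ and $A^{-1}A=\id_N$, whereas the paper simply writes down explicit formulas for all four blocks of $A^{-1}$ and leaves their verification implicit.
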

\begin{proof}
Without loss of generality, we can assume that $I=J=\{1,\dots,M\}\subset\{1,\dots,N\}$.
In this case, the matrix $A$ can be written in blocks as follows:
$$
A=\left(\begin{array}{ll}
A_{IJ} & A_{IJ^c} \\
A_{I^cJ} & A_{I^cJ^c}
\end{array}\right)
\,.
$$
Accordingly, the inverse matrix has the block form
$$
A^{-1}=\left(\begin{array}{ll}
(A^{-1})_{JI} & (A^{-1})_{JI^c} \\
(A^{-1})_{J^cI} & (A^{-1})_{J^cI^c}
\end{array}\right)
\,.
$$
Under the assumptions that $A$ and $A_{I^cJ^c}$ are invertible,
it is easy to find a formula for the four blocks of $A^{-1}$:
\begin{equation*}
\begin{split}
& (A^{-1})_{JI}=
(A_{IJ}-A_{IJ^c}(A_{I^cJ^c})^{-1}A_{I^cJ})^{-1}
\,,\\
& (A^{-1})_{JI^c}=
-(A^{-1})_{JI}A_{IJ^c}(A_{I^cJ^c})^{-1}
\,,\\
& (A^{-1})_{J^cI}=
-(A_{I^cJ^c})^{-1}A_{I^cJ}(A^{-1})_{JI}
\,,\\
& (A^{-1})_{J^cI^c}=
(A_{I^cJ^c})^{-1}
+(A_{I^cJ^c})^{-1}A_{I^cJ}(A^{-1})_{JI}A_{IJ^c}(A_{I^cJ^c})^{-1}
\,.
\end{split}
\end{equation*}
The claim follows.
\end{proof}
Quasideterminant satisfy the so-called \emph{hereditary property},
which can be stated as follows:
\begin{proposition}\label{prop:quasidet_hereditary}
Let $I_1\subset I\subset\{1,\dots,N\}$ and $J_1\subset J\subset\{1,\dots,N\}$
be subsets such that $|I_1|=|J_1|$ and $|I|=|J|$.
Assuming that the $(I,J)$-quasideterminant and the $(I_1,J_1)$-quasideterminant of $A$ exist, 
we have
\begin{equation}\label{eq:form-quasidet_hereditary}
|A|_{I_1J_1}=||A|_{IJ}|_{I_1J_1}
\,.
\end{equation}
\end{proposition}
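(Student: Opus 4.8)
The plan is to prove the hereditary property by reducing everything to the explicit definition of the quasideterminant in terms of submatrices of the inverse, given in Definition \ref{def:quasidet}. Recall that $|A|_{IJ}=((A^{-1})_{JI})^{-1}$, so the key observation is that the inner quasideterminant $|A|_{IJ}$ is itself the inverse of the submatrix $B:=(A^{-1})_{JI}\in\Mat_{M\times M}R$. Thus $|A|_{IJ}=B^{-1}$, and the outer quasideterminant $||A|_{IJ}|_{I_1J_1}$ is the $(I_1,J_1)$-quasideterminant of the $M\times M$ matrix $B^{-1}$. Here we must be careful about indexing: the subsets $I_1\subset I$ and $J_1\subset J$ label rows and columns of the original $N\times N$ matrix, but when we pass to the $M\times M$ matrix $|A|_{IJ}$ its rows and columns are naturally indexed by $I$ and $J$ respectively, so $I_1$ and $J_1$ select genuine sub-blocks of $|A|_{IJ}$.

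First I would unwind the right-hand side. By Definition \ref{def:quasidet} applied to the $M\times M$ matrix $|A|_{IJ}=B^{-1}$, we have
\begin{equation*}
||A|_{IJ}|_{I_1J_1}
=
\big(((B^{-1})^{-1})_{J_1I_1}\big)^{-1}
=
\big((B)_{J_1I_1}\big)^{-1}
=
\big((A^{-1})_{J_1I_1}\big)^{-1}
\,,
\end{equation*}
where in the second equality I used $(B^{-1})^{-1}=B$, and in the third I used that taking the $(J_1,I_1)$-submatrix of $B=(A^{-1})_{JI}$ simply gives $(A^{-1})_{J_1I_1}$, since $J_1\subset J$ and $I_1\subset I$ (submatrix extraction is transitive). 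Next I would unwind the left-hand side: again by Definition \ref{def:quasidet}, $|A|_{I_1J_1}=((A^{-1})_{J_1I_1})^{-1}$. Comparing the two computations, the left and right sides coincide, which is exactly \eqref{eq:form-quasidet_hereditary}.

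The one point requiring care, and the main obstacle in making the argument airtight, is the matching of inverses: the identity $(B^{-1})^{-1}=B$ is only valid because the hypothesis guarantees all the relevant matrices are invertible. The assumption that the $(I,J)$-quasideterminant exists means precisely that $A$ is invertible and $B=(A^{-1})_{JI}$ is invertible, so $B^{-1}=|A|_{IJ}$ is a well-defined invertible $M\times M$ matrix and $(B^{-1})^{-1}=B$ holds. The assumption that the $(I_1,J_1)$-quasideterminant $|A|_{I_1J_1}$ exists means $(A^{-1})_{J_1I_1}$ is invertible, which is exactly what is needed for both the left-hand side $((A^{-1})_{J_1I_1})^{-1}$ and the right-hand side $((B)_{J_1I_1})^{-1}=((A^{-1})_{J_1I_1})^{-1}$ to be defined; note that the existence of $||A|_{IJ}|_{I_1J_1}$ is not an independent hypothesis but follows, since the required invertibility of $((B^{-1})^{-1})_{J_1I_1}=(A^{-1})_{J_1I_1}$ is already assumed. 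Hence no extra regularity condition is needed beyond those in the statement, and the proof is purely formal, relying only on the definition of the quasideterminant as an inverse of a submatrix of the inverse matrix, together with the transitivity of submatrix extraction.
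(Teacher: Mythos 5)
Your proof is correct and is essentially identical to the paper's own one-line computation: both unwind the definition $|A|_{IJ}=((A^{-1})_{JI})^{-1}$ twice, cancel the double inverse, and use transitivity of submatrix extraction to land on $((A^{-1})_{J_1I_1})^{-1}=|A|_{I_1J_1}$. Your extra remarks on the indexing of $|A|_{IJ}$ by $I_1,J_1$ and on which invertibility hypotheses are actually used are accurate and only make explicit what the paper leaves implicit.
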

\begin{proof}
By the definition \eqref{eq:def-quasidet} of quasideterminant we have
$$
||A|_{IJ}|_{I_1J_1}=
(((((A^{-1})_{JI})^{-1})^{-1})_{J_1I_1})^{-1}=
((A^{-1})_{J_1I_1})^{-1}=|A|_{I_1J_1}\,.
$$
\end{proof}

\subsection{Generalized quasideterminants}\label{sec:3.1b}

We can generalize the notion of quasideterminant as follows.
Let, as before, $A\in\Mat_{N\times N}R$,
and let $I\in\Mat_{N\times M}R$, $J\in\Mat_{M\times N}R$,
for some $M\leq N$.
\begin{definition}\label{def:gen-quasidet}
The $(I,J)$-\emph{quasideterminant} of $A$ is
\begin{equation}\label{eq:gen-quasidet}
|A|_{IJ}
=
(JA^{-1}I)^{-1}\,\in\Mat_{M\times M}R
\,,
\end{equation}
assuming that the RHS makes sense, i.e. that $A$ is invertible in $\Mat_{N\times N}R$
and that $JA^{-1}I$ is invertible in $\Mat_{M\times M}R$.
\end{definition}
Note that Definition \ref{def:gen-quasidet} is a generalization
of Definition \ref{def:quasidet}.
Indeed, given the subsets $I=\{i_1,\dots,i_M\}$ and 
$J=\{j_1,\dots,j_M\}$ of $\{1,\dots,N\}$,
then the quasideterminant $|A|_{IJ}$ defined by \eqref{eq:def-quasidet}
coincides with the quasideterminant $|A|_{C_I,R_J}$ defined by \eqref{eq:gen-quasidet},
where $C_I$ is the $N\times M$ matrix with $1$ in position $(i_\ell,\ell)$, for $\ell=1,\dots,M$,
and zero otherwise,
and $R_J$ is the $M\times N$ matrix with $1$ in position $(\ell,j_\ell)$, for $\ell=1,\dots,M$,
and zero otherwise.
The following result, though very simple,
will be very useful in our theory.
\begin{theorem}\label{thm:main-quasidet}
Let $A\in\Mat_{N\times N}R$, $I\in\Mat_{N\times M}R$, $J\in\Mat_{M\times N}R$, for some $M\leq N$.
Assume that the $(I,J)$-quasideterminant $|A|_{IJ}$ exists
and that the matrix $A+IJ\in\Mat_{N\times N}R$ is invertible.
Then, the $(I,J)$-quasideterminant of $A+IJ$ exists, and it is given by
\begin{equation}\label{eq:main-quasidet}
|A+IJ|_{IJ}=|A|_{IJ}+\id_{M}
\,.
\end{equation}
\end{theorem}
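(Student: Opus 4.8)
The plan is to reduce everything to the $M\times M$ matrix $X:=JA^{-1}I$, which by hypothesis is invertible with $X^{-1}=|A|_{IJ}$, and to the matrix $Y:=J(A+IJ)^{-1}I$, whose inverse (once shown to exist) is the quasideterminant $|A+IJ|_{IJ}$ we are after. Writing $B:=A+IJ$, which is invertible by assumption, the goal becomes to show that $Y$ is invertible with $Y^{-1}=\id_M+X^{-1}$.

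First I would establish two ``push-through'' identities relating $Y$ and $X$, obtained purely by manipulating $A=B-IJ$ over the ring $R$ (so I must track the left/right multiplication orders, since $R$ is noncommutative). Left-multiplying $A=B-IJ$ by $B^{-1}$, then right-multiplying by $A^{-1}I$, and finally left-multiplying by $J$ yields $Y=X-YX$, that is,
\begin{equation*}
Y(\id_M+X)=X\,.
\end{equation*}
Symmetrically, right-multiplying $A=B-IJ$ by $B^{-1}$, then left-multiplying by $JA^{-1}$, and finally right-multiplying by $I$ yields $Y=X-XY$, that is,
\begin{equation*}
(\id_M+X)Y=X\,.
\end{equation*}

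The point is that these two relations already contain both the existence of $|A+IJ|_{IJ}$ and its value, with no appeal to the Woodbury--Sherman--Morrison formula and no separate argument that $\id_M+X$ be invertible. Using that $X$ is invertible, I multiply the first identity on the right by $X^{-1}$ and the second on the left by $X^{-1}$; since $(\id_M+X)X^{-1}=X^{-1}(\id_M+X)=\id_M+X^{-1}$, these give respectively
\begin{equation*}
Y(\id_M+X^{-1})=\id_M\,,\qquad (\id_M+X^{-1})Y=\id_M\,.
\end{equation*}
Thus $\id_M+X^{-1}$ is a two-sided inverse of $Y$, so $Y$ is invertible and $|A+IJ|_{IJ}=Y^{-1}=\id_M+X^{-1}=\id_M+|A|_{IJ}$, as claimed.

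The only delicate point, and the step I would double-check most carefully, is the bookkeeping of multiplication sides in deriving the two push-through identities, since the intermediate quantities $B^{-1}I$, $JB^{-1}$, and the like do not commute with $X$ or $Y$. Everything else is formal, and the symmetry between the two derivations is exactly what produces the left and right inverses simultaneously, so that invertibility of $Y$ is obtained for free rather than assumed.
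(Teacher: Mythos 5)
Your proof is correct and follows essentially the same route as the paper: both arguments amount to multiplying the identity $A+IJ=A+IJ$ by suitable factors built from $A^{-1}$, $(A+IJ)^{-1}$, $I$, $J$ and $|A|_{IJ}$ to exhibit $\id_M+|A|_{IJ}$ as a two-sided inverse of $J(A+IJ)^{-1}I$. Your intermediate push-through identities $Y(\id_M+X)=X$ and $(\id_M+X)Y=X$ are just a slightly more explicit organization of the same computation.
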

\begin{proof}
If we multiply $A+IJ$ 
by $J(A+IJ)^{-1}$ on the left and by $A^{-1}I|A|_{IJ}$ on the right, we get
\begin{equation}\label{explan1}
J(A+IJ)^{-1}(A+IJ)A^{-1}I|A|_{IJ}
=
(JA^{-1}I)|A|_{IJ}
=\id_M
\,,
\end{equation}
by definition of quasideterminant.
On the other hand, we can compute the same expression differently,
as follows
\begin{equation}\label{explan2}
\begin{array}{l}
\displaystyle{
\vphantom{\Big(}
J(A+IJ)^{-1}(A+IJ)A^{-1}I\,|A|_{IJ}
=
J(A+IJ)^{-1}I(\id_M+JA^{-1}I)|A|_{IJ}
} \\
\displaystyle{
\vphantom{\Big(}
=
J(A+IJ)^{-1}I(|A|_{IJ}+\id_M)
\,,}
\end{array}
\end{equation}
again by definition of quasideterminant.
Comparing \eqref{explan1} and \eqref{explan2}, we get
$$
J(A+IJ)^{-1}I(\id_M+|A|_{IJ})
=
\id_M
\,,
$$
i.e. $\id_M+|A|_{IJ}$ is a right inverse of $J(A+IJ)^{-1}I$.
Similarly, 
if we multiply $A+IJ$ 
by $|A|_{IJ}JA^{-1}$ on the left and by $(A+IJ)^{-1}I$ on the right, we get
$$
(\id_M+|A|_{IJ})J(A+IJ)^{-1}I
=
\id_M
\,,
$$
i.e. $\id_M+|A|_{IJ}$ is also a left inverse of $J(A+IJ)^{-1}I$.
In conclusion, $\id_M+|A|_{IJ}=(J(A+IJ)^{-1}I)^{-1}=|A+IJ|_{IJ}$,
proving the claim.
\end{proof}

\subsection{Adler type operators and quasideterminants}\label{sec:3.2}

\begin{proposition}\label{thm:quasidet-adler}
Let $\mc V$ be a differential algebra with a $\lambda$-bracket $\{\cdot\,_\lambda\,\cdot\}$.
Let $A(\partial)\in\Mat_{N\times N}\mc V((\partial^{-1}))$ be a matrix 
pseudodifferential operator of Adler type with respect to the $\lambda$-bracket of $\mc V$.
Then, for every $I\in\Mat_{N\times M}\mb F$
and $J\in\Mat_{M\times N}\mb F$ with $M\leq N$,
the quasideterminant $|A(\partial)|_{IJ}$
is a matrix pseudodifferential operator of Adler type,
provided that it exists.
\end{proposition}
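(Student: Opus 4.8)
The statement claims that the generalized quasideterminant $|A(\partial)|_{IJ}=(JA(\partial)^{-1}I)^{-1}$ is again of Adler type, for constant rectangular matrices $I,J$. The cleanest route is to assemble this from the structural properties already established in Proposition \ref{prop:properties-adler}, rather than to verify the Adler identity \eqref{eq:adler} directly by a residue computation. The key observation is that $|A(\partial)|_{IJ}$ is built out of $A(\partial)^{-1}$ by the two operations that Proposition \ref{prop:properties-adler} shows preserve (up to a controlled sign) the Adler property: multiplication on both sides by constant matrices, and taking the inverse.

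Let me trace the signs carefully. First, by Proposition \ref{prop:properties-adler}(c), since $A(\partial)$ is of Adler type with respect to $\{\cdot\,_\lambda\,\cdot\}$, its inverse $A(\partial)^{-1}$ is of Adler type with respect to the \emph{opposite} $\lambda$-bracket $-\{\cdot\,_\lambda\,\cdot\}$. Next, because $I$ and $J$ have entries in the base field $\mb F$, Proposition \ref{prop:properties-adler}(d) applies: sandwiching by the constant matrices $J$ (on the left) and $I$ (on the right) preserves the Adler property, so $JA(\partial)^{-1}I\in\Mat_{M\times M}\mc V((\partial^{-1}))$ is of Adler type, still with respect to $-\{\cdot\,_\lambda\,\cdot\}$. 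Here I am using that the hypothesis in part (d) about the vanishing of cross $\lambda$-brackets is automatic when one factor is a constant matrix, since constant matrices have zero $\lambda$-bracket with everything.

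Finally I apply Proposition \ref{prop:properties-adler}(c) a second time, now to the $M\times M$ matrix pseudodifferential operator $JA(\partial)^{-1}I$, which is invertible precisely under the standing assumption that the quasideterminant $|A(\partial)|_{IJ}=(JA(\partial)^{-1}I)^{-1}$ exists. Inverting flips the sign of the $\lambda$-bracket once more, from $-\{\cdot\,_\lambda\,\cdot\}$ back to $+\{\cdot\,_\lambda\,\cdot\}$. Hence $|A(\partial)|_{IJ}$ is of Adler type with respect to the \emph{original} $\lambda$-bracket of $\mc V$, which is exactly the claim.

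**The main obstacle.** There is no serious analytic difficulty here — the content of the proof is entirely bookkeeping of the two sign flips, and the real work was already done in Proposition \ref{prop:properties-adler}(c) and (d). The one point that requires a moment of care is the invocation of part (d): I should confirm that multiplying by a \emph{constant} matrix is the degenerate case $B(\partial)=J$ (or $B(\partial)=I$) of the product rule, where $B$ is of Adler type by Example \ref{ex:constant} and the cross-bracket hypothesis $\{A_{ij}(z)_\lambda B_{hk}(w)\}=0$ holds trivially since the entries of $J,I$ lie in $\mb F$ and are annihilated by any $\lambda$-bracket. Once that is noted, the chain of three applications closes the argument with no residue manipulation required. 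I would write the proof as essentially a one-line composition of the three cited results, remarking explicitly on the net sign $(-1)$ applied twice returning to $+$.
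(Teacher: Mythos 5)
Your argument is exactly the paper's proof, which simply cites Proposition \ref{prop:properties-adler}(c) and (d) together with the definition \eqref{eq:gen-quasidet}; you have merely made explicit the two sign flips from inverting and the degenerate use of part (d) with constant factors. The proposal is correct and requires no changes.
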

\begin{proof}
It is an obvious consequence of Proposition \ref{prop:properties-adler}(c) and (d)
and of the definition \eqref{eq:gen-quasidet} of quasideterminant.
\end{proof}
\begin{remark}\label{rem:daniele}
For $I,J\subset\{1,\dots,N\}$ of the same cardinality,
one can show, by direct computations similar to the ones used 
to prove Proposition \ref{prop:properties-adler}(c),
that $A_{IJ}(\partial)-A_{IJ^c}(\partial)\circ(A_{I^cJ^c}(\partial))^{-1}\circ A_{I^cJ}(\partial)$
is a matrix of Alder type,
even if the quasideterminant $|A(\partial)|_{IJ}$ does not exist,
assuming of course that $A_{I^cJ^c}(\partial)$ is invertible.
We omit the proof of this fact.
\end{remark}

\section{Integrable hierarchy associated to a matrix pseudodifferential operator of Adler type}\label{sec:2.3}

Let $\mc V$ be a Poisson vertex algebra with $\lambda$-bracket $\{\cdot\,_\lambda\,\cdot\}$.
We have the corresponding Lie algebra structure on $\mc V/\partial\mc V$
with Lie bracket 
\begin{equation}\label{eq:lie-br}
\{\tint f,\tint g\}=\tint\{f_\lambda g\}\big|_{\lambda=0}
\,,
\end{equation}
and a representation of the Lie algebra $\mc V/\partial\mc V$ on $\mc V$
given by the action 
\begin{equation}\label{eq:lie-act}
\{\tint f,g\}=\{f_\lambda g\}\big|_{\lambda=0}
\,.
\end{equation}
Recall that the basic problem in the theory of integrability
is to construct an infinite linearly independent sequence of elements 
$\tint h_n\in\mc V/\partial\mc V,\,n\in\mb Z_+$,
called Hamiltonian functionals, which are in involution, i.e. such that
$$
\{\tint h_m,\tint h_n\}=0
\,\text{ for all } m,n\in\mb Z_+
\,.
$$
In this case we obtain a hierarchy of compatible Hamiltonian equations
$$
\frac{du}{dt_n}=\{\tint h_n,u\}
\,,\,\, u\in\mc V
\,.
$$
In the present section we show that,
given a matrix pseudodifferential operator of Adler type,
it is possible to construct a sequence of Hamiltonian functionals in involution
(the question of their linearly independence is treated separately).
This is stated in the following:
\begin{theorem}\label{thm:hn}
Let $\mc V$ be a differential algebra with a $\lambda$-bracket $\{\cdot\,_\lambda\,\cdot\}$.
Let $A(\partial)\in\Mat_{N\times N}\mc V((\partial^{-1}))$
be a matrix pseudodifferential operator of Adler type 
with respect to the $\lambda$-bracket $\{\cdot\,_\lambda\,\cdot\}$,
and assume that $A(\partial)$ is invertible in $\Mat_{N\times N}\mc V((\partial^{-1}))$.
For $B(\partial)\in\Mat_{N\times N}\mc V((\partial^{-1}))$
a $K$-th root of $A$ (i.e. $A(\partial)=B(\partial)^K$ for $K\in\mb Z\backslash\{0\}$)
define the elements $h_{n,B}\in\mc V$, $n\in\mb Z$, by
\begin{equation}\label{eq:hn}
h_{n,B}=
\frac{-K}{|n|}
\Res_z\tr(B^n(z))
\text{ for } n\neq0
\,,\,\,
h_0=0\,.
\end{equation}
Then: 
\begin{enumerate}[(a)]
\item
All the elements $\tint h_{n,B}$ are Hamiltonian functionals in involution:
\begin{equation}\label{eq:invol}
\{\tint h_{m,B},\tint h_{n,C}\}=0
\,\text{ for all } m,n\in\mb Z,\,
B,C \text{ roots of } A
\,.
\end{equation}
\item
The corresponding compatible hierarchy of Hamiltonian equations satisfies
\begin{equation}\label{eq:hierarchy}
\frac{dA(z)}{dt_{n,B}}
=
\{\tint h_{n,B},A(z)\}
=
[(B^n)_+,A](z)
\,,\,\,n\in\mb Z,\,
B \text{ root of } A
\end{equation}
(in the RHS we are taking the symbol of the commutator of matrix pseudodifferential operators),
and the Hamiltonian functionals $\tint h_{n,C}$, $n\in\mb Z_+$, $C$ root of $A$,
are integrals of motion of all these equations.
\end{enumerate}
\end{theorem}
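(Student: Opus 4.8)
The plan is to reduce everything to a single master computation of the $\lambda$-bracket between residues of powers of Adler type operators, and then extract parts (a) and (b) as corollaries. The key object is $\{\,\Res_z\tr(B^m(z))\,{}_\lambda\,\Res_w\tr(C^n(w))\,\}$ (and the variant with one factor replaced by $A(w)$). Since $B$ and $C$ are roots of the same $A$, they commute with $A$ and hence with each other as pseudodifferential operators; moreover any power $B^m$ is itself a (possibly fractional) power of $A$, so by Corollary \ref{cor:lambda-power} and Corollary \ref{cor:lambda-neg-power} the $\lambda$-bracket of entries of $B^m$ can be expressed through the Adler bracket of entries of $A$. The strategy is therefore: first establish a clean formula for $\{(A^m)_{ij}(z)_\lambda(A^n)_{hk}(w)\}$ valid for all $m,n\in\mb Z$ (integer powers of $A$), using the Adler identity \eqref{eq:adler} together with Corollaries \ref{cor:lambda-power} and \ref{cor:lambda-neg-power}; this is the analogue of the ``generalized Adler identity'' for powers. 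The telescoping structure of those corollaries is what makes the sums collapse.

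\medskip

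First I would prove \eqref{eq:hierarchy}, since it is the computational heart and part (a) will follow from it by an integration-by-parts symmetrization. Write $\{\tint h_{n,B},A_{hk}(w)\}=\frac{-K}{|n|}\Res_z\sum_{i}\{(B^n)_{ii}(z)_\lambda A_{hk}(w)\}\big|_{\lambda=0}$. Apply the second identity of Corollary \ref{cor:lambda-power} (or \ref{cor:lambda-neg-power} for $n<0$) to expand $\{(B^n)_{ii}(z)_\lambda A_{hk}(w)\}$ in terms of the elementary brackets $\{B_{\cdot\cdot}(z)_\lambda A_{hk}(w)\}$, and then invoke the Adler condition relating $B$ and $A$. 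The crucial simplification is the identity \eqref{eq:positive}, $\Res_z A(z)\iota_z(z-w)^{-1}=A(w)_+$, which converts the $\iota_z$-expanded kernel in \eqref{eq:adler} into a positive-part projection and produces precisely $[(B^n)_+,A](z)$. One must track the $B^n$ factors on both sides of the Adler kernel and use that $\sum_i$ over the diagonal together with the product formula \eqref{eq:prod-symbol} reassembles the matrix product $(B^n\circ A)$ resp. $(A\circ B^n)$; the difference of the two terms in \eqref{eq:adler}, after taking residue in $z$ and setting $\lambda=0$, is exactly the commutator symbol $[(B^n)_+,A](w)$.

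\medskip

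For part (a), the involution $\{\tint h_{m,B},\tint h_{n,C}\}=0$, I would compute $\{\tint h_{m,B},\tint h_{n,C}\}=\tint\{(h_{m,B})_\lambda h_{n,C}\}\big|_{\lambda=0}$ using the master formula for $\{(B^m)_{ij}(z)_\lambda(C^n)_{hk}(w)\}$, take $\Res_z\Res_w$ of the traces, and integrate over $\tint$. The two terms of the generalized Adler identity for powers are exchanged by the symmetry of Lemma \ref{lem:hn1}(b) (the $\tint\Res_z\tr$ cyclicity), so that under $\tint$ the two contributions cancel. Equivalently, one shows $\{\tint h_{m,B}, h_{n,C}\}=-\frac{K}{|m|}\cdot\frac{|n|}{K}\cdot(\text{total derivative})$ by recognizing, via \eqref{eq:hierarchy} applied with roots of a common $A$, that $\{\tint h_{m,B},\tint h_{n,C}\}=\frac{-K}{|n|}\tint\Res_w\tr\big([(B^m)_+,A]\,\tfrac{\partial(C^n)}{\partial A}\big)$ vanishes because $C^n$ is a function of $A$ alone and the commutator is a total $\partial$-derivative under the trace. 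The compatibility of the flows $\frac{d}{dt_{m,B}}$ and conservation of the $\tint h_{n,C}$ then follow formally: $\frac{d}{dt_{m,B}}\tint h_{n,C}=\{\tint h_{m,B},\tint h_{n,C}\}=0$, and commutativity of the flows is the Jacobi identity for the Lie bracket \eqref{eq:lie-br} combined with \eqref{eq:invol}.

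\medskip

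The main obstacle I expect is the bookkeeping in the master formula for $\{(B^m)_{ij}(z)_\lambda(B^n)_{hk}(w)\}$: the Corollaries \ref{cor:lambda-power}–\ref{cor:lambda-neg-power} introduce nested $\big(\big|_{x=\partial}\cdot\big)$ substitutions and adjoint factors $A^{*\ell}$, and one must verify that, after applying \eqref{eq:adler} and summing the geometric (telescoping) series in $\ell$, all the intermediate $(A^\ell)(A^{-\ell})=\id$ cancellations occur so that only the extreme terms survive. Getting the signs right across the $n>0$ and $n<0$ cases (where \ref{cor:lambda-neg-power} carries an overall minus) is delicate but mechanical. Once the generalized Adler identity for powers is in hand in the compact form ``$A(w{+}\lambda{+}\partial)\iota_w(\cdots)^{-1}(\text{adjoint}) - A(z)\iota_z(\cdots)^{-1}A(w)$ with $A$ replaced by $B^n$'', both \eqref{eq:hierarchy} and \eqref{eq:invol} drop out by the residue identity \eqref{eq:positive} and the trace-cyclicity Lemma \ref{lem:hn1}(b), so the genuine difficulty is confined to that one algebraic lemma.
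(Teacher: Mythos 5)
There is a genuine gap at the very center of your plan. Your ``master computation'' rests on a generalized Adler identity for $\{(B^m)_{ij}(z)_\lambda(C^n)_{hk}(w)\}$ (and for $\{(B^n)_{ii}(z)_\lambda A_{hk}(w)\}$), which you propose to obtain by expanding via Corollaries \ref{cor:lambda-power} and \ref{cor:lambda-neg-power} into elementary brackets of entries of $B$ and $C$ and then ``invoking the Adler condition relating $B$ and $A$.'' No such condition is available: the hypothesis \eqref{eq:adler} constrains only the brackets $\{A_{ij}(z)_\lambda A_{hk}(w)\}$ between entries of $A$ itself, and the individual brackets $\{B_{st}(z)_\lambda B_{pq}(w)\}$ or $\{B_{st}(z)_\lambda A_{hk}(w)\}$ are completely undetermined by the hypotheses. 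Moreover the two corollaries go in the wrong direction for your purpose: they express brackets of entries of an integer power of $B$ --- in particular of $A=B^K$ --- in terms of brackets of entries of $B$, not the reverse, so they cannot ``express the $\lambda$-bracket of entries of $B^m$ through the Adler bracket of entries of $A$'' when $m$ is not a multiple of $K$. (The side claim that two roots $B,C$ of the same $A$ commute is also unjustified, though it is not needed.)

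The missing idea is the elimination argument of Lemma \ref{lem:hn2}. One computes the \emph{same} traced, residued quantity in two ways: starting from $\sum_{i,j}\Res_z\{A_{ij}(z+\lambda)_\lambda a\}\big(\big|_{\lambda=\partial}(B^{n-K})_{ji}(z)\big)$ with $A=B^K$ expanded by Corollary \ref{cor:lambda-power}/\ref{cor:lambda-neg-power}, and starting from $\{{h_{n,B}}_\lambda a\}|_{\lambda=0}=\tfrac{-K}{|n|}\sum_i\Res_z\{(B^n)_{ii}(z)_\lambda a\}|_{\lambda=0}$ expanded the same way. Both telescope (using Lemma \ref{lem:hn1}(a) and the cyclic rearrangement of the substitution operators) to the single expression $K\sum_{h,k}\Res_z\{B_{hk}(z+x)_{x}a\}\big(\big|_{x=\partial}(B^{n-1})_{kh}(z)\big)$ in the \emph{unknown} $B$-brackets, which therefore cancels between the two computations and yields \eqref{eq:hn2}. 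Only after this step does one face a bracket purely between entries of $A$ (contracted against coefficients built from $B^{n-K}$ and $C^{n-L}$), to which \eqref{eq:adler} can legitimately be applied; from there your remaining steps --- the residue identity \eqref{eq:positive} to produce the $+$-projections and commutator in \eqref{eq:hierarchy}, and the symmetrization via Lemma \ref{lem:hn1}(b) to cancel the two terms in \eqref{eq:invol} --- are essentially correct and match the paper. As written, however, your proof cannot get off the ground without this two-sided comparison, so the ``one algebraic lemma'' you defer to is not merely delicate bookkeeping: the route you describe for proving it would fail.
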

\begin{remark}
Since, by assumption, $A(\partial)$ is invertible in $\Mat_{N\times N}\mc V((\partial^{-1}))$,
then, for $K\geq1$, a $K$-th root $B(\partial)$ of $A(\partial)$
is invertible as well, its inverse being $B^{-1}=A^{-1}\circ B^{K-1}$.
Hence, \eqref{eq:hn} makes sense for all $n\in\mb Z$.
On the other hand, the sequence \eqref{eq:hn} for $B$
is related to the same sequence for $B^{-1}$ by the following simple relation:
$h_{-n,B}=h_{n,B^{-1}}$ for all $n\in\mb Z_+$.
\end{remark}
\begin{remark}\label{rem:negative-n}
One can state the same Theorem \ref{thm:hn} without the assumption that $A$ 
(and therefore $B$) is invertible,
at the price of assuming that $K\geq1$,
and of restricting the sequence $h_n$ in \eqref{eq:hn} to $n\in\mb Z_+$.
Moreover, since the proof of \eqref{eq:invol} and \eqref{eq:hierarchy} 
is based on Lemma \ref{lem:hn2},
one needs to restrict equation \eqref{eq:invol} to $m\geq K$ and $n\geq L$,
where $B^K=C^L=A$,
and equation \eqref{eq:hierarchy} to $n\geq K$.
\end{remark}
\begin{remark}\label{rem:V1}
Let $\mc V_1\subset\mc V$ be the differential subalgebra generated by the coefficients
of $A(\partial)$.
Then by Theorem \ref{thm:main-adler} $\{\cdot\,_\lambda\,\cdot\}$ is a PVA $\lambda$-bracket
on $\mc V_1$, and \eqref{eq:hierarchy}
uniquely extends to a hierarchy of compatible Hamiltonian equations
over the PVA $\mc V_1$.
However, if the $\lambda$-bracket on $\mc V$
is a PVA $\lambda$-bracket,
then we have a hierarchy of compatible Hamiltonian equations
with Hamiltonian functionals $\tint h_{n,B}$, $n\in\mb Z$, $B$ a $K$-th root of $A$,
on the whole $\mc V$.
\end{remark}
\begin{remark}\label{rem:referee}
The fact that the Hamiltonian functionals $\tint h_{n,B}$, $n\in\mb Z$, $B$ a root of $A$, 
span an infinite dimensional space, should be treated separately,
for example as in \cite{DSKV13}.
\end{remark}
In the remainder of the section we will give a proof of Theorem \ref{thm:hn}.
It is based on the following:
\begin{lemma}\label{lem:hn2}
For $a\in\mc V$, $n\in\mb Z$, and $B(\partial)\in\Mat_{N\times N}\mc V((\partial^{-1}))$
a $K$-th root of $A(\partial)$ ($K\in\mb Z\backslash\{0\}$), we have
\begin{equation}\label{eq:hn2}
\begin{split}
& \{{h_{n,B}}_\lambda a\}\big|_{\lambda=0}
=
-\sum_{i,j=1}^N
\Res_z 
\{A_{ij}(z+x)_x a\}\big(\big|_{x=\partial}(B^{n-K})_{ji}(z)\big)
\,,
\\
& \tint \{a_\lambda h_{n,B}\}\big|_{\lambda=0}
=
- \sum_{i,j=1}^N
\int \Res_z 
\{a_\lambda A_{ij}(z+x)\}\big|_{\lambda=0}\big(\big|_{x=\partial}(B^{n-K})_{ji}(z)\big)
\,.
\end{split}
\end{equation}
\end{lemma}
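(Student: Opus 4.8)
The plan is to prove both identities in \eqref{eq:hn2} by first differentiating the trace-of-power expression that defines $h_{n,B}$, and then by replacing the arising derivatives of $B^n$ with derivatives of $A=B^K$ using a cyclic trace argument. The key observation is that, although $B$ is only a $K$-th root of $A$ and its coefficients need not lie in the subalgebra generated by those of $A$, the residue of the trace of any $\lambda$-bracket against $B^n$ can be rewritten so that only the $\lambda$-bracket with the entries of $A$ itself appears, at the cost of shifting the exponent from $n$ to $n-K$.

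First I would compute $\{{h_{n,B}}_\lambda a\}$ directly from \eqref{eq:hn}. Since $h_{n,B}=\frac{-K}{|n|}\Res_z\tr(B^n(z))=\frac{-K}{|n|}\sum_{i}\Res_z(B^n)_{ii}(z)$, I apply the second formula of Corollary \ref{cor:lambda-power} (for $n\geq 0$) or Corollary \ref{cor:lambda-neg-power} (for $n\leq -1$) to each $\{(B^n)_{ii}(z)_\lambda a\}$, set $\lambda=0$, and take the residue. The nested sum over $\ell$ produces terms of the form $\{B_{hk}(z+x)_{x+y}a\}$ sandwiched between $(B^{n-1-\ell})_{ki}(z)$ and $(B^{*\ell})_{hi}(-z)$. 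Here is where the cyclicity of the trace, in the guise of Lemma \ref{lem:hn1}(b) together with the adjoint identity \eqref{eq:hn1a}, lets me collapse the sum over $\ell$ and over $i$: integrating over $\mc V/\partial\mc V$ (or using the residue's invariance under the manipulations of Lemma \ref{lem:hn1}) allows me to merge $B^{*\ell}$ and $B^{n-1-\ell}$ back into a single power. The factor $K$ and the summation over $\ell$ from $0$ to $n-1$ should combine — each of the $K$ ``slots'' in $B^K=A$ contributing equally after the cyclic rearrangement — to yield exactly the stated expression with $A_{ij}$ in place of $B_{hk}$ and the exponent reduced to $n-K$. The coefficient $\frac{-K}{|n|}$ is designed precisely so that the resulting combinatorial factor telescopes to the overall sign $-1$ in \eqref{eq:hn2}.

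For the second identity I would proceed analogously, starting from $\tint\{a_\lambda h_{n,B}\}|_{\lambda=0}$ and using the first formula of Corollary \ref{cor:lambda-power} (resp. Corollary \ref{cor:lambda-neg-power}), then applying $\tint$ and using Lemma \ref{lem:hn1}(b) to cycle the trace. The integration by parts built into Lemma \ref{lem:hn1} is what converts the symmetric-looking power-rule output into the asymmetric final form, where the $\lambda$-bracket $\{a_\lambda A_{ij}(z+x)\}$ is paired against $(B^{n-K})_{ji}(z)$.

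The main obstacle I anticipate is the bookkeeping in passing from $B^n$ to $A=B^K$: one must verify that the cyclic rearrangement of the $n$-fold (resp. $|n|$-fold) product genuinely reduces the exponent by exactly $K$ and not merely by $1$, and that the sum over the intermediate index $\ell$ combines with the prefactor $\frac{-K}{|n|}$ to give a clean coefficient. This requires care because $B$ and $B^*$ appear on opposite sides of the inserted $\lambda$-bracket, so the collapse relies essentially on the residue/trace identities of Lemma \ref{lem:hn1} rather than on any commutativity. The cases $n>0$ and $n<0$ must be handled with the respective corollaries, but since $B^{-1}=A^{-1}\circ B^{K-1}$ and $h_{-n,B}=h_{n,B^{-1}}$, the negative case should follow by the same argument applied to the root $B^{-1}$.
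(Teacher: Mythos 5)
Your outline follows essentially the same route as the paper: expand $\{(B^n)_{ii}(z)_\lambda a\}$ (resp.\ $\{a_\lambda (B^n)_{ii}(z)\}$) by Corollaries \ref{cor:lambda-power} and \ref{cor:lambda-neg-power}, use the residue identities of Lemma \ref{lem:hn1} to rearrange each of the $n$ (resp.\ $|n|$) summands cyclically into a common form $\{B_{hk}(z+x)_x a\}$ paired with $(B^{n-1})_{kh}(z)$, so that the prefactor $\frac{-K}{|n|}$ leaves an overall $-K$, and then separately expand $A=B^K$ by the same power rule to see that the claimed right-hand side equals $K$ times that same common expression; the two computations combined give \eqref{eq:hn2}. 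One point needs care, however: the first identity in \eqref{eq:hn2} is an identity in $\mc V$, not merely in $\mc V/\partial\mc V$ (it is used pointwise to derive the Lax equation \eqref{eq:hierarchy}), so for it you cannot ``integrate over $\mc V/\partial\mc V$'' nor invoke Lemma \ref{lem:hn1}(b), which holds only modulo $\partial\mc V$ because of the integration by parts in its proof; the cyclic rearrangement there must be carried out using only the exact adjoint identity \eqref{eq:hn1a} of Lemma \ref{lem:hn1}(a), which is what the paper does. Lemma \ref{lem:hn1}(b) is legitimately used only for the second identity, which is indeed stated under $\tint$. With that correction your plan reproduces the paper's proof.
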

\begin{proof}
The proof of the first equation in \eqref{eq:hn2}
is similar to the proof of \cite[Lem.3.2]{DSKV15a}.
We review here the argument.
By the second equation in \eqref{eq:lambda-power}, we have, for $n\geq0$,
\begin{equation}\label{eq:hn2-pr1}
\begin{split}
& \{(B^n)_{ij}\!(z)_\lambda a\}
\\
& =
\sum_{\ell=0}^{n-1}\sum_{h,k=1}^N
\{B_{hk}(z+x)_{\lambda+x+y}a\}
\big(\big|_{x=\partial} (B^{n-1-\ell})_{kj}(z)\big)
\big(\big|_{y=\partial} (B^{*\ell})_{hi}(\lambda-z)\big)
\,,
\end{split}
\end{equation}
while,
by the second equation in \eqref{eq:lambda-neg-power}, we have, for $n\leq-1$,
\begin{equation}\label{eq:hn2-pr1b}
\begin{split}
& \{(B^n)_{ij}\!(z)_\lambda a\}
\\
& =
-\sum_{\ell=n}^{-1}\!\sum_{h,k=1}^N
\{B_{hk}(z+x)_{\lambda+x+y}a\}
\big(\big|_{x=\partial} (B^{n-1-\ell})_{kj}(z)\big)
\big(\big|_{y=\partial} (B^{*\ell})_{hi}(\lambda-z)\big)
\,.
\end{split}
\end{equation}
For $K\geq1$, we take $n=K$ in \eqref{eq:hn2-pr1},
replace $z$ by $z+\partial$ and $\lambda$ by $\partial$ acting on $B^{n-K}_{ji}(z)$,
take residue in $z$ and sum over $i,j=1,\dots,N$. 
As a result, we get
\begin{equation}\label{eq:hn2-pr2a}
\begin{split}
& \sum_{i,j=1}^N
\Res_z
\{A_{ij}\!(z+\lambda)_\lambda a\}
\big(\big|_{\lambda=\partial}(B^{n-K})_{ji}(z)\big)
\\
& =
\sum_{\ell=0}^{K-1}\sum_{i,j,h,k=1}^N
\Res_z
\{B_{hk}(z\!+\!\lambda\!+\!x)_{\lambda+x+y}a\}
\\
& \,\,\,\,\,\,\,\,\, \times
\big(\big|_{x=\partial} (B^{K-1-\ell})_{kj}(z+\lambda)\big)
\big(\big|_{y=\partial} (B^{*\ell})_{hi}(-z)\big)
\big(\big|_{\lambda=\partial}B^{n-K}_{ji}(z)\big)
\\
& =
\sum_{\ell=0}^{K-1}\!\sum_{i,j,h,k=1}^N
\Res_z
\{B_{hk}(z+\lambda+x+y)_{\lambda+x+y}a\}
\\
& \,\,\,\,\,\,\,\,\, \times
\big(\big|_{x=\partial} (B^{K-1-\ell})_{kj}(z+\lambda+y)\big)
\big(\big|_{\lambda=\partial}B^{n-K}_{ji}(z+y)\big)
\big(\big|_{y=\partial} (B^{\ell})_{ih}(z)\big)
\\
& =
K \sum_{h,k=1}^N
\Res_z
\{B_{hk}(z+x)_{x}a\}
\big(\big|_{x=\partial} (B^{n-1})_{kh}(z)\big)
\,.
\end{split}
\end{equation}
In the second equality we used Lemma \ref{lem:hn1}(a).
For $K\leq-1$, we do the same manipulation starting 
from equation \eqref{eq:hn2-pr1b} in place of \eqref{eq:hn2-pr1}.
As a result we get the same final equation as \eqref{eq:hn2-pr2a}
(which thus holds for every $K\in\mb Z\backslash\{0\}$):
\begin{equation}\label{eq:hn2-pr2}
\begin{split}
& \sum_{i,j=1}^N
\Res_z
\{A_{ij}\!(z+\lambda)_\lambda a\}
\big(\big|_{\lambda=\partial}(B^{n-K})_{ji}(z)\big)
\\
& =
K \sum_{h,k=1}^N
\Res_z
\{B_{hk}(z+x)_{x}a\}
\big(\big|_{x=\partial} (B^{n-1})_{kh}(z)\big)
\,.
\end{split}
\end{equation}
On the other hand, for $n\geq1$ we put $\lambda=0$ in \eqref{eq:hn2-pr1},
take the residue in $z$, 
let $i=j$ and sum over $i=1,\dots,N$, and as a result we get, by the definition \eqref{eq:hn} of $h_{n,B}$,
\begin{equation}\label{eq:hn2-pr3}
\begin{split}
& \{{h_{n,B}}_\lambda a\}\big|_{\lambda=0}
=
\frac{-K}{n} \sum_{i=1}^N \Res_z \{(B^n)_{ii}(z)_\lambda a\}\big|_{\lambda=0}
\\
& =
\frac{-K}{n} 
\sum_{\ell=0}^{n-1}\sum_{i,h,k=1}^N
\Res_z \{B_{hk}(z+x)_{x+y}a\}
\big(\big|_{x=\partial} (B^{n-1-\ell})_{ki}(z)\big)
\big(\big|_{y=\partial} (B^{*\ell})_{hi}(-z)\big)
\\
& =
\!\frac{-K}{n}
\!\sum_{\ell=0}^{n-1}\!\sum_{i,h,k=1}^N
\Res_z \{B_{hk}(z\!+\!x\!+\!y)_{x+y}a\}
\big(\big|_{x=\partial} (B^{n-1-\ell})_{ki}(z+y)\big)
\big(\big|_{y=\partial} (B^{\ell})_{ih}(z)\big)
\\
& =
-K \sum_{h,k=1}^N
\Res_z \{B_{hk}(z+x)_{x}a\}
\big(\big|_{x=\partial} (B^{n-1})_{kh}(z)\big)
\,.
\end{split}
\end{equation}
Again, in the third equality we used Lemma \ref{lem:hn1}(a).
We arrive at the same conclusion, for $n\leq-1$, using equation \eqref{eq:hn2-pr1b}
in place of \eqref{eq:hn2-pr1}.
Combining equations \eqref{eq:hn2-pr2} and \eqref{eq:hn2-pr3}, we get 
the first equation in \eqref{eq:hn2}.

Next, we prove the second equation in \eqref{eq:hn2}. 
By the first equation in \eqref{eq:lambda-power}, we have, for $n\geq1$,
\begin{equation}\label{eq:hn2-pr4}
\{a_\lambda (B^n)_{ij}(z)\}
=
\sum_{\ell=0}^{n-1}\sum_{h,k=1}^N
(B^{n-1-\ell})_{ih}(z+\lambda+\partial)
\{a_\lambda B_{hk}(z+x)\}
\big(\big|_{x=\partial}(B^\ell)_{kj}(z)\big)
\,,
\end{equation}
while by the first equation in \eqref{eq:lambda-neg-power}, we have, 
for $n\leq-1$,
\begin{equation}\label{eq:hn2-pr4b}
\{a_\lambda (B^n)_{ij}(z)\}
=
-\sum_{\ell=n}^{-1}\sum_{h,k=1}^N
(B^{n-1-\ell})_{ih}(z+\lambda+\partial)
\{a_\lambda B_{hk}(z+x)\}
\big(\big|_{x=\partial}(B^\ell)_{kj}(z)\big)
\,.
\end{equation}
For $K\geq1$, 
we let $n=K$ in \eqref{eq:hn2-pr4},
let $\lambda=0$, replace $z$ by $z+\partial$ acting on $B^{n-K}_{ji}(z)$,
take residue in $z$, apply $\tint$, and sum over $i,j=1,\dots,N$.
As a result we get
\begin{equation}\label{eq:hn2-pr5a}
\begin{split}
& \sum_{i,j=1}^N
\int\Res_z
\{a_\lambda (B^K)_{ij}(z+x)\}\big|_{\lambda=0} \big(\big|_{x=\partial}B^{n-K}_{ji}(z)\big)
\\
& =
\sum_{\ell=0}^{K-1}\sum_{i,h,k=1}^N
\int\Res_z
(B^{K-1-\ell})_{ih}(z+\partial)
\{a_\lambda B_{hk}(z+x)\}\big|_{\lambda=0}
\big(\big|_{x=\partial}(B^{n+\ell-K})_{ki}(z)\big)
\\
& =
\sum_{\ell=0}^{K-1}\sum_{i,h,k=1}^N
\int\Res_z
(B^{n+\ell-K})_{ki}(z+\partial)
(B^{K-1-\ell})_{ih}(z+\partial)
\{a_\lambda B_{hk}(z)\}\big|_{\lambda=0}
\\
& =
K\sum_{h,k=1}^N
\int\Res_z
(B^{n-1})_{kh}(z+\partial)
\{a_\lambda B_{hk}(z)\}\big|_{\lambda=0}
\,.
\end{split}
\end{equation}
In the second equality we used Lemma \ref{lem:hn1}(b).
For $K\leq-1$, we do the same manipulations with equation \eqref{eq:hn2-pr4b}
in place of \eqref{eq:hn2-pr4}
and, as a result, we get the same final equation as \eqref{eq:hn2-pr5a}
(which thus holds for evey $K\in\mb Z\backslash\{0\}$):
\begin{equation}\label{eq:hn2-pr5}
\begin{split}
& \sum_{i,j=1}^N
\int\Res_z
\{a_\lambda (B^K)_{ij}(z+x)\}\big|_{\lambda=0} \big(\big|_{x=\partial}B^{n-K}_{ji}(z)\big)
\\
& =
K\sum_{h,k=1}^N
\int\Res_z
(B^{n-1})_{kh}(z+\partial)
\{a_\lambda B_{hk}(z)\}\big|_{\lambda=0}
\,.
\end{split}
\end{equation}
On the other hand, for $n\geq1$ we put $\lambda=0$ in \eqref{eq:hn2-pr4},
take the residue in $z$, apply $\tint$, 
let $i=j$ and sum over $i=1,\dots,N$.
As a result we get, by the definition \eqref{eq:hn} of $h_{n,B}$,
\begin{equation}\label{eq:hn2-pr6}
\begin{split}
& \tint\{a_\lambda h_{n,B}\}\big|_{\lambda=0}
= 
\frac{-K}{n} 
\sum_{i=1}^N
\int\Res_z
\{a_\lambda (B^n)_{ii}(z)\}\big|_{\lambda=0}
\\
& =
\frac{-K}{n} 
\sum_{\ell=0}^{n-1}\sum_{i,h,k=1}^N
\int\Res_z
(B^{n-1-\ell})_{ih}(z+\partial)
\{a_\lambda B_{hk}(z+x)\}\big|_{\lambda=0}
\big(\big|_{x=\partial}(B^\ell)_{ki}(z)\big)
\\
& =
\frac{-K}{n} 
\sum_{\ell=0}^{n-1}\sum_{i,h,k=1}^N
\int\Res_z
(B^\ell)_{ki}(z+\partial)
(B^{n-1-\ell})_{ih}(z+\partial)
\{a_\lambda B_{hk}(z)\}\big|_{\lambda=0}
\\
& =
-K 
\sum_{h,k=1}^N
\int\Res_z
(B^{n-1})_{kh}(z+\partial)
\{a_\lambda B_{hk}(z)\}\big|_{\lambda=0}
\,.
\end{split}
\end{equation}
Again, in the third equality we used Lemma \ref{lem:hn1}(b).
For $n\leq-1$ we arrive at the same conclusion, using equation \eqref{eq:hn2-pr4b}
in place of \eqref{eq:hn2-pr4}.
Combining equations \eqref{eq:hn2-pr5} and \eqref{eq:hn2-pr6}, we get 
the second equation in \eqref{eq:hn2}.
\end{proof}
\begin{proof}[Proof of Theorem \ref{thm:hn}]
Suppose $B$ is a $K$-th root of $A$, $K\in\mb Z\backslash\{0\}$ 
and $C$ is an $L$-th root of $A$, $L\in\mb Z\backslash\{0\}$.
Applying the second equation in \eqref{eq:hn2} first,
and then the first equation in \eqref{eq:hn2}, we get
\begin{equation}\label{eq:hn-pr1}
\begin{split}
& \{\tint h_{m,B},\tint h_{n,C}\}
= 
\sum_{i,j,h,k=1}^N
\int \Res_z \Res_w 
\{A_{ij}(z+x)_x A_{hk}(w+y)\}
\\
& \,\,\,\,\,\,\,\,\,\,\,\,\,\,\,\,\,\, \times
\big(\big|_{x=\partial}(B^{m-K})_{ji}(z)\big)
\big(\big|_{y=\partial}(C^{n-L})_{kh}(w)\big)
\,.
\end{split}
\end{equation}
We can now use the Adler condition \eqref{eq:adler}
to rewrite the RHS of \eqref{eq:hn-pr1} as
\begin{equation}\label{eq:hn-pr2}
\begin{split}
& \sum_{i,j,h,k=1}^N
\!\!
\int\! \Res_z\! \Res_w\! 
A_{hj}(w\!+\!\partial)
(B^{m-K})_{ji}\!(z)
\iota_z(z\!-\!w\!-\!\partial)^{-1}
(A_{ik})^*(-\!z)
(C^{n-L})_{kh}\!(w)
\\
& -
\sum_{i,h=1}^N
\int \Res_z \Res_w 
(B^{m})_{hi}(z)
\iota_z(z\!-\!w\!-\!\partial)^{-1}
(C^{n})_{ih}(w)
\,.
\end{split}
\end{equation}
For the second term we used the fact that $A=B^K=C^L$.
By \eqref{eq:positive}, we have
\begin{equation}\label{eq:hn-pr2b}
\sum_{i=1}^N
\Res_z
(B^{m-K})_{ji}(z)
\iota_z(z-w-\partial)^{-1}
\circ
(A_{ik})^*(-z)
=
(B^m)_{jk}(w+\partial)_+
\,.
\end{equation}
Hence, the first term of \eqref{eq:hn-pr2} is equal to
\begin{equation}\label{eq:hn-pr3}
\begin{split}
& \sum_{j,h,k=1}^N
\int \Res_w 
A_{hj}(w+\partial)
(B^m)_{jk}(w+\partial)_+
(C^{n-L})_{kh}(w)
\\
& =
\sum_{j,k=1}^N
\int \Res_w 
(C^{n})_{kj}(w+\partial)
(B^m)_{jk}(w)_+
\,.
\end{split}
\end{equation}
Here we used Lemma \ref{lem:hn1}(b) and the identity $C^L=A$.
On the other hand,
by \eqref{eq:positive}
the second term of \eqref{eq:hn-pr2} is equal to
\begin{equation}\label{eq:hn-pr4}
\begin{split}
& -
\sum_{i,h=1}^N
\int \Res_w 
(B^{m})_{hi}(w+\partial)_+
(C^{n})_{ih}(w)
\end{split}
\end{equation}
which, by Lemma \ref{lem:hn1}(b), is the same as the RHS of \eqref{eq:hn-pr3}, with opposite sign.
In conclusion, \eqref{eq:hn-pr2} is zero, as claimed.
This proves part (a).

For part (b),
by the first equation in Lemma \ref{eq:hn2} we have
\begin{equation}\label{eq:proofb-1}
\begin{split}
& \{\tint h_{n,B},A_{hk}(w)\}
=
\{{h_{n,B}}_\lambda A_{hk}(w)\}\big|_{\lambda=0}
\\
& =
- \sum_{i,j=1}^N
\Res_z 
\{A_{ij}(z+x)_x A_{hk}(w)\}\big(\big|_{x=\partial}(B^{n-K})_{ji}(z)\big)
\,.
\end{split}
\end{equation}
By the Adler condition \eqref{eq:adler}
the RHS of \eqref{eq:proofb-1} becomes
\begin{equation}\label{eq:proofb-2}
\begin{split}
& -\sum_{i,j=1}^N
\Res_z 
A_{hj}(w+\partial)
(B^{n-K})_{ji}(z)
\iota_z(z\!-\!w\!-\!\partial)^{-1}(A_{ik})^*(-z)
\\
&\,\,\,\,\,\, +
\sum_{i,j=1}^N
\Res_z 
\big(
A_{hj}(z+\partial)
(B^{n-K})_{ji}(z)\big)
\iota_z(z\!-\!w\!-\!\partial)^{-1}A_{ik}(w)
\\
& =
- \sum_{j=1}^N
\Res_z 
A_{hj}(w+\partial)
(B^{n})_{jk}(z)
\iota_z(z\!-\!w)^{-1}
\\
&\,\,\,\,\,\, +
\sum_{i=1}^N
\Res_z 
(B^{n})_{hi}(z)
\iota_z(z\!-\!w\!-\!\partial)^{-1}A_{ik}(w)
\\
& =
\sum_{i=1}^N
\Big(
-
A_{hi}(w+\partial)
(B^{n})_{ik}(w)_+
+ 
(B^{n})_{hi}(w+\partial)_+ A_{ik}(w)
\Big)\,.
\end{split}
\end{equation}
In the second equality we used Lemma \ref{lem:hn1} and the identity $A=B^K$,
while in the last equality we used \eqref{eq:positive}.
This proves \eqref{eq:hierarchy} and completes the proof of the Theorem.
\end{proof}

\section{Operators of bi-Adler type and bi-Hamiltonian hierarchies}\label{sec:6}

\subsection{Operators of bi-Adler type and bi-Poisson vertex algebras}\label{sec:6.1}

Let $\{\cdot\,_\lambda\,\cdot\}_0$ and $\{\cdot\,_\lambda\,\cdot\}_1$
be two $\lambda$-brackets on the same differential algebra $\mc V$.
We can consider the pencil of $\lambda$-brackets
\begin{equation}\label{eq:pencil}
\{\cdot\,_\lambda\,\cdot\}_\epsilon
=
\{\cdot\,_\lambda\,\cdot\}_0
+
\epsilon\{\cdot\,_\lambda\,\cdot\}_1
\quad,\qquad \epsilon\in\mb F
\,.
\end{equation}
We say that $\mc V$ is a \emph{bi}-\emph{PVA} 
if $\{\cdot\,_\lambda\,\cdot\}_\epsilon$ is a PVA $\lambda$-bracket on $\mc V$ for every $\epsilon\in\mb F$.
Clearly, for this it suffices that $\{\cdot\,_\lambda\,\cdot\}_0$,
$\{\cdot\,_\lambda\,\cdot\}_1$ and $\{\cdot\,_\lambda\,\cdot\}_0+\{\cdot\,_\lambda\,\cdot\}_1$
are PVA $\lambda$-brackets.

\begin{definition}\label{def:-bi-adler}
Let $A(\partial)\in\Mat_{M\times N}\mc V((\partial^{-1}))$
be a matrix pseudodifferential operator,
and let $S\in\Mat_{M\times N}\mb F$.
We say that $A$ is of $S$-\emph{Adler type}
with respect to the two $\lambda$-brackets $\{\cdot\,_\lambda\,\cdot\}_0$
and $\{\cdot\,_\lambda\,\cdot\}_1$
if, for every $\epsilon\in\mb F$,
$A(\partial)+\epsilon S$ is a matrix of Adler type with respect 
to the $\lambda$-bracket $\{\cdot\,_\lambda\,\cdot\}_\epsilon$.
This is equivalent to saying that $A(\partial)$
is a matrix of Adler type with respect to the $\lambda$-bracket $\{\cdot\,_\lambda\,\cdot\}_0$,
i.e. \eqref{eq:adler} holds,
and that 
\begin{equation}\label{eq:B-adler}
\begin{split}
\{A_{ij}(z)_\lambda A_{hk}(w)\}_1
& = 
\iota_z(z\!-\!w\!-\!\lambda)^{-1}
S_{ik}
\big(
A_{hj}(w+\lambda)
- A_{hj}(z)
\big)
\\
& +
\iota_z(z\!-\!w\!-\!\lambda\!-\!\partial)^{-1}
S_{hj}
\big(
(A_{ik})^*(\lambda-z)
-A_{ik}(w)
\big)
\,.
\end{split}
\end{equation}
In the case $M=N$,
we also say that $A$ is of \emph{bi}-\emph{Adler type}
if it is of $\id_N$-Adler type.
In this case condition \eqref{eq:B-adler} becomes:
\begin{equation}\label{eq:bi-adler}
\begin{split}
\{A_{ij}(z)_\lambda A_{hk}(w)\}_1
& = 
\delta_{ik}
\iota_z(z\!-\!w\!-\!\lambda)^{-1}
\big(
A_{hj}(w+\lambda)
- A_{hj}(z)
\big)
\\
& +
\delta_{hj}
\iota_z(z\!-\!w\!-\!\lambda\!-\!\partial)^{-1}
\big(
(A_{ik})^*(\lambda-z)
-A_{ik}(w)
\big)
\,.
\end{split}
\end{equation}
\end{definition}
\begin{example}\label{ex:bi-adler-glN}
From Example \ref{ex:affine-adler} we have that,
for every $S\in\Mat_{N\times N}\mb F$,
the matrix differential operator
$A(\partial)=\id_N\partial+\sum_{i,j=1}^Nq_{ji}E_{ij}$
is of $S$-Adler type with respect the the bi-PVA $\lambda$-brackets 
(cf. \eqref{eq:affine-lb})
\begin{equation}\label{eq:affine-bi-lb}
\{a_\lambda b\}_0=[a,b]+\tr(ab)\lambda
\quad,\qquad
\{a_\lambda b\}_1=\tr(S[a,b])
\,\,,\,\,\,\,
a,b\in\mf{gl}_N
\,,
\end{equation}
extended to $\mc V(\mf{gl}_N)$ by sesquilinearity and the Leibniz rules.
\end{example}

Operators of $S$-Adler type are related to bi-PVA's
in the same way as operators of Adler type are related to PVA's.
In other words, we have the following extension of Theorem \ref{thm:main-adler},
which, in fact, is a consequence of Theorem \ref{thm:main-adler}:
\begin{theorem}\label{thm:main-bi-adler}
Let $S\in\Mat_{M\times N}\mb F$,
and let $A(\partial)\in\Mat_{M\times N}\mc V((\partial^{-1}))$ be an $M\times N$-matrix 
pseudodifferential operator of $S$-Adler type with respect to the $\lambda$-brackets 
$\{\cdot\,_\lambda\,\cdot\}_0$ and $\{\cdot\,_\lambda\,\cdot\}_1$ on $\mc V$.
Assume that the coefficients of the entries of the matrix $A(\partial)$
generate $\mc V$ as a differential algebra.
Then $\mc V$ is a bi-PVA with the $\lambda$-brackets $\{\cdot\,_\lambda\,\cdot\}_0$
and $\{\cdot\,_\lambda\,\cdot\}_1$.
\end{theorem}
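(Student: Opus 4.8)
The plan is to reduce the assertion directly to Theorem \ref{thm:main-adler}, exploiting the defining property of $S$-Adler type operators. By the definition of a bi-PVA given after \eqref{eq:pencil}, what must be shown is that the pencil $\{\cdot\,_\lambda\,\cdot\}_\epsilon=\{\cdot\,_\lambda\,\cdot\}_0+\epsilon\{\cdot\,_\lambda\,\cdot\}_1$ is a PVA $\lambda$-bracket on $\mc V$ for every $\epsilon\in\mb F$. Treating all values of $\epsilon$ uniformly is cleaner than checking separately the three brackets $\{\cdot\,_\lambda\,\cdot\}_0$, $\{\cdot\,_\lambda\,\cdot\}_1$, $\{\cdot\,_\lambda\,\cdot\}_0+\{\cdot\,_\lambda\,\cdot\}_1$, and it is exactly what the hypothesis hands us.

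First I would invoke the assumption that $A(\partial)$ is of $S$-Adler type: by Definition \ref{def:-bi-adler} this means precisely that, for every $\epsilon\in\mb F$, the operator $A(\partial)+\epsilon S\in\Mat_{M\times N}\mc V((\partial^{-1}))$ is of Adler type with respect to the single $\lambda$-bracket $\{\cdot\,_\lambda\,\cdot\}_\epsilon$. This is the input demanded by Theorem \ref{thm:main-adler}, with the one proviso that the generation hypothesis must be verified for the shifted operator $A(\partial)+\epsilon S$ rather than for $A(\partial)$ itself. The only point requiring an argument is therefore this generation statement: since $S\in\Mat_{M\times N}\mb F$ is a constant matrix, passing from $A(\partial)$ to $A(\partial)+\epsilon S$ alters only the order-zero coefficient of each entry $A_{ij}(\partial)$, replacing it by its sum with the scalar $\epsilon S_{ij}\in\mb F$. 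Because $\mb F\subset\mc V$, the differential subalgebra generated by the shifted coefficients coincides with that generated by the original ones, namely $\mc V$.

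With this observation, Theorem \ref{thm:main-adler} applied to $A(\partial)+\epsilon S$ and the $\lambda$-bracket $\{\cdot\,_\lambda\,\cdot\}_\epsilon$ shows that $\mc V$ is a PVA with respect to $\{\cdot\,_\lambda\,\cdot\}_\epsilon$, for every $\epsilon\in\mb F$, which is by definition the statement that $\mc V$ is a bi-PVA. I expect no genuine obstacle: the entire substance of the theorem is absorbed into the already-established Theorem \ref{thm:main-adler}, and the sole (minor) subtlety is the verification that the constant shift $\epsilon S$ preserves the generation hypothesis, handled above. The remark in the text that this result ``is a consequence of Theorem \ref{thm:main-adler}'' confirms that this is the intended route.
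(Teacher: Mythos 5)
Your proposal is correct and follows exactly the route the paper intends: the authors state that Theorem \ref{thm:main-bi-adler} ``is a consequence of Theorem \ref{thm:main-adler}'', and your argument is precisely the spelled-out version of that reduction, applying Theorem \ref{thm:main-adler} to $A(\partial)+\epsilon S$ with the bracket $\{\cdot\,_\lambda\,\cdot\}_\epsilon$ for each $\epsilon\in\mb F$. Your observation that the constant shift $\epsilon S$ does not change the differential subalgebra generated by the coefficients is the right (and only) point needing verification, and it is handled correctly.
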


\subsection{Bi-Hamiltonian hierarchy associated to an operator of bi-Adler type}\label{sec:6.2}

Let $\mc V$ be a bi-Poisson vertex algebra with $\lambda$-brackets 
$\{\cdot\,_\lambda\,\cdot\}_0$ and $\{\cdot\,_\lambda\,\cdot\}_1$.
A \emph{bi}-\emph{Hamiltonian equation} is an evolution equation 
which can be written in Hamiltonian form with respect to both PVA $\lambda$-brackets
and two Hamiltonian functionals $\tint h_0,\tint h_1\in\mc V/\partial\mc V$:
$$
\frac{du}{dt}
=
\{\tint h_0,u\}_0
=
\{\tint h_1,u\}_1
\,,\,\, u\in\mc V
\,.
$$
The usual way to prove integrability for a bi-Hamiltonian equation
is to solve the so called Lenard-Magri
recurrence relation ($u\in\mc V$):
\begin{equation}\label{eq:LM}
\{\tint h_n,u\}_0
=
\{\tint h_{n+1},u\}_1
\,\,,\,\,\,\,
n\in\mb Z_+
\,.
\end{equation}
In this way, 
we get the corresponding hierarchy of bi-Hamiltonian equations
$$
\frac{du}{dt_n}
=
\{\tint h_n,u\}_0
=
\{\tint h_{n+1},u\}_1
\,,\,\, 
n\in\mb Z_+,\, u\in\mc V
\,.
$$

We next show that,
given an operator 
$A(\partial)\in\Mat_{N\times N}\mc V((\partial^{-1}))$ of bi-Adler type,
the sequence of local functionals provided by Theorem \ref{thm:hn}
satisfies a generalized Lenard-Magri recurrence relation,
and therefore gives a hierarchy of bi-Hamiltonian equations.
\begin{theorem}\label{thm:bi-hn}
Let $\mc V$ be a differential algebra with two $\lambda$-brackets 
$\{\cdot\,_\lambda\,\cdot\}_0$ and $\{\cdot\,_\lambda\,\cdot\}_1$.
Let $A(\partial)\in\Mat_{N\times N}\mc V((\partial^{-1}))$
be a matrix pseudodifferential operator of bi-Adler type 
with respect to the $\lambda$-brackets $\{\cdot\,_\lambda\,\cdot\}_0$ 
and $\{\cdot\,_\lambda\,\cdot\}_1$,
and assume that $A(\partial)$ is invertible in $\Mat_{N\times N}\mc V((\partial^{-1}))$.
Let $B(\partial)\in\Mat_{N\times N}\mc V((\partial^{-1}))$
be a $K$-th root of $A$.
Then, 
the elements $h_{n,B}\in\mc V$, $n\in\mb Z_+$, given by \eqref{eq:hn}
satisfy the following generalized Lenard-Magri recurrence relation:
\begin{equation}\label{eq:LM-K}
\{\tint h_{n,B},A(z)\}_0
=
\{\tint h_{n+K,B},A(z)\}_1
=
[(B^n)_+,A](z)
\,\,,\,\,\,\,
n\in\mb Z
\,.
\end{equation}
Hence, \eqref{eq:hierarchy} is a compatible hierarchy of bi-Hamiltonian equations
on the bi-PVA $\mc V_1$ generated by the coefficients of the entries of $A(z)$.
Moreover, all the Hamiltonian functionals $\tint h_{n,C}$, $n\in\mb Z_+$, $C$ root of $A$,
are integrals of motion of all the equations of this hierarchy.
\end{theorem}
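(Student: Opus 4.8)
The plan is to deduce the recursion \eqref{eq:LM-K} from the single-bracket results of Theorem \ref{thm:hn}, and then read off the hierarchy and its conservation laws formally. Since $A(\partial)$ is of bi-Adler type it is, in particular, of Adler type with respect to $\{\cdot\,_\lambda\,\cdot\}_0$; hence Theorem \ref{thm:hn}(b), applied with the bracket $\{\cdot\,_\lambda\,\cdot\}_0$, immediately yields the outer equality
\[
\{\tint h_{n,B},A(z)\}_0=[(B^n)_+,A](z),\qquad n\in\mb Z.
\]
Everything therefore reduces to the middle equality $\{\tint h_{n+K,B},A(z)\}_1=[(B^n)_+,A](z)$.

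For this, I would exploit that Lemma \ref{lem:hn2} uses only sesquilinearity and the Leibniz rules, not the Adler property, so its first identity holds verbatim for the bracket $\{\cdot\,_\lambda\,\cdot\}_1$. Applying it to the functional $h_{n+K,B}$, and using $(n+K)-K=n$, gives
\[
\{\tint h_{n+K,B},A_{hk}(w)\}_1
=-\sum_{i,j=1}^N\Res_z\{A_{ij}(z+x)_x A_{hk}(w)\}_1\big(\big|_{x=\partial}(B^{n})_{ji}(z)\big).
\]
Into the right-hand side I would substitute the defining relation \eqref{eq:bi-adler} of bi-Adler type (with $\lambda$ specialized to $x$ and first symbol variable $z+x$), and then simplify exactly as in the passage from \eqref{eq:proofb-1} to \eqref{eq:proofb-2}, using \eqref{eq:positive}, Lemma \ref{lem:hn1}, and the identity $A=B^K$. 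The term $A_{hj}(w+x)$ of the $\delta_{ik}$-block contributes the summand $-A_{hi}(w+\partial)(B^n)_{ik}(w)_+$, the term $-A_{ik}(w)$ of the $\delta_{hj}$-block contributes $(B^n)_{hi}(w+\partial)_+A_{ik}(w)$, and the two remaining pieces, each reducing to $\pm(B^{n+K})_{hk}(w)_+$, cancel against each other. Summing over $i$ reproduces $[(B^n)_+,A]_{hk}(w)$, which is the desired equality. Conceptually this must work because \eqref{eq:bi-adler} is precisely the $\epsilon$-linear part of the Adler relation \eqref{eq:adler} for $A+\epsilon\,\id_N$, so that passing from $\{\cdot\,_\lambda\,\cdot\}_0$ with $h_{n,B}$ to $\{\cdot\,_\lambda\,\cdot\}_1$ with $h_{n+K,B}$ trades one factor of $A=B^K$ for $\id_N$ while raising the relevant power of $B$ by $K$.

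With \eqref{eq:LM-K} established, the rest is formal. The relation \eqref{eq:LM-K} is a generalized Lenard-Magri recursion (of step $K$) on the generators of $\mc V_1$, and since these generators are the coefficients of $A(z)$ and, by Theorem \ref{thm:main-bi-adler}, $\mc V_1$ is a bi-PVA, the equations \eqref{eq:hierarchy} acquire the bi-Hamiltonian form $\frac{du}{dt_{n,B}}=\{\tint h_{n,B},u\}_0=\{\tint h_{n+K,B},u\}_1$ for every $u\in\mc V_1$. Their pairwise compatibility, and the assertion that each $\tint h_{n,C}$ (for $C$ a root of $A$) is an integral of motion of the whole hierarchy, follow at once from the involutivity $\{\tint h_{m,B},\tint h_{n,C}\}_0=0$ of Theorem \ref{thm:hn}(a). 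The one step that demands genuine care is the substitution of \eqref{eq:bi-adler}: one must keep track of which occurrence of $\partial$ (in the $\iota_z$-expansions and in the shifted arguments $w+x$, $w+\lambda+\partial$) acts on $(B^n)_{ji}(z)$ and which on the remaining factors, and verify that using the power $B^n$ — rather than the $B^{n-K}$ appearing in the bracket-$0$ computation — is exactly what makes the positive-part extraction \eqref{eq:positive} collapse everything to the same commutator $[(B^n)_+,A]$.
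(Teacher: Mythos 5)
Your proposal is correct and follows essentially the same route as the paper: the outer equality comes from Theorem \ref{thm:hn}(b) for the bracket $\{\cdot\,_\lambda\,\cdot\}_0$, and the middle one from the first identity of Lemma \ref{lem:hn2} (which indeed uses only sesquilinearity and the Leibniz rules, so it applies to $\{\cdot\,_\lambda\,\cdot\}_1$) combined with the bi-Adler identity \eqref{eq:bi-adler}, \eqref{eq:positive}, \eqref{eq:hn-pr2b} and $A=B^K$; the paper merely phrases the computation as $\{\tint h_{n,B},A(w)\}_1=[(B^{n-K})_+,A](w)$ rather than your equivalent shifted version. Your bookkeeping of the four resulting terms, including the cancellation of the two pieces equal to $\pm(B^{n+K})_{hk}(w)_+$, matches the paper's \eqref{eq:bi-proof3}.
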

\begin{remark}\label{rem:negative-n-bi}
As in Remark \ref{rem:negative-n},
one can state the same Theorem \ref{thm:bi-hn} without the assumption that $A$ is invertible.
In this case we need to restrict to $K\geq1$,
and the proof of the Lenard-Magri recurrence relation \eqref{eq:LM-K}
only works for $n\geq K$.
\end{remark}
\begin{remark}\label{rem:V1-bi}
By Theorem \ref{thm:main-adler} the $\lambda$-brackets
$\{\cdot\,_\lambda\,\cdot\}_0$ and $\{\cdot\,_\lambda\,\cdot\}_1$
define a bi-PVA structure on $\mc V_1$.
The Lenard-Magri recurrence
$\{\tint h_{n,B},u\}_0=\{\tint h_{n+K,B},u\}_1$
holds for every $n\in\mb Z_+$ and $u\in\mc V_1$,
and \eqref{eq:hierarchy} uniquely extends to a hierarchy of compatible bi-Hamiltonian equations
over the bi-PVA $\mc V_1$.
However, the Lenard-Magri recurrence equation does not necessarily hold for every $u\in\mc V$
even if the two $\lambda$-brackets $\{\cdot\,_\lambda\,\cdot\}_0$ and $\{\cdot\,_\lambda\,\cdot\}_1$
of $\mc V$ are compatible PVA $\lambda$-brackets.
This phenomenon does actually occur even in the simplest example discussed in Section \ref{sec:7.1a}.
\end{remark}
\begin{remark}\label{rem:victor}
Suppose that $\mc V$ is a differential field
and $A(\partial)\in\Mat_{N\times N}\mc V((\partial^{-1}))$
is an operator of bi-Adler type
with respect to two $\lambda$-brackets $\{\cdot\,_\lambda\,\}_0$ and $\{\cdot\,_\lambda\,\}_1$.
Then $A(\partial)+\epsilon\id_N$ is again an operator of bi-Adler type,
with respect to the $\lambda$-brackets $\{\cdot\,_\lambda\,\}_0+\epsilon\{\cdot\,_\lambda\,\}_1$ 
and $\{\cdot\,_\lambda\,\}_1$.
Note that $A(\partial)+\epsilon\id_N$ is invertible for all but finitely many
values of $\epsilon\in\mb F$.
Hence, in Theorem \ref{thm:bi-hn} we do not need to require that $A(\partial)$
is invertible.
\end{remark}
\begin{proof}[Proof of Theorem \ref{thm:bi-hn}]
By Theorem \ref{thm:hn}(b), we have
\begin{equation}\label{eq:bi-proof1}
\{\tint h_{n,B},A_{hk}(w)\}_0
=
[(B^n)_+,A]_{hk}(w)
\,,\,\,n\in\mb Z_+\,.
\end{equation}
On the other hand, by the first equation in \eqref{eq:hn2} we have
\begin{equation}\label{eq:bi-proof2}
\begin{split}
& \{\tint{h_{n,B}},A_{hk}(w)\}_1
=
\{{h_{n,B}}_\lambda A_{hk}(w)\}_1\big|_{\lambda=0}
\\
& =
-\sum_{i,j=1}^N
\Res_z 
\{A_{ij}(z+x)_x A_{hk}(w)\}_1\big(\big|_{x=\partial}(B^{n-K})_{ji}(z)\big)
\,.
\end{split}
\end{equation}
Applying the bi-Adler identity \eqref{eq:bi-adler}
we can rewrite the RHS of \eqref{eq:bi-proof2} as
\begin{equation}\label{eq:bi-proof3}
\begin{split}
& -\sum_{j=1}^N
\Res_z 
\iota_z(z\!-\!w)^{-1}
\big(
A_{hj}(w+\partial)
- A_{hj}(z+\partial)
\big)
(B^{n-K})_{jk}(z)
\\
& -
\sum_{i=1}^N
\Res_z 
(B^{n-K})_{hi}(z)
\iota_z(z\!-\!w\!-\!\partial)^{-1}
\big(
(A_{ik})^*(-z)
-A_{ik}(w)
\big)
\\
&= -\sum_{j=1}^N
A_{hj}(w+\partial)
(B^{n-K})_{jk}(w)
+\sum_{i=1}^N
(B^{n-K})_{hi}(w+\partial)_+
A_{ik}(w)
\\
&= [(B^{n-K})_+,A]_{hk}(w)
\,.
\end{split}
\end{equation}
In the first equality we used equations \eqref{eq:positive}
and \eqref{eq:hn-pr2b} and the identities $A\circ B^{n-K}=B^n=B^{n-K}\circ A$.
Comparing \eqref{eq:bi-proof1} and \eqref{eq:bi-proof3},
we get \eqref{eq:LM-K}.
The last statement of the theorem is an immediate consequence of Theorem \ref{thm:hn}
and \eqref{eq:LM-K}.
\end{proof}

\subsection{A new scheme of integrability of bi-Hamiltonian PDE}\label{sec:6.3}

Based on Theorems \ref{thm:main-quasidet}, \ref{thm:hn} and \ref{thm:bi-hn},
we propose the following scheme of integrability.
Let $\mc V$ be a differential algebra with compatible PVA $\lambda$-brackets 
$\{\cdot\,_\lambda\,\cdot\}_0$ and $\{\cdot\,_\lambda\,\cdot\}_1$.
Let $S\in\Mat_{N\times N}\mb F$ 
and let $A(\partial)\in\Mat_{N\times N}\mc V((\partial^{-1}))$
be an operator of  $S$-Adler type with respect to the
$\lambda$-brackets $\{\cdot\,_\lambda\,\cdot\}_0$ and $\{\cdot\,_\lambda\,\cdot\}_1$.
Assume (without loss of generality) that the differential algebra $\mc V$
is generated by the coefficients of $A(\partial)$.
Then, we obtain an integrable hierarchy of bi-Hamiltonian equations
as follows:
\begin{enumerate}[1.]
\item
consider the canonical factorization $S=IJ$,
where $J:\,\mb F^N\twoheadrightarrow\im(S)$
and $I:\,\im(S)\hookrightarrow\mb F^N$;
\item
assume that the $(I,J)$-quasideterminant $|A|_{IJ}(\partial)$ exists;
then, by Theorem \ref{thm:main-quasidet} and Proposition \ref{thm:quasidet-adler}
$|A|_{IJ}$ is an $M\times M$ matrix pseudodifferential operator
(where $M=\dim\im(S)$) of bi-Adler type
with respect to the same $\lambda$-brackets 
$\{\cdot\,_\lambda\,\cdot\}_0$ and $\{\cdot\,_\lambda\,\cdot\}_1$;
\item
consider the family of local functionals
$\{\tint h_{n,B}\,|\,n\in\mb Z,\,B \text{ a } K\text{-th root of } |A|_{IJ}\}$
given by \eqref{eq:hn};
then, by Theorem \ref{thm:bi-hn},
they are all Hamiltonian functionals in involution with respect to both PVA $\lambda$-brackets
$\{\cdot\,_\lambda\,\cdot\}_0$ and $\{\cdot\,_\lambda\,\cdot\}_1$,
and they satisfy the Lenard-Magri recurrence relation \eqref{eq:LM-K};
\item
we thus get an integrable hierarchy of bi-Hamiltonian equations
\begin{equation}\label{eq:bi-hierarchy}
\frac{du}{dt_{n,B}}
=
\{\tint h_{n,B},u\}_0
=
\{\tint h_{n+K,B},u\}_1
\,,
\end{equation}
provided that the $\tint h_{n,B}$ span an infinite dimensional space.
\end{enumerate}
In Section \ref{sec:7}
we will show how to implement this scheme to the affine 
bi-PVA's for $\mf{gl}_N$ defined in Example \ref{ex:affine-pva} 
and the corresponding operators of bi-Adler type in Example \ref{ex:affine-adler}.

In forthcoming publications we plan to implement the same scheme
to construct integrable hierarchies associated to the classical $\mc W$-algebras
$\mc W(\mf g,f)$ for classical Lie algebras $\mf g$ 
and arbitrary nilpotent elements $f\in\mf g$.

\begin{remark}\label{rem:casimir}
Consider the space $C_1$ of Casimir functionals with respect 
to the $\lambda$-bracket $\{\cdot\,_\lambda\,\cdot\}_1$:
\begin{equation}\label{eq:casimir}
C_1=
\big\{\tint h\in\mc V/\partial\mc V\,\big|\, \{\tint h,\mc V\}_1=0 \big\}
\,.
\end{equation}
Note that, if $A(\partial)$ is an operator of bi-Adler type,
then $\tint h_{1,A}=\tint\Res_z\tr A(z)$ is a Casimir functional
(this follows by Theorem \ref{thm:bi-hn}, or it can be easily checked directly).
By Theorem \ref{thm:bi-hn},
for this Casimir element the Lenard-Magri recurrence scheme \eqref{eq:LM}
is infinite, namely it admits an infinite sequence $\{\tint h_n\}_{n\in\mb Z_+}$
of solutions, starting from $\tint h_0=\tint h_{1,A}$.
We conjecture that,
if $A(\partial)$ is a matrix pseudodifferential operator of bi-Adler type,
then the Lenard-Magri scheme \eqref{eq:LM} is infinite
for each Casimir functional $\tint h_0\in C_1$.
Note that, by \cite[Prop.2.10(c)]{BDSK09} all the resulting Hamiltonian functionals are in involution.
\end{remark}
\begin{remark}\label{rem:casimir2}
Sometimes $A(\partial)$ is of bi-Adler type with respect to a certain bi-PVA $\mc V$,
and the differential subalgebra $\mc V_1\subset\mc V$ generated by its coefficients
is a proper bi-PVA subalgebra.
In such case it may happen, as the example discussed in Section \ref{sec:7.1a} shows,
that $\tint h_{1,A}=\tint\Res_z\tr A(z)$ is a Casimir functional
with respect to the $\lambda$-bracket $\{\cdot\,_\lambda\,\cdot\}_1$ on $\mc V_1$
but not on $\mc V$.
\end{remark}

\section{
Example: bi-Hamiltonian hierarchies for the affine PVA
\texorpdfstring{$\mc V_S(\mf{gl}_2)$}{V\_S(gl\_2)}}
\label{sec:7}

Consider the affine PVA $\mc V=\mc V_S(\mf{gl}_2)$ from Example \ref{ex:affine-pva}
and the matrix differential operator of Adler type
$A(\partial)+S^t$, where $A(\partial)=\partial\id_2+
\left(\begin{array}{ll}
q_{11}&q_{21}\\
q_{12}&q_{22}
\end{array}\right)$, from Example \ref{ex:affine-adler}.
We shall study the bi-Hamiltonian hierarchies associated to this operator of $S$-Adler type,
for various choices of $S\in\Mat_{2\times2}\mb F$,
following the general method discussed in Section \ref{sec:6}.

Note that if $S$ is non-degenerate,
then its canonical decomposition is $S=S\id_2$
and the corresponding $(S\id_2)$-quasideterminant of $A(\partial)$ is
$|A(\partial)|_{S\id_2}=S^{-1}A(\partial)$, which is a differential operator of order $1$.
Hence, the residue of any power of it is zero.
Also, by the results of Appendix \ref{sec:app},
taking square roots of $A(\partial)$ only seem to lead to trivial hierarchies as well
(cf. Remark \ref{rem:roots1}).

Hence, we only need to consider the case when $S$ has rank $1$.
In Section \ref{sec:7.1a} we consider the case when $S$ is diagonalizable of rank $1$,
while in Section \ref{sec:7.1b} we consider the case when $S$ is nilpotent of rank $1$.

\subsection{First case: \texorpdfstring{$S=\diag(s,0)$}{S=diag(s,0)}, $s\in\mb F\backslash\{0\}$}
\label{sec:7.1a}

We know that the family of Adler type operators $A(\partial)+\epsilon S$, $\epsilon\in\mb F\backslash\{0\}$,
defines two compatible PVA $\lambda$-brackets 
$\{\cdot\,_\lambda\,\cdot\}_0$ and $\{\cdot\,_\lambda\,\cdot\}_1$ on $\mc V$,
given by (cf. \eqref{eq:affine-lb})
\begin{equation}\label{20150625:eq1}
\{{q_{ij}}_{\lambda}q_{hk}\}_0
=
\delta_{jh}q_{ik}-\delta_{ki}q_{hj}+\delta_{jh}\delta_{ik}\lambda
\,,\,\,
i,j,h,k\in\{1,2\}
\,,
\end{equation}
and the only non-zero $\lambda$-brackets among generators for $\{\cdot\,_\lambda\,\cdot\}_1$ are
\begin{equation}\label{20150625:eq1b}
\{{q_{12}}_{\lambda}q_{21}\}_1=s
\,\,,\,\,\,\,
\{{q_{21}}_{\lambda}q_{12}\}_1=-s
\,.
\end{equation}
Let us consider the canonical factorization $B=IJ$ where
$I=\left(\begin{array}{l}
1\\0
\end{array}\right)$
and
$J=\left(\begin{array}{ll}s&0\end{array}\right)$, and define
\begin{equation}\label{eq:gl2-diag-L}
L(\partial)=|A(\partial)|_{IJ}=\frac1s|A(\partial)|_{11}
=\frac1s\left(\partial+q_{11}-q_{21}(\partial+q_{22})^{-1}q_{12}
\right)
\,.
\end{equation}
Let $\mc V_1\subset\mc V$ be the differential subalgebra generated by the coefficients of $L(\partial)$.
By Thorem \ref{thm:main-bi-adler} it is a bi-PVA subalgebra of $\mc V$.
According to the general method discussed in Section \ref{sec:6.3},
$L(\partial)$ is a pseudodifferential operator of bi-Adler type
with respect to the bi-PVA $\lambda$-brackets \eqref{20150625:eq1} and \eqref{20150625:eq1b}.
Hence, the local functionals $\tint h_0=0$,
$\tint h_n=\frac{-1}{n}\tint\Res_zL^n(z)$, $n\geq1$,
are in involution with respect to both PVA $\lambda$-brackets 
$\{\cdot\,_\lambda\,\cdot\}_0$ and $\{\cdot\,_\lambda\,\cdot\}_1$,
and they satisfy the Lenard-Magri recurrence relation
\begin{equation}\label{eq:LM-gl2}
\{\tint h_n,u\}_0=\{\tint h_{n+1},u\}_1
\,\,,\,\,\,\, n\geq0\,,\,\, u\in\mc V_1
\,.
\end{equation}
We thus get the corresponding integrable hierarchy of Hamiltonian equations
\begin{equation}\label{eq:hier-gl2}
\frac{du}{dt_n}=\{\tint h_n,u\}_0
\,\,,\,\,\,\, n\in\mb Z_+\,,\,\, u\in\mc V
\,,
\end{equation}
which is a bi-Hamiltonian hierarchy when restricted to $u\in\mc V_1$.

For example, from \eqref{eq:gl2-diag-L} we get the first Hamiltonian functional of the sequence:
$$
\tint h_{1}=-\tint \res_zL(z)=\int\frac1sq_{12}q_{21}
\,.
$$
Note that this is a Casimir functional for the first Poisson structure on $\mc V_1$
but not on the whole $\mc V$ (cf. Remark \ref{rem:casimir2}).
By a straightforward computation we get
\begin{equation*}
\begin{split}
L^2(\partial)
&=
\frac1{s^2}\Big(
\partial^2+2q_{11}\partial+q_{11}'+q_{11}^2-2q_{12}q_{21}
-\left(q_{21}'+q_{21}(q_{11}\!-\!q_{22})\right)(\partial+q_{22})^{-1}q_{12}
\\
&-q_{21}(\partial+q_{22})^{-1}\left(-q_{12}'+q_{12}(q_{11}-q_{22})\right)
+\left(q_{21}(\partial+q_{22})^{-1}q_{12}\right)^2
\Big)
\,,
\end{split}
\end{equation*}
and taking its residue we get the second Hamiltonian functional of the sequence:
$$
\tint h_{2}=-\frac12\tint \res_zL(z)=\int\frac1{s^2}
\left(-q_{12}'q_{21}+q_{12}q_{21}(q_{11}-q_{22})\right)
\,.
$$
The corresponding first two equations of the hierarchy are
\begin{equation*}
\begin{split}
&\frac{d q_{12}}{dt_{1}}
=\frac{1}{s}\left(q_{12}^\prime-q_{12}(q_{11}-q_{22})\right)
\,,
\\
&\frac{d q_{21}}{dt_{1}}
=\frac{1}{s}\left(q_{21}^\prime+q_{21}(q_{11}-q_{22})\right)
\,,
\\
&\frac{d q_{12}}{dt_{2}}
=\frac{1}{s}\left(q_{12}^{\prime\prime}
-2q_{12}^\prime(q_{11}-q_{22})-q_{12}(q_{11}^\prime-q_{22}^\prime)+q_{12}(q_{11}-q_{22})^2
-2q_{12}^2q_{21}\right)
\,,
\\
&\frac{d q_{21}}{dt_{2}}
=\frac{1}{s}\left(-q_{21}^{\prime\prime}
-2q_{21}^\prime(q_{11}-q_{22})-q_{21}(q_{11}^\prime-q_{22}^\prime)-q_{21}(q_{11}-q_{22})^2
+2q_{12}q_{21}^2\right)
\,,
\end{split}
\end{equation*}
and $\frac{dq_{11}}{dt_{n}}=\frac{dq_{22}}{dt_{n}}=0$, for $n=1,2$.

Note that the Lenard-Magri recurrence condition \eqref{eq:LM-gl2}
holds for $u\in\mc V_1$ but it fails in general for $u\in\mc V$,
cf. Remark \ref{rem:V1-bi}.
For example, for $n=0$ the LHS of \eqref{eq:LM-gl2} is zero for every $u$,
while the RHS of \eqref{eq:LM-gl2} is not: for $u=q_{21}$ it is equal to $q_{21}$.
On the other hand, one can check that 
that this hierarchy coincides with the homogeneous Drinfeld-Sokolov hierarchy 
for $\mf{gl}_2$ associated to $S=\diag(s,0)$ and the diagonal matrix $a=\diag(1,0)$,
cf. e.g. \cite{DSKV13}.
In particular, this failure of the Lenard Magri scheme on $\mc V$ only occurs for $n=0$.

\subsection{Second case: \texorpdfstring{$S$}{S} nilpotent}
\label{sec:7.1b}

Consider the case when $S^t=\left(\begin{array}{ll}0&-1\\0&0\end{array}\right)$.
As before, 
the family of Adler type operators $A(\partial)+\epsilon S^t$, $\epsilon\in\mb F\backslash\{0\}$,
defines the compatible PVA $\lambda$-brackets 
$\{\cdot\,_\lambda\,\cdot\}_0$, given by \eqref{20150625:eq1},
and $\{\cdot\,_\lambda\,\cdot\}_1$ on $\mc V$,
whose only non-zero $\lambda$-brackets among generators are
\begin{equation}\label{20150625:eq1c}
\{{q_{11}}_{\lambda}q_{21}\}_1
=
\{{q_{12}}_{\lambda}q_{22}\}_1
=1
\,\,,\,\,\,\,
\{{q_{21}}_{\lambda}q_{11}\}_1
=
\{{q_{22}}_{\lambda}q_{21}\}_1
=-1
\,.
\end{equation}
Let us consider the canonical factorization $S^t=IJ$ where
$I=\begin{pmatrix}
1\\0
\end{pmatrix}$
and
$J=\begin{pmatrix}
0&-1
\end{pmatrix}$, and 
consider the corresponding $(IJ)$-quasideterminant of $A(\partial)$:
\begin{equation}\label{eq:Lnilp}
\begin{split}
& L(\partial)=|A(\partial)|_{IJ}
=(\partial+q_{11})q_{12}^{-1}(\partial+q_{22})-q_{21}
\\
& =\frac1{q_{12}}\partial^2
+\Big(\frac{q_{11}+q_{22}}{q_{12}}-\frac{q_{12}^\prime}{q_{12}^2}\Big)\partial
+\Big(\frac{q_{22}}{q_{12}}\Big)^\prime+\frac{q_{11}q_{22}}{q_{12}}-q_{21}
\,.
\end{split}
\end{equation}
This is a scalar differential operator with coefficients in a differential algebra $\mc V$
containing the inverse of $q_{12}$.

Since $L(\partial)$ is a differential operator,
the residue of any its power is zero.
On the oher hand, assuming that the differential algebra $\mc V$ contains the square root
of the element $q_{12}$,
we can consider the square root $B(\partial)\in\mc V((\partial^{-1}))$ of $L(\partial)$,
and use the method described in Section \ref{sec:6.3} to construct 
a bi-Hamiltonian integrable hierarchy for the two Poisson structures
associated to the PVA $\lambda$-brackets \eqref{20150625:eq1} and \eqref{20150625:eq1c}.

The first few coefficients of the operator
$B(\partial)=b_1\partial+b_0+b_{-1}\partial^{-1}+\dots$ are
\begin{align*}
&
b_1=\frac{1}{\sqrt{q_{12}}}\,,
\qquad
b_0=\frac{1}{2\sqrt{q_{12}}}(q_{11}+q_{22})-\frac{q_{12}'}{4\sqrt{q_{12}^3}}\,,
\\
&
b_{-1}=-\frac{1}{2\sqrt{1}}q_{21}\sqrt{q_{12}}
-\frac{1}{8\sqrt{q_{12}}}\left((q_{11}-q_{22})^2+2(q_{11}'-q_{22}')\right)\\
&
+\frac{1}{8\sqrt{q_{12}^3}}\left(2(q_{11}-q_{22})q_{12}'+q_{12}''\right)
-\frac{7}{32\sqrt{q_{12}^5}}(q_{12}')^2
\,.
\end{align*}
In particular, the first conserved density is
$h_{1,B}=-2\res_zB(z)=-2 b_{-1}$
and the first non-trivial equation of the hierarchy,
associated to the corresponding Hamiltonian functional, is
\begin{equation*}
\begin{split}
& 
\frac{dq_{12}}{dt_{1,B}}=\frac{d(q_{11}+q_{22})}{dt_{1,B}}=0
\\
& 
\frac{d(q_{11}-q_{22})}{dt_{1,B}}
=q_{21}\sqrt{q_{12}}
+\frac{1}{4\sqrt{q_{12}}}\left((q_{11}-q_{22})^2+2(q_{11}'-q_{22}')\right)
\\
&\qquad
-\frac{1}{4\sqrt{q_{12}^3}}\left(q_{12}''+2(q_{11}-q_{22})q_{12}'\right)
+\frac{7}{16\sqrt{q_{12}^5}}(q_{12}')^2\,,
\\
&
\frac{dq_{21}}{dt_{1,B}}
=\frac{1}{2\sqrt{q_{12}}}\left(q_{21}'-q_{21}(q_{11}-q_{22})\right)
\\
&\qquad
+\frac{1}{8\sqrt{q_{12}^3}}\left(2q_{12}'q_{21}-(q_{11}-q_{22})^3+2(q_{11}''-q_{22}'')-q_{12}'''\right)
\\
&\qquad
+\frac{1}{16\sqrt{q_{12}^5}}\left(3(q_{11}-q_{22})^2q_{12}'-6(q_{11}'-q_{22}')q_{12}'-2(q_{11}-q_{22})q_{12}''\right)
\\
&\qquad
+\frac{5}{32\sqrt{q_{12}^7}}\left((q_{11}-q_{22})(q_{12}')^2+4q_{12}'q_{12}''\right)
-\frac{35}{64\sqrt{q_{12}^9}}(q_{12}')^3
\,.
\end{split}
\end{equation*}

Note that, if we Dirac reduce (cf. \cite{DSKV14}) by the constraint $q_{12}=1$ and $q_{11}+q_{22}=0$
and set $w=-(q_{21}+\frac12 (q_{11}-q_{22})'+\frac14 (q_{11}-q_{22})^2)$,
the above equation gives
$\frac{dw}{dt_{1,B}}=w'$.
We omit the, rather cumbersome, computation of the next equation of the hierarchy,
but it is possible to check that,
under the same Dirac reduction and substitution,
we get the KdV equation 
$\frac{dw}{dt_{3,B}}=\frac{1}{3}w'''+2ww'$.
This is not a coincidence: as we will show in \cite{DSKVfuture},
the Drinfeld-Sokolov hierarchy for the affine $\mc W$-algebra
$\mc W(\mf{gl}_N,f)$, associated to a nilpotent element $f\in\mf{gl}_N$,
can be obtained, via a limiting procedure, by Dirac reduction of the 
bi-Hamiltonian hierarchy for the affine bi-PVA $\mc V_f(\mf{gl}_N)$.

\section{Dispersionless Adler type Laurent series and 
the corresponding dispersionless integrable hierarchies}
\label{sec:8}

\subsection{Quasi-classical limit of the algebra of pseudodifferential operators}
\label{sec:8.1}

Consider the algebra $\mc V((\partial^{-1}))$ of pseudodifferential operators
over the differential algebra $\mc V$.
Recall that the product $\circ$ on $\mc V((\partial^{-1}))$ is given,
in terms of the corresponding symbols, by \eqref{eq:prod-symbol}.
Let us consider the family of associative algebras $\mc V_\hbar((z^{-1}))$, $\hbar\in\mb F$,
whose underlying vector space is $\mc V((z^{-1}))$,
and the associative product is defined by the following deformation of \eqref{eq:prod-symbol}:
$$
(A\circ_{\hbar} B)(z)=A(z+\hbar\partial)B(z)\,,
\qquad A(z),B(z)\in\mc V((z^{-1}))
\,.
$$
We denote by $[\cdot\,,\,\cdot]_\hbar$ the commutator of this associative product.
Recall that (cf. \cite{LM79}),
the \emph{quasi-classical} limit $\mc V^{\text{qc}}((z^{-1}))$
of the family of associative algebras $\{\mc V_{\hbar}((z^{-1}))\}_{\hbar\in\mb F}$ is
the Poisson algebra defined as  the commutative associative algebra $\mc V((z^{-1}))$,
with the usual product of Laurent series,
endowed with the Poisson bracket
\begin{equation}\label{eq:quasi_symbol}
\{A(z),B(z)\}_{\text{qc}}=
\frac{d}{d\hbar}[A(z),B(z)]_\hbar\,\Big|_{\hbar=0}
=(\partial_z A(z))\partial B(z)-(\partial_z B(z))\partial A(z)
\,.
\end{equation}

\subsection{Dispersionless limit of a family of \texorpdfstring{$\lambda$}{lambda}-brackets and of an Adler type operator}
\label{sec:8.2}

Suppose that $A(\partial)\in\Mat_{M\times N}\mc V((\partial^{-1}))$ is a matrix pseudodifferential operator 
of $\hbar$-\emph{Adler type},
meaning that, for each $\hbar\in\mb F$ there exists a $\lambda$-bracket 
$\{\cdot\,_\lambda\,\cdot\}_{\hbar}$ on $\mc V$ such that
\begin{equation}\label{eq:eps-adler}
\begin{split}
\{A_{ij}(z)_\lambda A_{hk}(w)\}_\hbar
& =
A_{hj}(w+\hbar\lambda+\hbar\partial)
\iota_z(z\!-\!w\!-\!\hbar\lambda\!-\!\hbar\partial)^{-1}\big|_{x=\partial}A_{ik}(z-\hbar\lambda-\hbar x)
\\
& - A_{hj}(z)\iota_z(z-w-\hbar\lambda-\hbar\partial)^{-1}A_{ik}(w)
\,.
\end{split}
\end{equation}
We will be mainly interested in the scalar case $M=N=1$,
when the $\hbar$-Adler identity \eqref{eq:eps-adler} 
for the Laurent series $A(z)\in\mc V((z^{-1}))$ reads
\begin{equation}\label{eq:eps-scalar}
\begin{split}
\{A(z)_\lambda A(w)\}_\hbar
& = A(w+\hbar\lambda+\hbar\partial)
\iota_z(z-w-\hbar\lambda-\hbar\partial)^{-1}\big|_{x=\partial}A(z-\hbar\lambda-\hbar x)
\\
& - A(z)\iota_z(z-w-\hbar\lambda-\hbar\partial)^{-1}A(w)
\,.
\end{split}
\end{equation}
Note that \eqref{eq:eps-adler} with $\hbar=1$
says that $A(\partial)$ is of Adler type with respect to the $\lambda$-bracket 
$\{\cdot\,_\lambda\,\cdot\}_{\hbar=1}$.
Note also that the RHS of \eqref{eq:eps-scalar} is $0$ for $\hbar=0$.
We define the \emph{dispersionless limit}
of the family of $\lambda$-brackets $\{\cdot\,_\lambda\,\cdot\}_{\hbar}$ as
\begin{equation}\label{eq:disp-lambda}
\{\cdot\,_\lambda\,\cdot\}_{\text{disp}}
=
\frac{d}{d\hbar}\{\cdot\,_\lambda\,\cdot\}_{\hbar}\big|_{\hbar=0}
\,.
\end{equation}
Then, taking the derivative of both sides of \eqref{eq:eps-adler} with respect to $\hbar$
and letting $\hbar=0$,
we get that $A(z)$ satisfies the following \emph{dispersionless Adler} identity
with respect to the $\lambda$-bracket $\{\cdot\,_\lambda\,\cdot\}_{\text{disp}}$:
\begin{equation}\label{eq:disp-adler}
\begin{split}
& \{A_{ij}(z)_\lambda A_{hk}(w)\}_{\text{disp}}
= 
\iota_z(z-w)^{-1}
(\partial_w-\partial_z)\big(
A_{hj}(w) (\lambda+\partial) A_{ik}(z)
\big) \\
& +
\iota_z(z-w)^{-2}
\big(
A_{hj}(w)(\lambda+\partial)A_{ik}(z)
-
A_{hj}(z)(\lambda+\partial)A_{ik}(w)
\big)
\,.
\end{split}
\end{equation}
In the scalar case $M=N=1$, the dispersionless Adler identity 
for a Laurent series $A(z)\in\mc V((z^{-1}))$ reads
\begin{equation}\label{eq:quasi-adler}
\begin{split}
& \{A(z)_\lambda A(w)\}_{\text{disp}}
= 
\iota_z(z-w)^{-1}
(\partial_w-\partial_z)\big(
A(w) (\lambda+\partial) A(z)
\big) \\
& +
\iota_z(z-w)^{-2}
\big(
A(w)\partial A(z)
-
A(z)\partial A(w)
\big)
\,.
\end{split}
\end{equation}

\subsection{Dispersionless Adler type Laurent series and PVA's}
\label{sec:8.3}

\begin{definition}\label{def:disp-adler}
A Laurent series $A(z)\in\Mat_{N\times N}\mc V((z^{-1}))$ is called of \emph{dispersionless Adler type}
with respect to a $\lambda$-bracket $\{\cdot\,_\lambda\,\cdot\}_{\text{disp}}$
on the differential algebra $\mc V$ if equation \eqref{eq:disp-adler} holds.
Moreover, we say that $A(z)$ is of \emph{dispersionless bi-Adler type}
with respect to a pencil of $\lambda$-brackets 
$\{\cdot\,_\lambda\,\cdot\}_\epsilon=\{\cdot\,_\lambda\,\cdot\}_0+\epsilon\{\cdot\,_\lambda\,\cdot\}_1$
on $\mc V$,
if $A(\partial)+\epsilon\id_N$ is of dispersionless Adler type
with respect to $\{\cdot\,_\lambda\,\cdot\}_\epsilon$ for every $\epsilon\in\mb F$,
or, equivalently, if $A(z)$ is of dispersionless Adler type
with respect to $\{\cdot\,_\lambda\,\cdot\}_0$, and one has
\begin{equation}\label{eq:disp-bi-adler}
\begin{split}
& \{A_{ij}(z)_\lambda A_{hk}(w)\}_1
= 
\iota_z(z-w)^{-1}
\big(
\delta_{ik}\partial_w A_{hj}(w)\lambda-\delta_{hj}(\lambda+\partial)\partial_zA_{ik}(z)
\big) \\
& +
\iota_z(z-w)^{-2}
\big(
\delta_{ik}(A_{hj}(w)-A_{hj}(z))\lambda-\delta_{hj}(\lambda+\partial)(A_{ik}(w)-A_{ik}(z))
\big)
\,.
\end{split}
\end{equation}
\end{definition}
\begin{remark}\label{rem:disp}
By the arguments in Section \ref{sec:8.2}, if $A(\partial)\in\Mat_{M\times N}\mc V((\partial^{-1}))$ 
is an operator of $\hbar$-Adler type
with respect to a family of $\lambda$-brackets on $\mc V$, in the sense of \eqref{eq:eps-adler},
then its symbol $A(z)\in\Mat_{M\times N}\mc V((z^{-1}))$ is of dispersionless Adler type with respect
to the dispersionless limit $\lambda$-bracket \eqref{eq:disp-lambda}.
In fact, for a scalar Laurent series $A(z)\in\mc V((z^{-1}))$ 
the dispersionless Adler identity \eqref{eq:quasi-adler}
with respect to a $\lambda$-bracket $\{\cdot\,_\lambda\,\cdot\}_{\text{disp}}$ on $\mc V$
is equivalent to the condition that the $\hbar$-Adler identity \eqref{eq:eps-scalar}
holds modulo $\hbar^2$,
with respect to the $\lambda$-bracket 
$\{\cdot\,_\lambda\,\cdot\}_{\hbar}=\hbar\{\cdot\,_\lambda\,\cdot\}_{\text{disp}}$ 
on $\mc V$, extended by $\mb F[[\hbar]]$-linearity 
to the algebra of formal power series $\mc V[[\hbar]]$.
(Here we are using the fact that, 
for a scalar operator, the RHS of \eqref{eq:eps-scalar} is zero for $\hbar=0$.)
\end{remark}

The relation between dispersionless Adler type (scalar) Laurent series and Poisson vertex algebras
is given by the following result, which is analogous to Theorem \ref{thm:main-adler}.
\begin{theorem}\label{thm:main_quasi}
Let $A(z)\in\mc V((z^{-1}))$ be a Laurent series of dispersionless Adler type,
with respect to a $\lambda$-bracket $\{\cdot\,_\lambda\,\cdot\}$
on $\mc V$, and let $\mc V_1\subset\mc V$ be the differential subalgebra generated 
by the coefficients of $A(z)$.
Then $\{\cdot\,_\lambda\,\cdot\}$ is a PVA $\lambda$-bracket on $\mc V_1$.

If, moreover, $A(z)\in\mc V((z^{-1}))$ is of dispersionless bi-Adler type,
with respect to a pencil of $\lambda$-brackets 
$\{\cdot\,_\lambda\,\cdot\}_\epsilon=\{\cdot\,_\lambda\,\cdot\}_0+\epsilon\{\cdot\,_\lambda\,\cdot\}_1$,
$\epsilon\in\mb F$, on $\mc V$, 
then $\{\cdot\,_\lambda\,\cdot\}_\epsilon$, $\epsilon\in\mb F$ is a 
pencil of PVA $\lambda$-brackets on $\mc V_1$.
\end{theorem}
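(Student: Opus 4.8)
The plan is to deduce the dispersionless statements from the ordinary Adler case, Theorem~\ref{thm:main-adler}, by reinstating the deformation parameter $\hbar$. By the criterion recalled in Section~\ref{sec:2.0}, it suffices to verify the skewsymmetry and Jacobi identity axioms on a generating set of $\mc V_1$, namely on the coefficients of $A(z)$, equivalently as identities for the generating series $\{A(z)_\lambda A(w)\}$ determined by \eqref{eq:quasi-adler}. The skewsymmetry axiom can be verified directly from \eqref{eq:quasi-adler} by a short computation, using the standard manipulation relating the $\iota_z$ and $\iota_w$ expansions; so the only substantial axiom is the Jacobi identity.

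For the reduction I would follow Remark~\ref{rem:disp} and set $\{\cdot\,_\lambda\,\cdot\}_\hbar:=\hbar\{\cdot\,_\lambda\,\cdot\}$, so that $A(z)$ satisfies the $\hbar$-Adler identity \eqref{eq:eps-scalar} modulo $\hbar^2$. The key algebraic observation is that, for $\hbar\neq0$, the $\hbar$-Adler identity is \emph{exactly} the ordinary scalar Adler identity \eqref{eq:adler-scalar} written in the differential algebra $(\mc V,\hbar\partial)$, whose pseudodifferential product is $\circ_\hbar$: introducing $\mu:=\hbar\lambda$ and the rescaled bracket $\langle a_\mu b\rangle:=\{a_{\mu/\hbar}\,b\}_\hbar$, one checks that $\langle\cdot\,_\mu\,\cdot\rangle$ is $\hbar\partial$-sesquilinear and that \eqref{eq:eps-scalar} becomes \eqref{eq:adler-scalar} over $(\mc V,\hbar\partial)$, with the $\hbar\partial$-adjoint in place of the usual adjoint. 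Were the $\hbar$-Adler identity to hold exactly, Theorem~\ref{thm:main-adler} applied in $(\mc V_1,\hbar\partial)$ would give that $\langle\cdot\,_\mu\,\cdot\rangle$, hence $\{\cdot\,_\lambda\,\cdot\}_\hbar$, satisfies the PVA axioms, since these are invariant under the linear rescaling $\partial\mapsto\hbar\partial$, $\lambda\mapsto\hbar\lambda$.

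The main obstacle is precisely that I only control the $\hbar$-Adler identity modulo $\hbar^2$, so I cannot fix a nonzero value of $\hbar$ and invoke Theorem~\ref{thm:main-adler} outright. I would instead run the formal proof of that theorem order by order in $\hbar$. Since $\{\cdot\,_\lambda\,\cdot\}_\hbar=\hbar\{\cdot\,_\lambda\,\cdot\}$ is exactly linear in $\hbar$, the skewsymmetry axiom for $\{\cdot\,_\lambda\,\cdot\}_\hbar$ is $\hbar$ times that for $\{\cdot\,_\lambda\,\cdot\}$, and each of the three terms of the Jacobi expression for $\{\cdot\,_\lambda\,\cdot\}_\hbar$ is exactly $\hbar^2$ times the corresponding term for $\{\cdot\,_\lambda\,\cdot\}$; thus the dispersionless axioms are the coefficients of $\hbar$ and $\hbar^2$ in the $\hbar$-deformed ones. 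In the derivation of Jacobi one substitutes the Adler identity twice, but each outer bracket is applied to an inner bracket that already vanishes at $\hbar=0$, and the $O(\hbar^2)$ discrepancy between $\{\cdot\,_\lambda\,\cdot\}_\hbar$ and the exact Adler right-hand side, once fed through an outer bracket carrying a factor $\hbar$, contributes only at order $\hbar^3$. Hence the $\hbar^2$-component of the (identically vanishing) Adler-to-Jacobi computation depends solely on the order-$\hbar$, i.e.\ dispersionless, data and yields the Jacobi identity for $\{\cdot\,_\lambda\,\cdot\}$. Checking that these error terms are genuinely harmless at orders $\hbar$ and $\hbar^2$ is the crux; alternatively, one can bypass $\hbar$ entirely and reprove Theorem~\ref{thm:main-adler} verbatim in the quasi-classical setting, replacing the product $\circ$ by the Poisson product \eqref{eq:quasi_symbol} and the Adler identity by \eqref{eq:quasi-adler}.

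Finally, the dispersionless bi-Adler statement is an immediate corollary of the first part. By Definition~\ref{def:disp-adler}, for every $\epsilon\in\mb F$ the series $A(z)+\epsilon$ is of dispersionless Adler type with respect to $\{\cdot\,_\lambda\,\cdot\}_\epsilon$; since $\epsilon$ is a constant, $A(z)+\epsilon$ and $A(z)$ generate the same differential subalgebra $\mc V_1$. Applying the first part of the theorem to $A(z)+\epsilon$ shows that $\{\cdot\,_\lambda\,\cdot\}_\epsilon$ is a PVA $\lambda$-bracket on $\mc V_1$ for each $\epsilon$, which is exactly the assertion that $\{\cdot\,_\lambda\,\cdot\}_\epsilon$ is a pencil of PVA $\lambda$-brackets on $\mc V_1$.
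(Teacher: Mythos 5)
Your proposal is correct and follows essentially the same route as the paper: both reduce, via Remark \ref{rem:disp}, to the $\hbar$-Adler identity modulo $\hbar^2$ for the bracket $\hbar\{\cdot\,_\lambda\,\cdot\}$, then rerun the proof of Theorem \ref{thm:main-adler} order by order in $\hbar$ to get skewsymmetry modulo $\hbar^2$ and Jacobi modulo $\hbar^3$, extracting the dispersionless axioms from the coefficients of $\hbar$ and $\hbar^2$. Your treatment of the bi-Adler statement (applying the first part to $A(z)+\epsilon$ for each $\epsilon$) is a slightly more explicit version of the paper's ``the arguments for the second statement are the same,'' and is fine.
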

\begin{proof}
Of course one can prove Theorem \ref{thm:main_quasi}
by a straightforward but long computation.
Instead, we shall use a different argument.
The same proof of Theorem \ref{thm:main-adler}
(cf. the proofs of \cite[Lem.2.2 \& Lem.2.5]{DSKV15a})
can be adapted to prove that, if $A(\partial)\in\mc V((\partial^{-1}))$
satisfies the $\hbar$-Adler identity \eqref{eq:eps-scalar}
with respect to a family of $\lambda$-brackets
$\{\cdot\,_\lambda\,\cdot\}_\hbar$, $\hbar\in\mb F$,
then the skewsymmetry and Jacobi identity axioms hold
for these $\lambda$-brackets for every $\hbar\in\mb F$.
Moreover, still the same proof
shows that, if the $\lambda$-bracket $\{\cdot\,_\lambda\,\cdot\}_\hbar$
has values in $\hbar\mc V[[\hbar]]$
and $A(\partial)\in\mc V((\partial^{-1}))$
satisfies the $\hbar$-Adler identity \eqref{eq:eps-scalar}
modulo $\hbar^2$,
then the skewsymmetry axiom holds modulo $\hbar^2$:
$$
\{A(z)_\lambda A(w)\}_\hbar
+\{A(w)_{-\lambda-\partial}A(z)\}_\hbar
\equiv 0\mod(\hbar^2)
\,,
$$
and the Jacobi identity holds modulo $\hbar^3$:
\begin{equation*}
\begin{split}
& \{A(z_1)_\lambda \{A(z_2)_\mu A(z_3)\}_\hbar\}_\hbar
-\{A(z_2)_\mu \{A(z_1)_\lambda A(z_3)\}_\hbar\}_\hbar \\
& -\{{\{A(z_1)_\lambda A(z_2)\}_{\hbar}}_{\lambda+\mu} A(z_3)\}_\hbar
\equiv 0\mod(\epsilon^3)
\,.
\end{split}
\end{equation*}
The claim follows from Remark \ref{rem:disp}.
The arguments for the second statement are the same.
\end{proof}
\begin{example}\label{ex:generic-adler}
Consider the symbol of the generic matrix pseudodifferential (resp. differential) operators 
from Example \ref{ex:affine-generic}:
\begin{equation}\label{eq:generic-symbol}
L_{MN}(z)
=
\sum_{j=-\infty}^M U_j z^j
\qquad\Big(\text{resp. }\,\,
L_{(MN)}(z)
=
\sum_{j=0}^M U_j z^j
\Big)
\,\,\text{ with }\,\,
U_M=\id_N
\,,
\end{equation}
where $U_j=\big(u_{j;ab}\big)_{a,b=1}^N\in\Mat_{N\times N}\mc V$ for every $j<M$,
and $\mc V$ is the algebra of differential polynomials
in the infinitely many variables $u_{j;ab}$, 
where $1\leq a,b\leq N$ and $-\infty<j< M$
(resp. in the finitely many variables
$u_{j;ab}$, where $1\leq a,b\leq N$ and $0\leq j< M$).
For the same reasons that $L_{MN}(\partial)$ (resp. $L_{(MN)}(\partial)$)
is of bi-Adler type,
it is immediate that,
for every $\epsilon\in\mb F$,
$L_{MN}(z)+\epsilon\id_N$ (resp. $L_{(MN)}(z)+\epsilon\id_N$)
is a Laurent series (resp. polynomial) of $\hbar$-Adler type
with respect to a pencil of $\lambda$-brackets
$\{\cdot\,_\lambda\,\cdot\}_{\hbar,\epsilon}
=
\{\cdot\,_\lambda\,\cdot\}_{\hbar,0}+\epsilon\{\cdot\,_\lambda\,\cdot\}_{\hbar,1}$,
$\epsilon\in\mb F$,
depending on the parameter $\hbar\in\mb F$, on $\mc V$.
These $\lambda$-brackets are (uniquely and well) defined by the $\hbar$-Adler identity:
\begin{equation}\label{eq:generic-adler}
\begin{split}
& \{L_{MN;ij}(z)_\lambda L_{MN;hk}(w)\}_{\hbar,\epsilon} 
\\
&
= (L_{MN;hj}(w\!+\!\hbar\lambda\!+\!\hbar\partial)\!+\!\epsilon\delta_{hj})
\iota_z(z\!-\!w\!-\!\hbar\lambda\!-\!\hbar\partial)^{-1}\big|_{x=\partial}\!
(L_{MN;ik}(z\!-\!\hbar\lambda\!-\!\hbar x)\!+\!\epsilon\delta_{ik})
\\
& - (L_{MN;hj}(z)+\epsilon\delta_{hj})\iota_z(z-w-\hbar\lambda-\hbar\partial)^{-1}
(L_{MN;ik}(w)+\epsilon\delta_{ik})
\,,
\end{split}
\end{equation}
(and the same equations for $L_{(MN)}$ in place of $L_{MN}$).
It follows by the arguments in Section \ref{sec:8.2} that, 
taking the dispersionless limit \eqref{eq:disp-lambda},
the Laurent series $L_{MN}(z)$ (resp. the polynomial $L_{(MN)}(z)$)
is of dispersionless bi-Adler type with respect to the pencil of $\lambda$-brackets
$\{\cdot\,_\lambda\,\cdot\}_{\text{disp},\epsilon}
=\frac{d}{d\hbar}\{\cdot\,_\lambda\,\cdot\}_{\hbar,\epsilon}\big|_{\hbar=0}$.
By Theorem \ref{thm:main_quasi}
we have that, in the scalar case $N=1$,
this is a pencil of PVA $\lambda$-brackets on $\mc V$.
The two PVA structures are given, respectively, by
(denoting $L(z)$ either $L_{M1}(z)$ or $L_{(M1)}(z)$)
\begin{equation}\label{eq:generic-adler-0}
\begin{split}
& \{L(z)_\lambda L(w)\}_{\text{disp},0} 
= 
\iota_z(z-w)^{-2}
\big(
L(w)\partial L(z) 
- L(z)\partial L(w)
\big)
\\
& +
\iota_z(z-w)^{-1}
(\partial_w-\partial_z) \big( L(w) (\lambda+\partial) L(z) \big)
\,,
\end{split}
\end{equation}
and
\begin{equation}\label{eq:generic-adler-1}
\begin{split}
& 
\{L(z)_\lambda L(w)\}_{\text{disp},1} 
= 
\iota_z(z-w)^{-2} \partial (L(z)-L(w))
\\
& +\iota_z(z-w)^{-1} \big( 
\partial_w L(w)\lambda 
-(\lambda+\partial)\partial_zL(z)
\big)
\,.
\end{split}
\end{equation}
We can rewrite the two $\lambda$-brackets \eqref{eq:generic-adler-0}
and \eqref{eq:generic-adler-1} in terms of the generators $\{u_j\}_{j\leq M-1}$.
In the Laurent series case we have $L(z)=L_{M1}(z)=\sum_{j=-\infty}^Mu_jz^j$ ($u_M=1$),
and the two brackets \eqref{eq:generic-adler-0} and \eqref{eq:generic-adler-1}
are ($-\infty<i,j\leq M-1$)
\begin{equation}\label{eq:ga1}
\begin{split}
& \{{u_i}_\lambda u_j\}_{\text{disp},0} 
= 
(j+1)u_{j+1}(\lambda+\partial)u_{i+1}
\\
& +(j-i)\sum_{\ell=0}^{M-i-2}u_{j-\ell}(\lambda+\partial)u_{i+\ell+2}
-(2\lambda+\partial)\sum_{\ell=0}^{M-i-2}(\ell+1)u_{j-\ell}u_{i+\ell+2}
\,,
\end{split}
\end{equation}
and 
\begin{equation}\label{eq:ga2}
\{{u_i}_\lambda u_j\}_{\text{disp},1} 
= 
\left\{\begin{array}{ll}
-((i\!+\!1)(\lambda\!+\!\partial)\!+\!(j\!+\!1)\lambda)u_{i+j+2}
&\text{for } i,j\geq0\text{ s.t. } i\!+\!j\!+\!2\leq M \\
((i\!+\!1)(\lambda\!+\!\partial)\!+\!(j\!+\!1)\lambda)u_{i+j+2}
&\text{for }\, i,j\leq-1 \\
0 &\text{otherwise.}
\end{array}\right.
\end{equation}
The Poisson structure \eqref{eq:ga2} for $M=1$ was found in \cite{KM78}.
In the polynomial case we have $L(z)=L_{(M1)}(z)=\sum_{j=0}^Mu_jz^j$,
and the two brackets \eqref{eq:generic-adler-0} and \eqref{eq:generic-adler-1}
are ($0\leq i,j\leq M-1$)
\begin{equation}\label{eq:ga3}
\begin{split}
& \{{u_i}_\lambda u_j\}_{\text{disp},0} 
= 
(j+1)u_{j+1}(\lambda+\partial)u_{i+1}
\\
& +(j-i)\sum_{\ell=0}^{\min\{M-i-2,j\}}u_{j-\ell}(\lambda+\partial)u_{i+\ell+2}
-(2\lambda+\partial)\!\!\!\!\!\!
\sum_{\ell=0}^{\min\{M-i-2,j\}}\!\!\!\!\!\!
(\ell+1)u_{j-\ell}u_{i+\ell+2}
\,,
\end{split}
\end{equation}
and
\begin{equation}\label{eq:ga4}
\{{u_i}_\lambda u_j\}_{\text{disp},1} 
= 
\left\{\begin{array}{ll}
-((i\!+\!1)(\lambda\!+\!\partial)\!+\!(j\!+\!1)\lambda)u_{i+j+2}
&\text{if } i\!+\!j\!+\!2\leq M \\
0 &\text{otherwise}
\end{array}\right.
\end{equation}
Note that this bi-PVA is a quotient of the previous one by the differential ideal generated by $\{u_j\,|\,j<0\}$.

In the matrix case $N\geq2$ it is not true, in general, 
that $\{\cdot\,_\lambda\,\cdot\}_{\text{disp},\epsilon}$
is a pencil of PVA structures on $\mc V$,
since the limit $\hbar\to0$ of the RHS of \eqref{eq:generic-adler}
is, in general, not zero.
It is true, however, in the polynomial case of order $1$, i.e. for $L_{(1N)}(z)$.
In this case, the $0$-th $\lambda$-bracket $\{\cdot\,_\lambda\,\cdot\}_{\text{disp},0}$
is a PVA $\lambda$-bracket, which is given explicitly,
letting $L_{(1N)}(z)=\delta_{ij}z+u_{ij}$, by
\begin{equation}\label{eq:generic-adler-0-matrix}
\{{u_{ij}}_\lambda u_{hk}\}_{\text{disp},0} 
= 
\delta_{hj}\delta_{ik}
\lambda
\,,
\end{equation}
while the $1$-st $\lambda$-bracket $\{\cdot\,_\lambda\,\cdot\}_{\text{disp},1}$
is identically zero.
\end{example}

\subsection{Integrable hierarchies associated to dispersionless Adler type Laurent series}
\label{sec:8.4}

In this section we show that,
given a Laurent series of dispersionless Adler type with respect to a PVA $\lambda$-bracket,
there is an associated sequence of Hamiltonian functionals in involution,
namely the dipersionless analogue of Theorem \ref{thm:hn},
and that given a Laurent series of dispersionless bi-Adler type with respect 
to a pencil of PVA $\lambda$-brackets,
the corresponding sequence of Hamiltonian functionals satisfies
the Lenard-Magri recurrence relation with respect to the given Poisson structures,
namely the dipersionless analogue of Theorem \ref{thm:bi-hn}.
\begin{theorem}\label{thm:hn_quasi}
Let $\mc V$ be a differential algebra with a PVA
$\lambda$-bracket $\{\cdot\,_\lambda\,\cdot\}$.
Let $A(z)\in\mc V((z^{-1}))$
be a Laurent series of degree $K\geq1$ with invertible leading coefficient,
and let $B(z)\in\mc V((z^{-1}))$ be
its $K$-th root, i.e. $A(z)=B(z)^K$.
Assume that $A(z)$ is of dispersionless Adler type 
with respect to the $\lambda$-bracket $\{\cdot\,_\lambda\,\cdot\}$.
Define the elements $h_{n}\in\mc V$, $n\in\mb Z$, by
\begin{equation}\label{eq:hn_quasi}
h_{n}=
\frac{-K}{|n|}
\Res_z B(z)^n
\text{ for } n\neq0
\,,\,\,
h_0=0\,.
\end{equation}
Then: 
\begin{enumerate}[(a)]
\item
All the elements $\tint h_{n}$ are Hamiltonian functionals in involution:
\begin{equation}\label{eq:invol_quasi}
\{\tint h_{m},\tint h_{n}\}=0
\,\text{ for all } m,n\in\mb Z
\,.
\end{equation}
\item
The corresponding compatible hierarchy of Hamiltonian equations is
\begin{equation}\label{eq:hierarchy_quasi}
\frac{dA(z)}{dt_{n}}
=
\{\tint h_{n},A(z)\}
=
\{(B(z)^n)_+,A(z)\}_{\text{qc}}
\,,\,\,n\in\mb Z
\end{equation}
(the bracket $\{\cdot\,,\,\cdot\}_{\text{qc}}$ in the RHS is given by \eqref{eq:quasi_symbol}),
and the Hamiltonian functionals $\tint h_{n}$, $n\in\mb Z_+$, 
are integrals of motion of all these equations.
\item
If, moreover, $A(z)$ is a Laurent series of dispersionless bi-Adler type
with respect to two PVA $\lambda$-brackets 
$\{\cdot\,_\lambda\,\cdot\}_0$ and $\{\cdot\,_\lambda\,\cdot\}_1$ on $\mc V$,
then, the elements $h_{n}$, $n\in\mb Z_+$,
satisfy the Lenard-Magri recurrence relation:
\begin{equation}\label{eq:LM-K_quasi}
\{\tint h_{n},A(z)\}_0
=
\{\tint h_{n+K},A(z)\}_1
=
\{(B(z)^n)_+,A(z)\}_{\text{qc}}
\,\,,\,\,\,\,
n\in\mb Z
\,.
\end{equation}
Hence, \eqref{eq:hierarchy_quasi} is a compatible hierarchy of bi-Hamiltonian equations.
\end{enumerate}
\end{theorem}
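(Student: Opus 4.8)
The plan is to establish Theorem \ref{thm:hn_quasi} as the quasiclassical counterpart of Theorems \ref{thm:hn} and \ref{thm:bi-hn}, following the same architecture but exploiting the fact that the quasiclassical bracket \eqref{eq:quasi_symbol} lives on the \emph{commutative} algebra $\mc V((z^{-1}))$. The one ingredient to set up first is the dispersionless analogue of Lemma \ref{lem:hn2}; once this master identity is in place, parts (a), (b) and (c) follow by feeding the dispersionless Adler identity \eqref{eq:quasi-adler} (resp. the bi-Adler identity \eqref{eq:disp-bi-adler}) into it, exactly as in the proofs of \eqref{eq:invol}, \eqref{eq:hierarchy} and \eqref{eq:LM-K}.

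First I would prove the master identity. Since $\mc V((z^{-1}))$ is commutative and $B(z)$ is invertible (it has degree $1$ with invertible leading coefficient, because $A=B^K$ has degree $K\geq1$ with invertible leading coefficient), the whole content of Corollary \ref{cor:lambda-power} collapses to the elementary Leibniz rule $\{a_\lambda B(z)^n\}=n\,B(z)^{n-1}\{a_\lambda B(z)\}$, valid for all $n\in\mb Z$. Combining this with $A=B^K$, i.e. $\{a_\lambda B\}=\tfrac1K B^{1-K}\{a_\lambda A\}$, and with the definition \eqref{eq:hn_quasi} of $h_n$, gives at once the right-slot identity
\begin{equation*}
\{a_\lambda h_n\}\big|_{\lambda=0}=-\Res_z B(z)^{n-K}\{a_\lambda A(z)\}\big|_{\lambda=0}
\,,
\end{equation*}
with none of the shift operators $\big|_{x=\partial}$ appearing in \eqref{eq:hn2}, since everything commutes and $\Res_z$ commutes with $\{a_\lambda\,\cdot\,\}$. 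The left-slot version, expressing $\{{h_n}_\lambda a\}\big|_{\lambda=0}$ through $\{A(z)_x a\}_{\text{disp}}$, then follows from the skewsymmetry axiom (iii).

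For part (b) I would compute $\{\tint h_n,A(w)\}=\{{h_n}_\lambda A(w)\}\big|_{\lambda=0}$ via the left-slot master identity, insert the dispersionless Adler identity \eqref{eq:quasi-adler} for $\{A(z)_x A(w)\}_{\text{disp}}$, and carry out $\Res_z$. The residue identities replacing \eqref{eq:positive} are $\Res_z F(z)\iota_z(z-w)^{-1}=F(w)_+$ together with $\Res_z F(z)\iota_z(z-w)^{-2}=\partial_w\big(F(w)_+\big)$ (using $\iota_z(z-w)^{-2}=\partial_w\iota_z(z-w)^{-1}$); after reorganizing, the positive parts $(B^n)_+$ emerge and the surviving terms assemble into $\{(B^n)_+,A(w)\}_{\text{qc}}$, giving \eqref{eq:hierarchy_quasi}. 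Part (a) then mirrors the cancellation \eqref{eq:hn-pr1}--\eqref{eq:hn-pr4}: applying the master identity in both slots turns $\{\tint h_m,\tint h_n\}$ into a double residue $\int\Res_z\Res_w$ of $B(z)^{m-K}B(w)^{n-K}\{A(z)_\lambda A(w)\}_{\text{disp}}$, and after inserting \eqref{eq:quasi-adler} and using the residue identities above the two resulting terms coincide up to sign. Here the role played by Lemma \ref{lem:hn1}(b) is taken over by the elementary quasiclassical fact that $\int\Res_z\{F,G\}_{\text{qc}}=0$ for all $F,G$, which follows from $\Res_z\partial_z=0$ and $\int\partial=0$.

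Finally, for part (c) I would repeat the computation of part (b) verbatim with the dispersionless bi-Adler identity \eqref{eq:disp-bi-adler} in place of \eqref{eq:quasi-adler}, obtaining $\{\tint h_n,A(w)\}_1=\{(B^{n-K})_+,A(w)\}_{\text{qc}}=\{\tint h_{n-K},A(w)\}_0$, i.e. the Lenard--Magri relation \eqref{eq:LM-K_quasi}, exactly as in \eqref{eq:bi-proof1}--\eqref{eq:bi-proof3}. The main obstacle is the bookkeeping of the double-pole terms $\iota_z(z-w)^{-2}$ together with the derivations $\partial_z,\partial_w$ and $\partial$ carried by both \eqref{eq:quasi-adler} and the quasiclassical bracket \eqref{eq:quasi_symbol}: one must verify that, after taking residues, these derivative contributions recombine \emph{precisely} into $\{(B^n)_+,A\}_{\text{qc}}$ rather than into spurious terms. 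An alternative that sidesteps part of this computation is to deduce everything by differentiating the $\hbar$-deformed statements at $\hbar=0$, in the spirit of the proof of Theorem \ref{thm:main_quasi}; but tracking the $\hbar$-dependence of the Hamiltonians $h_n$ themselves makes the direct route above cleaner to present.
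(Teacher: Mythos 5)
Your proposal is correct and follows essentially the same route as the paper's proof: the "master identity" you set up is precisely the paper's Lemma \ref{lem:hn2_quasi} (obtained there directly from the Leibniz rules, though your detour through skewsymmetry is equally valid since $\{\cdot\,_\lambda\,\cdot\}$ is assumed to be a PVA bracket), your residue identities are the paper's \eqref{eq:positive} and \eqref{20150721:eq1}--\eqref{20150721:eq3}, and the vanishing of $\tint\Res_z\{F,G\}_{\text{qc}}$ is exactly Lemma \ref{lem:hn3_quasi}. The remaining "bookkeeping" you flag is carried out in the paper via the elementary identity $B^{m-K}\partial A=\tfrac{K}{m}\partial B^m$ and works out as you anticipate.
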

The proof of Theorem \ref{thm:hn_quasi} relies on the next Lemmas \ref{lem:hn2_quasi} and
\ref{lem:hn3_quasi}.
Lemma \ref{lem:hn2_quasi} is analogous to Lemma \ref{lem:hn2}
and it is an obvious consequence of the Leibniz rules
(recall that the algebra $\mc V((z^{-1}))$ is commutative).
Lemma \ref{lem:hn3_quasi} is straightforward.
We state them for completeness.
\begin{lemma}\label{lem:hn2_quasi}
For $a\in\mc V$, $n\in\mb Z$, and $A(z)$ and $B(z)\in\mc V((z^{-1}))$ 
as in Theorem \ref{thm:hn_quasi}, we have
\begin{equation}\label{eq:hn2_quasi}
\begin{split}
& \{{h_{n}}_\lambda a\}\big|_{\lambda=0}
=
-\Res_z \{A(z)_x a\}\big(\big|_{x=\partial}B(z)^{n-K}\big)
\,,
\\
& \tint \{a_\lambda h_{n}\}\big|_{\lambda=0}
=
-\int \Res_z B(z)^{n-K}\{a_\lambda A(z)\}\big|_{\lambda=0}
\,.
\end{split}
\end{equation}
\end{lemma}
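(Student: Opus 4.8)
The plan is to mirror the proof of Lemma~\ref{lem:hn2}, but the computation is considerably shorter because $\mc V((z^{-1}))$ is a \emph{commutative} algebra: products of Laurent series carry no $z+\partial$ shift and there is no formal adjoint to track. The only tool is the Leibniz rule for the $\lambda$-bracket together with the commutative ``power rule'' for a derivation. Throughout I use that $A(z)$, hence its $K$-th root $B(z)$, is invertible in $\mc V((z^{-1}))$ (its leading coefficient is invertible), so that $B(z)^n$ is defined for every $n\in\mb Z$ and the power rule is available for all integer exponents.

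I would prove the second identity first, since it is immediate. Because the Leibniz rule in the second argument, $\{a_\lambda bc\}=\{a_\lambda b\}c+\{a_\lambda c\}b$, involves no $\partial$-shift, iterating it and using commutativity gives, for every $n\in\mb Z$,
\[
\{a_\lambda (B^n)(z)\}=n\,B(z)^{n-1}\{a_\lambda B(z)\}
\]
(the case $n<0$ following from $\{a_\lambda 1\}=0$ applied to $B\circ B^{-1}=1$). Applying this both to $h_n=\frac{-K}{|n|}\Res_z B(z)^n$ and to $A(z)=B(z)^K$, one finds $B^{n-K}\{a_\lambda A\}=K\,B^{n-1}\{a_\lambda B\}$, so that after taking $\Res_z$, applying $\tint$ and setting $\lambda=0$ the two sides of the second identity in \eqref{eq:hn2_quasi} agree; as in Lemma~\ref{lem:hn2}, the passage between $n\ge1$ and $n\le-1$ is parallel.

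For the first identity the relevant Leibniz rule is the one in the \emph{first} argument, $\{bc_\lambda a\}=\{b_{\lambda+x}a\}\big(\big|_{x=\partial}c\big)+\{c_{\lambda+x}a\}\big(\big|_{x=\partial}b\big)$, which does carry the $(\lambda+x)$ shift of notation \eqref{eq:notation}. Iterating it on $B(z)^n$ and using commutativity, I would establish the commutative power rule
\[
\{(B^n)(z)_\lambda a\}=n\,\{B(z)_{\lambda+x}a\}\big(\big|_{x=\partial}B(z)^{n-1}\big),\qquad n\in\mb Z,
\]
the scalar, commutative analogue of Corollary~\ref{cor:lambda-power}. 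Applying the same rule to $A=B^K$ and recombining the auxiliary $\partial$-actions yields $\{A(z)_x a\}\big(\big|_{x=\partial}B^{n-K}\big)=K\{B(z)_x a\}\big(\big|_{x=\partial}B^{n-1}\big)$; substituting into $h_n$ and taking $\Res_z$ at $\lambda=0$, then comparing with the expression coming from the power rule for $B^n$, gives the first identity. Again the case $n\le-1$ is handled identically, now via the power rule for negative exponents, which is exactly where invertibility of $B$ enters.

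The one genuinely technical point—and the only place real care is required—is the verification of the power rule in the first argument: one must check that the two separate $\partial$-actions produced by the Leibniz rule (one on each surviving factor, carried by independent auxiliary variables $x,x'$ inside $\{B(z)_{\lambda+x+x'}a\}$) recombine into a single $\partial$ acting on the product $B^{n-1}$. This is precisely the binomial identity $\sum_{b+c=d}\binom{d}{b}(\partial^b f)(\partial^c g)=\partial^d(fg)$, and it is the commutative shadow of Lemma~\ref{lem:hn1}. Everything else is routine bookkeeping of the notation $\big(\big|_{x=\partial}\cdot\big)$ across the residue, so the whole statement is, as advertised, a direct consequence of the Leibniz rules once this merging is confirmed.
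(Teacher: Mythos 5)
Your argument is correct and is exactly the route the paper intends: the paper gives no written proof of Lemma \ref{lem:hn2_quasi}, merely declaring it ``an obvious consequence of the Leibniz rules'' in the commutative algebra $\mc V((z^{-1}))$, and your filling-in via the commutative power rule in each slot of the bracket (with the binomial merging of the auxiliary $\partial$-actions playing the role that Lemma \ref{lem:hn1} plays in the proof of Lemma \ref{lem:hn2}) is precisely that argument made explicit. The only caveat is one you inherit from the paper itself: for $n\le -1$ the power rule produces a factor $n$ against the $|n|$ in \eqref{eq:hn_quasi}, i.e.\ an extra sign $n/|n|=-1$ relative to the stated right-hand sides, so the ``parallel'' passage to negative $n$ that you assert is exactly as (un)problematic in your write-up as in the paper's own treatment of Lemma \ref{lem:hn2}.
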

\begin{lemma}\label{lem:hn3_quasi}
For $A(z),B(z)\in\mc V((z^{-1}))$, we have
\begin{equation}\label{eq:residui_quasi}
\tint\res_z\{A(z),B(z)\}_{\text{qc}}=0
\,.
\end{equation}
\end{lemma}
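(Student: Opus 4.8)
The plan is to prove the identity by a direct computation, expanding both Laurent series and extracting the residue. First I would write $A(z)=\sum_m a_m z^m$ and $B(z)=\sum_n b_n z^n$ with $a_m,b_n\in\mc V$, and substitute into the definition \eqref{eq:quasi_symbol} of the quasi-classical bracket. The first term is $(\partial_z A(z))\partial B(z)=\sum_{m,n}m\,a_m(\partial b_n)z^{m+n-1}$, whose coefficient of $z^{-1}$ comes precisely from the pairs with $m+n=0$; this yields $\Res_z(\partial_z A(z))\partial B(z)=\sum_m m\,a_m(\partial b_{-m})$. Treating the second term $-(\partial_z B(z))\partial A(z)$ the same way and relabelling the summation index $n\mapsto-m$, its contribution to the residue is $+\sum_m m\,b_{-m}(\partial a_m)$.

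Adding the two contributions, the next step is to invoke the Leibniz rule for the derivation $\partial$ to recognize that the residue is a total derivative:
$$
\Res_z\{A(z),B(z)\}_{\text{qc}}
=\sum_m m\big(a_m(\partial b_{-m})+b_{-m}(\partial a_m)\big)
=\sum_m m\,\partial(a_m b_{-m})
=\partial\Big(\sum_m m\,a_m b_{-m}\Big)
\,.
$$
Applying the canonical quotient map $\tint\colon\mc V\to\mc V/\partial\mc V$, which annihilates $\partial\mc V$ by definition, then immediately gives $\tint\Res_z\{A(z),B(z)\}_{\text{qc}}=0$, as claimed.

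This argument is entirely elementary and presents no genuine obstacle; the only point requiring a little care is the bookkeeping of the two residue contributions. In particular, the relabelling $n\mapsto-m$ in the second term flips the sign of the coefficient $n=-m$, and it is exactly this sign flip (combined with the overall minus sign in \eqref{eq:quasi_symbol}) that makes the two terms \emph{add} rather than cancel, so that they recombine via the Leibniz rule into the single total derivative above. I would therefore highlight that step explicitly to avoid a sign error.
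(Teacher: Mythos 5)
Your computation is correct, and it supplies exactly the elementary verification that the paper omits (the lemma is stated there as ``straightforward'' with no proof given): since $\mc V((z^{-1}))$ is commutative here, the Leibniz rule does recombine the two residue contributions into the total derivative $\partial\big(\sum_m m\,a_mb_{-m}\big)$, and that sum is finite because both Laurent series are bounded above in degree. Your sign bookkeeping under the relabelling $n\mapsto -m$ is also right; as a minor remark, the same conclusion follows coordinate-free from the identity $(\partial_zA)\partial B-(\partial_zB)\partial A=\partial_z\big(A\,\partial B\big)-\partial\big(A\,\partial_zB\big)$, whose first term is killed by $\Res_z$ and whose second is killed by $\tint$.
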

\begin{proof}[Proof of Theorem \ref{thm:hn_quasi}]
Applying the second equation in \eqref{eq:hn2_quasi} first,
and then the first equation in \eqref{eq:hn2_quasi}, we get
\begin{equation}\label{eq:hn-pr1_quasi}
\{\tint h_{m},\tint h_{n}\}
= 
\int \Res_z \Res_w 
B(w)^{n-K}
\{A(z)_x A(w)\}
\big(\big|_{x=\partial}B(z)^{m-K}\big)
\,.
\end{equation}
We can now use the dispersionless Adler identity \eqref{eq:quasi-adler}
and the fact that $A(z)=B(z)^K$ to rewrite the RHS of \eqref{eq:hn-pr1_quasi} as
\begin{equation}\label{eq:hn-pr2_quasi}
\begin{split}
&
\int \Res_z \Res_w
\iota_z(z-w)^{-2}
\left(B(w)^nB(z)^{m-K}\partial A(z)
-B(z)^mB(w)^{n-K}\partial A(w)
\right)
\\
&
+\int \Res_z \Res_w
\left(
\partial(B(z)^m)B(w)^{n-K}\partial_wA(w)
-B(w)^{n}\partial(B(z)^{m-K}\partial_zA(z))
\right)
\,.
\end{split}
\end{equation}
Note that
\begin{equation}\label{20150721:eq4}
B(z)^{m-K}\partial A(z)=\frac{K}{m}\partial B(z)^m\,,
\end{equation}
and similarly with $\partial$ replaced by $\partial_z$ or $\partial_w$.
Hence, we can rewrite \eqref{eq:hn-pr2_quasi} as 
\begin{equation}\label{eq:hn-pr3_quasi}
\begin{split}
&
\int \Res_z \Res_w
\iota_z(z-w)^{-2}
\left(\frac Km B(w)^n\partial B(z)^{m}
-\frac Kn B(z)^m\partial B(w)^{n}
\right)
\\
&
+\int \Res_z \Res_w
\iota_z(z-w)^{-1}\left(
\frac Kn \partial(B(z)^m)\partial_wB(w)^{n}
-\frac Km B(w)^{n}\partial(\partial_zB(z)^{m})
\right)
\,.
\end{split}
\end{equation}
By the fact that $\partial$ and $\partial_z$ commute and using integration by parts,
we get
\begin{equation}\label{20150721:eq1}
\Res_z\iota_z(z-w)^{-1}\partial(\partial_zB(z)^{m})
=\Res_z \iota_z(z-w)^{-2}\partial B(z)^{m}
\,.
\end{equation}
Furthermore, by \eqref{eq:positive}, we have
\begin{equation}\label{20150721:eq2}
\res_z\iota_z(z-w)^{-1}\partial B(z)^m=\left(\partial B(w)^m\right)_+
=\partial(B(w)^m)_+
\,,
\end{equation}
where in the last identity we used the fact that $\partial$ acts componentwise 
on elements of $\mc V((w^{-1}))$, and
\begin{equation}\label{20150721:eq3}
\begin{split}
\res_z\iota_z(z-w)^{-2}B(z)^m
&=-\res_z\left(\partial_z\iota_z(z-w)^{-1}\right)B(z)^m
\\
&=\left(\partial_w B(w)^m\right)_+=\partial_w(B(w)^m)_+
\,.
\end{split}
\end{equation}
Here, in the second identity we used integration by parts
and equation \eqref{eq:positive}, and in the last identity the
fact that taking the positive part of elements in $\mc V((w^{-1}))$ commutes 
with the partial derivative $\partial_w$.
Hence, using equations \eqref{20150721:eq1}, \eqref{20150721:eq2} and \eqref{20150721:eq3} we can rewrite
equation \eqref{eq:hn-pr3_quasi} as
$$
\frac Kn \int \res_w\left(\partial(B(w)^m)_+\partial_w B(w)^n
-\partial_w(B(w)^m)_+\partial B(w)^n
\right)
$$
which is zero by Lemma \ref{lem:hn3_quasi} thus proving part (a).

For part (b),
by the first equation in Lemma \ref{eq:hn2_quasi} we have
\begin{equation}\label{eq:proofb-1_quasi}
\begin{split}
& \{\tint h_{n},A(w)\}
=
\{{h_{n}}_\lambda A(w)\}\big|_{\lambda=0}
=- \Res_z \{A(z)_x A(w)\}\big(\big|_{x=\partial}B(z)^{n-K}\big)
\,.
\end{split}
\end{equation}
By the dispersionless Adler assumption \eqref{eq:quasi-adler}, the identity $A=B^K$
and the first equation in \eqref{20150721:eq4}, the RHS of \eqref{eq:proofb-1_quasi} becomes
\begin{equation}\label{eq:proofb-2_quasi}
\begin{split}
& -\Res_z \iota_z(z-w)^{-2}\left(
\frac Kn A(w)\partial B(z)^n- B(z)^n \partial A(w)\right)
\\
&
- \Res_z \iota_z(z-w)^{-1}\left(
\partial_w(A(w))\partial B(z)^n-\frac Kn A(w)\partial\left(\partial_z B(z)^n\right)
\right)
\\
&=\partial_w(B(w)^n)_+\partial A(w)-\partial_w(A(w))\partial (B(w)^n)_+
\,.
\end{split}
\end{equation}
In the second equality we used equations \eqref{20150721:eq1}, \eqref{20150721:eq2} and \eqref{20150721:eq3}.
This proves \eqref{eq:hierarchy_quasi} and completes the proof of part (b).

Finally, we prove part (c).
By the first equation in \eqref{eq:hn2_quasi} we have
\begin{equation}\label{eq:bi-proof2_quasi}
\{\tint{h_{n}},A(w)\}_1
=
\{{h_{n}}_\lambda A(w)\}_1\big|_{\lambda=0}
=
-\Res_z
\{A(z)_x A(w)\}_1\big(\big|_{x=\partial}B^{n-K}(z)\big)
\,.
\end{equation}
Applying the dispersionless bi-Adler identity \eqref{eq:disp-bi-adler}
and \eqref{20150721:eq4}
we can rewrite the RHS of \eqref{eq:bi-proof2_quasi} as
\begin{equation}\label{eq:bi-proof3_quasi}
\begin{split}
& -\Res_z\iota_z(z-w)^{-1} 
\left(\frac Kn\partial B(z)^n-B(z)^{n-k}\partial A(w)\right)
\\
& -\Res_z 
\left(\partial_w A(w)\partial B(z)^{n-K}-\frac Kn\partial\partial_zB(z)^n\right)
\\
&= \{(B(z)^{n-K})_+,A(w)\}_{\text{qc}}
\,.
\end{split}
\end{equation}
In the first equality we used again equations \eqref{20150721:eq1}, 
\eqref{20150721:eq2} and \eqref{20150721:eq3}.
Comparing \eqref{eq:hierarchy_quasi} and \eqref{eq:bi-proof3_quasi},
we get \eqref{eq:LM-K_quasi}.
\end{proof}
\begin{remark}\label{rem:dispersionless_hierarchy}
Note that
the Poisson structures associated to the generic Adler type series in Example \ref{ex:generic-adler}
coincide with the corresponding dispersionless limit in \cite{DZ01}.
Hence, the hierarchy
\eqref{eq:hierarchy_quasi} can be obtained as the dispersionless limit, as defined by Dubrovin and Zhang,
of the hierarchy \eqref{eq:hierarchy}.
\end{remark}

\subsection{Examples}
\begin{example}[dKP hierarchy]\label{exa:dkp}
Let $\mc V$ be the algebra of differential polynomials in the variables
$\{u_i\mid i\leq0\}$.
The Laurent series
$L(z)=z+\sum_{i\leq0}u_iz^i$
is of dispersionless bi-Adler type with bi-PVA structure defined on generators by equations
\eqref{eq:ga1} and \eqref{eq:ga2} for $M=1$, see Example \ref{ex:generic-adler}.
We can compute
the first few integrals of motion $\tint h_{k}$, $k\geq1$, directly from the definition \eqref{eq:hn_quasi}:
\begin{equation*}
\begin{split}
& \tint h_1=-\tint u_{-1}
\,\,,\qquad
\tint h_2=-\tint (u_{-2}+u_0u_{-1})
\,, \\
& \tint h_3=-\tint (u_{-3}+2u_0u_{-2}+u_{-1}^2+u_0^2u_{-1})
\,,\dots
\end{split}
\end{equation*}
To find the corresponding bi-Hamiltonian equations,
we use Theorem \ref{thm:hn_quasi}.
We have
$L(z)_+=z+u_{0}$,
$L^2(z)_+=z^2+2u_{0}z+2u_{-1}+u_{0}^2$,
$L^3(z)_+=z^3+3u_{0}z^2+3(u_{-1}+u_{0}^2)z+(3u_{-2}+6u_0u_{-1}+u_{0}^3)$.
Hence, 
\begin{align}\label{dKP}
\begin{split}
\frac{dL(z)}{dt_{1}}
&
=\partial L(z)-\partial_zL(z)u_{0}'
\,,
\\
\frac{dL(z)}{dt_{2}}
&
=(2z+u_{0})\partial L(z)-2\partial_zL(z)(u_{0}'z+u_{-1}'+u_{0}u_{0}')
\,,
\\
\frac{dL(z)}{dt_{3}}
&
=\left(3z^2+6u_{0}z+3(u_{-1}+u_{0}^2)\right)\partial L(z)
\\
&
-\partial_z L(z)\left(3u_{0}'z^2+3(u_{-1}'+2u_{0}u_0')z+\partial(3u_{-2}+6u_{0}u_{-1}+u_{0}^3)\right)
\,,\dots
\end{split}
\end{align}
\end{example}
\begin{remark}\label{rem:kp}
Since $u_0$ is central for the $\lambda$-bracket \eqref{eq:ga2}, 
we have $\frac{du_{0}}{dt_{n}}=0$, for $n\in\mb Z_+$.
Hence, we may
assume $u_0=0$ in \eqref{dKP}.
The second equation in \eqref{dKP} then becomes
\begin{equation}\label{eq:benney}
\frac{d u_k}{dt_{2}}=2(u_{k-1}'-(k+1)u_{k+1}u_{-1}')
\,,
\qquad
k\leq-1
\,,
\end{equation}
which is the Benney equation for moments, see \cite{KM78,LM79}.
Moreover, consider the first two equations in the second system of the hierarchy \eqref{dKP},
and the first equation in the third system of \eqref{dKP}.
After eliminating the variables $u_{-2}$ and $u_{-3}$ 
and relabeling $t_2=y$, $t_3=t$ and $u=2u_{-1}$, we get
\begin{equation}\label{eq:kp}
3u_{yy}=(4u_t-6uu')'\,,
\end{equation}
which is known as the dispersionless Kadomtsev-Petviashvili equation.
\end{remark}

\begin{remark}
It has been shown by Radul \cite{Rad87} that there are infinitely 
many bi-Poisson structures for the KP hierarchy, see also \cite[Rem.3.14]{DSKV15a}.
In fact, we have infinitely many
bi-Poisson structures for the dKP hierarchy as well,
corresponding to the bi-PVA structures defined by \eqref{eq:ga1} and \eqref{eq:ga2} on the algebra 
of differential polynomials in the variables $\{u_i\mid i\leq M-1\}$, for every $M\geq1$.
These bi-PVA structures are obtained as the dispersionless limit of the bi-PVA structures 
for the KP equations.
\end{remark}

\begin{example}[The dispersionless $M$-th KdV hierarchy]
Let us consider the polynomial $L_{(M1)}(z)=\sum_{i=0}^Mu_iz^i$, with $u_{M}=1$.
It is of dispersionless bi-Adler type for the $\lambda$-brackets \eqref{eq:ga3} and \eqref{eq:ga4} 
defined on the algebra of differential
polynomials in the variables $\{u_i\mid 0\leq i\leq M-1\}$, see Example \ref{ex:generic-adler}.
It is proved in \cite{DSKV15a} that, after setting $u_{M-1}=0$ 
(which is possible since $u_{M-1}$ is central for the $\lambda$-bracket \eqref{eq:ga4}),
the integrable bi-Hamiltonian hierarchy \eqref{eq:LM-K}, 
associated to the Adler type differential operator
$L_{(M1)}(\partial)$, is the $M$-th KdV hierarchy.
Hence, the integrable bi-Hamiltonian hierarchy \eqref{eq:LM-K_quasi}
is the dispersionless $M$-th KdV hierarchy.
\end{example}

\section{Adler type pseudodifferential operators for double Poisson vertex algebras}
\label{sec:9}

In this section, by a differential algebra $\mc V$ we mean a (possibly noncommutative) 
unital associative algebra with a derivation $\partial$.
We refer to \cite{DSKV15b}
for the definition and the main properties of a structure of a double Poisson vertex algebra on $\mc V$.

Given a differential algebra $\mc V$, we define for each positive integer $N$
a unital commutative associative differential algebra $\mc V_N$, 
generated by elements $a_{ij}$, $a\in\mc V$, $i,j\in\{1,\dots,N\}$,
subject to the following relations ($k\in\mb F$, $a,b\in\mc V$, $i,j=1,\dots,N$)
\begin{equation}\label{questaserve}
(ka)_{ij}=ka_{ij}
\,\,,\,\,\,\,
(a+b)_{ij}=a_{ij}+b_{ij}
\,\,,\,\,\,\,
(ab)_{ij}=\sum_{k=1}^Na_{ik}b_{kj}\,,
\end{equation}
with derivation $\partial:\,\mc V_N\to\mc V_N$ given by $\partial a_{ij}=(\partial a)_{ij}$.
Recall from \cite[Prop.3.20, Thm.3.22]{DSKV15b} that, if $\mc V$ is a double Poisson vertex algebra, 
with $2$-fold $\lambda$-bracket $\ldb\cdot\,_\lambda\,\cdot\rdb$, written (in Sweedler's notation) as
\begin{equation}\label{sweedler}
\ldb a_\lambda b\rdb=\sum_{n\in\mb Z_+}(a_nb)'\otimes (a_nb)''\,\lambda^n
\,,
\end{equation}
then we have a PVA structure on $\mc V_N$, $N\geq1$,
with the $\lambda$-bracket given by
\begin{equation}\label{20130917:eq1}
\{a_{ij}{}_\lambda b_{hk}\}=\sum_{n\in\mb Z_+}(a_nb)'_{hj}(a_nb)''_{ik}\lambda^n\,.
\end{equation}

Recall also that, given a double PVA $\mc V$, we have a well defined Lie algebra structure on the quotient
space $\mc V/([\mc V,\mc V]+\partial\mc V)$ defined by
\begin{equation}\label{eq:double_lie}
\{\tint a,\tint b\}=\tint \mult\ldb a_{\lambda}b\rdb\big|_{\lambda=0}
=
\tint (a_0b)'(a_0b)''
\,,
\qquad
a,b\in\mc V\,,
\end{equation}
where $\tint:\mc V\to\mc V/([\mc V,\mc V]+\partial\mc V)$ denotes the projection map
and $\mult:\mc V\otimes\mc V\to\mc V$ denotes the multiplication map.
Furthermore, we have a representation of the Lie algebra 
$\mc V/([\mc V,\mc V]+\partial\mc V)$ on $\mc V$, with the Lie algebra action given by
\begin{equation}\label{eq:double_lie_action}
\{\tint a,b\}=\mult\ldb a_{\lambda}b\rdb\big|_{\lambda=0}
=
(a_0b)'(a_0b)''
\,,
\qquad
a,b\in\mc V\,.
\end{equation}

We shall also denote (for clarity of the exposition), for every $N\geq1$,
$\tint_N:\mc V_N\to\mc V_N/\partial\mc V_N$ the canonical quotient map and by $\{\cdot\,,\,\cdot\}^N$
the Lie bracket induced by \eqref{20130917:eq1} on the quotient space $\mc V_N/\partial\mc V_N$.
\begin{proposition}\phantomsection\label{20150722:prop1}
\begin{enumerate}[(a)]
\item
There is a well defined Lie algebra homomorphism
$\varphi_N:\,\mc V/([\mc V,\mc V]+\partial\mc V)\to\mc V_N/\partial\mc V_N$ given by
$\varphi_N(\tint a)=\int_N\sum_{i=1}^Na_{ii}$.
\item
For every $i,j=1,\dots,N$, the map $a\mapsto a_{ij}$,
from the module $\mc V$ over the Lie algebra $\mc V/([\mc V,\mc V]+\partial\mc V)$
to the module $\mc V_N$ over the Lie algebra $\mc V_N/\partial\mc V_N$,
is compatible with the Lie algebra homomorphism 
$\varphi_N:\,\mc V/([\mc V,\mc V]+\partial\mc V)\to\mc V_N/\partial\mc V_N$
defined in (a), meaning that
\begin{equation}\label{eq:varphiN-comp}
\{\tint a,b\}_{ij}=\{\varphi_N(\tint a),b_{ij}\}
\,\,,\,\,\,\,\text{ for all } a,b\in\mc V
\,.
\end{equation}
\end{enumerate}
\end{proposition}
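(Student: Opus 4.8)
The plan is to treat $\varphi_N$ as a ``trace map'' and to reduce both claims to direct computations with the defining product relation in \eqref{questaserve} and the explicit $\lambda$-bracket \eqref{20130917:eq1}, the only genuine work being careful bookkeeping of the Sweedler indices. First I would verify that $\varphi_N$ is well defined, i.e. that $\tint_N\sum_{i=1}^N a_{ii}=0$ whenever $a\in[\mc V,\mc V]+\partial\mc V$. For $a=\partial b$ this is immediate from $\partial a_{ij}=(\partial a)_{ij}$, since $\sum_i(\partial b)_{ii}=\partial\big(\sum_i b_{ii}\big)\in\partial\mc V_N$. For $a=bc-cb$ I would use the product relation $(bc)_{ij}=\sum_k b_{ik}c_{kj}$ from \eqref{questaserve} to write $\sum_i(bc-cb)_{ii}=\sum_{i,k}(b_{ik}c_{ki}-c_{ik}b_{ki})$; since $\mc V_N$ is \emph{commutative}, relabeling $i\leftrightarrow k$ in the second summand shows the two sums cancel. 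Thus $\varphi_N$ descends to the quotient $\mc V/([\mc V,\mc V]+\partial\mc V)$.

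Next, for the homomorphism property in part (a), I would compute both sides of $\varphi_N(\{\tint a,\tint b\})=\{\varphi_N(\tint a),\varphi_N(\tint b)\}^N$ and match them by an index relabeling. On the left, \eqref{eq:double_lie} gives $\{\tint a,\tint b\}=\tint(a_0b)'(a_0b)''$, so applying $\varphi_N$ and the product relation yields $\tint_N\sum_{i,k}(a_0b)'_{ik}(a_0b)''_{ki}$. On the right, the induced bracket on $\mc V_N/\partial\mc V_N$ is given by \eqref{eq:lie-br}, so $\{\varphi_N(\tint a),\varphi_N(\tint b)\}^N=\tint_N\sum_{h,k}\{a_{hh}{}_\lambda b_{kk}\}\big|_{\lambda=0}$; evaluating \eqref{20130917:eq1} at these diagonal indices and setting $\lambda=0$ (which retains only the $n=0$ term) gives $\tint_N\sum_{h,k}(a_0b)'_{kh}(a_0b)''_{hk}$. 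The two expressions agree after the relabeling $h\leftrightarrow k$, proving the homomorphism property.

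For part (b), the computation is entirely parallel but without the $\tint$. Using the Lie action \eqref{eq:double_lie_action}, $\{\tint a,b\}=(a_0b)'(a_0b)''$, so $\{\tint a,b\}_{ij}=\sum_k(a_0b)'_{ik}(a_0b)''_{kj}$ by the product relation. On the other side, \eqref{eq:lie-act} gives $\{\varphi_N(\tint a),b_{ij}\}=\sum_h\{a_{hh}{}_\lambda b_{ij}\}\big|_{\lambda=0}$, which by \eqref{20130917:eq1} equals $\sum_h(a_0b)'_{ih}(a_0b)''_{hj}$; the two coincide after relabeling the summation index. The main (and really only modest) obstacle throughout is keeping the index placement in \eqref{20130917:eq1} straight --- the ``transposed'' pattern in which the outer indices of $a_{ij}$ and $b_{hk}$ are distributed between the two Sweedler factors --- and confirming that the trace $\sum_i(\cdots)_{ii}$ interacts correctly with the passage from the noncommutative $\mc V$ to the commutative $\mc V_N$ via cyclic relabeling. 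Once this bookkeeping is fixed, no deeper difficulty arises.
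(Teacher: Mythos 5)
Your proof is correct: the paper itself dismisses this proposition as ``Straightforward,'' and your direct verification --- well-definedness of $\varphi_N$ via $\sum_i(\partial b)_{ii}=\partial\sum_i b_{ii}$ and the cyclic cancellation $\sum_{i,k}(b_{ik}c_{ki}-c_{ik}b_{ki})=0$ in the commutative algebra $\mc V_N$, followed by matching $\sum_{i,k}(a_0b)'_{ik}(a_0b)''_{ki}$ against $\sum_{h,k}(a_0b)'_{kh}(a_0b)''_{hk}$ using \eqref{20130917:eq1} at $\lambda=0$ --- is exactly the intended computation. The index bookkeeping in both parts is handled correctly, so nothing is missing.
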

\begin{proof}
Straightforward.
\end{proof}
\begin{remark}\label{LCA}
In fact, if $\mc V$ is a double Poisson vertex algebra,
then we have a Lie conformal algebra $\lambda$-bracket on $\mc V/[\mc V,\mc V]$,
given by $\{\tr(a)_\lambda\tr(b)\}=\tr\left(\mult\ldb a_\lambda b\rdb\right)$, $a,b\in\mc V$,
where $\tr:\mc V\to\mc V/[\mc V,\mc V]$ is the canonical quotient map,
and we have a Lie conformal algebra homomorphism 
$\mc V/[\mc V,\mc V]\to \mc V_N$,
given by $\tr(a)\mapsto\sum_{i=1}^Na_{ii}$.
\end{remark}
\begin{definition}[\cite{DSKV15b}]\label{20131217:def1}
Let $\mc V$ be a differential algebra endowed with a $2$-fold $\lambda$-bracket
$\ldb\cdot\,_\lambda\,\cdot\rdb$.
We call a pseudodifferential operator $A(\partial)\in\mc V((\partial^{-1}))$
of \emph{Adler type} for $\ldb\cdot\,_\lambda\,\cdot\rdb$ if the following identity holds in
$\mc V^{\otimes2}[\lambda,\mu]((z^{-1},w^{-1}))$:
\begin{equation}\label{20130923:eq2}
\begin{array}{c}
\displaystyle{
\ldb A(z)_{\lambda}A(w)\rdb
=A(w+\lambda+\partial)\otimes i_z(z-w-\lambda-\partial)^{-1}A^*(-z+\lambda)
}
\\
\displaystyle{
-A(z)\otimes i_z(z-w-\lambda-\partial)^{-1}A(w)
\,.
}
\end{array}
\end{equation}
Furthermore, we say that $A(\partial)$ is of bi-Adler type
with respect to the two $2$-fold $\lambda$-brackets $\ldb\cdot\,_\lambda\,\cdot\rdb_0$
and $\ldb\cdot\,_\lambda\,\cdot\rdb_1$
if it is of Adler type with respect to the $2$-fold $\lambda$-bracket $\ldb\cdot\,_\lambda\,\cdot\rdb_0$,
i.e. \eqref{20130923:eq2} holds,
and  
\begin{equation}\label{eq:bi-adler-dpva}
\begin{split}
\ldb A(z)_\lambda A(w)\rdb_1
& =1\otimes \iota_z(z-w-\lambda-\partial)^{-1}\left( A^*(-z+\lambda)- A(w)\right)
\\
& + \iota_z(z-w-\lambda)^{-1}\left( A(w+\lambda)-A(z)\right)\otimes1
\,.
\end{split}
\end{equation}
\end{definition}
The following results are a generalization of the results in \cite[Sec.5]{DSKV15b} and can be proved 
adapting the proofs of Theorems \ref{thm:main-adler}, \ref{thm:hn}, \ref{thm:main-bi-adler} 
and \ref{thm:bi-hn} to the double Poisson vertex algebra setting.
\begin{theorem}
Let $\mc V$ be a differential algebra, endowed with a $2$-fold $\lambda$-bracket $\ldb\cdot\,_\lambda\,\cdot\rdb$.
Let $A(\partial)\in\mc V((\partial^{-1}))$ be a pseudodifferential operator of Adler type with respect
to $\ldb\cdot\,_\lambda\,\cdot\rdb$.
Let $\mc V_1\subset\mc V$ be the differential subalgebra generated by the coefficients of $A(\partial)$.
Then:
\begin{enumerate}[(a)]
\item 
$\mc V_1$ is a double Poisson vertex algebra (with double $\lambda$-bracket $\ldb\cdot\,_\lambda\,\cdot\rdb$).
\item
For $B(\partial)\in\mc V((\partial^{-1}))$
a $K$-th root of $A$ (i.e. $A(\partial)=B(\partial)^K$ for $K\in\mb Z\backslash\{0\}$)
define the elements $h_{n,B}\in\mc V$, $n\in\mb Z$, by
\begin{equation}\label{eq:hn_double}
h_{n,B}=
\frac{-K}{|n|}
\Res_z B^n(z)
\text{ for } n\neq0
\,,\,\,
h_0=0\,.
\end{equation}
Then all the elements $\tint h_{n,B}$ are Hamiltonian functionals in involution:
$$
\{\tint h_{m,B},\tint h_{n,C}\}=0
\,\,,\,\,\,\, \text{ for all } m,n\in\mb Z\,,\,\, B,C \text{ roots of } A\,.
$$
The corresponding compatible hierarchy of Hamiltonian equations is
\begin{equation}\label{eq:hierarchy_dpva}
\frac{dA(z)}{dt_{n,B}}
=
\{\tint h_{n,B},A(z)\}
=
[(B^n)_+,A](z)
\,,\,\,n\in\mb Z,\,
B \text{ root of } A
\,,
\end{equation}
and the Hamiltonian functionals $\tint h_{n,C}$, $n\in\mb Z_+$, $C$ root of $A$,
are integrals of motion of all these equations.
\end{enumerate}
Furthermore, let us assume that $A(\partial)\in\mc V((\partial^{-1}))$
is of bi-Adler type 
with respect to the $2$-fold $\lambda$-brackets $\ldb\cdot\,_\lambda\,\cdot\rdb_0$ 
and $\ldb\cdot\,_\lambda\,\cdot\rdb_1$ on $\mc V$.
Then:
\begin{enumerate}[(a)]
\setcounter{enumi}{2}
\item
$\mc V$ is a bi-double PVA with the $2$-fold $\lambda$-brackets $\ldb\cdot\,_\lambda\,\cdot\rdb_0$
and $\ldb\cdot\,_\lambda\,\cdot\rdb_1$.
\item
For a $K$-th root $B(\partial)\in\mc V((\partial^{-1}))$ of $A(\partial)$,
the elements $h_{n,B}\in\mc V$, $n\in\mb Z_+$, given by \eqref{eq:hn_double}
satisfy the generalized Lenard-Magri recurrence relation \eqref{eq:LM-K}.
Hence, \eqref{eq:hierarchy_dpva} is a compatible hierarchy of bi-Hamiltonian equations.
\end{enumerate}
\end{theorem}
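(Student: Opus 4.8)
The plan is to deduce the theorem from the matrix results of Section~\ref{sec:2.3} and Section~\ref{sec:6} by means of the commutative reductions $\mc V_N$ of Proposition~\ref{20150722:prop1}, using these as a precise guide for the direct adaptation to $2$-fold brackets. For $a(\partial)=\sum_n a_n\partial^n\in\mc V((\partial^{-1}))$ set $a^{(N)}(\partial)=\sum_n\big((a_n)_{ij}\big)_{i,j=1}^N\partial^n\in\Mat_{N\times N}\mc V_N((\partial^{-1}))$. By the relations \eqref{questaserve} the assignment $a\mapsto a^{(N)}$ is a homomorphism of differential algebras, so $(B^K)^{(N)}=(B^{(N)})^K$ and $B^{(N)}$ is a $K$-th root of $A^{(N)}$; moreover one checks entrywise that $(A^*)^{(N)}_{ik}=(A^{(N)}_{ik})^*$. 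The first step is a transfer lemma: inserting \eqref{20130923:eq2} into the induced $\lambda$-bracket \eqref{20130917:eq1} — reading the left Sweedler factor into the first matrix slot (indices $(h,j)$) and the right factor into the second (indices $(i,k)$) — one obtains precisely the matrix Adler identity \eqref{eq:adler} for $A^{(N)}$ over $\mc V_N$, for every $N\geq1$, the delicate point being the placement of $\partial$ across $\otimes$ discussed below; the bi-Adler identity \eqref{eq:bi-adler-dpva} transfers to \eqref{eq:bi-adler} in the same way.

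For parts (a) and (c), the transfer lemma together with Theorem~\ref{thm:main-adler} and Theorem~\ref{thm:main-bi-adler} shows that each $(\mc V_1)_N$ is a (bi-)PVA. The double skewsymmetry and Jacobi axioms for $\mc V_1$ then follow by adapting the computations proving Theorem~\ref{thm:main-adler} (i.e. those of \cite[Lem.2.2 \& Lem.2.5]{DSKV15a}) directly to $2$-fold brackets; conceptually, these axioms are exactly the matrix axioms of \eqref{20130917:eq1} read off in all ranks $N$.

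For parts (b) and (d) we may now use that $\mc V_1$ is a double PVA, so Proposition~\ref{20150722:prop1} applies: $\varphi_N$ is a Lie algebra homomorphism and the action is compatible through \eqref{eq:varphiN-comp}. By \eqref{eq:hn_double} one has $\varphi_N(\tint h_{n,B})=\tint_N\sum_i (h_{n,B})_{ii}=\tint_N h_{n,B^{(N)}}$, the matrix Hamiltonian functional of Theorem~\ref{thm:hn} for $A^{(N)}$. Hence $\varphi_N$ sends the involution bracket $\{\tint h_{m,B},\tint h_{n,C}\}$ to $\{\tint_N h_{m,B^{(N)}},\tint_N h_{n,C^{(N)}}\}$, which vanishes by Theorem~\ref{thm:hn}(a); and by \eqref{eq:varphiN-comp} the $(i,j)$-entry of $\{\tint h_{n,B},A(w)\}$ equals $[((B^{(N)})^n)_+,A^{(N)}]_{ij}(w)$ by Theorem~\ref{thm:hn}(b), i.e. the $(i,j)$-entry of $[(B^n)_+,A]^{(N)}(w)$. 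The Lenard--Magri relation \eqref{eq:LM-K} in (d) follows identically from Theorem~\ref{thm:bi-hn}.

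The main obstacle is descending from these statements, valid for every $N$, back to genuine identities in $\mc V((w^{-1}))$ (for \eqref{eq:hierarchy_dpva}) and in $\mc V/([\mc V,\mc V]+\partial\mc V)$ (for the involution). This needs the joint faithfulness of the reductions $a\mapsto(a^{(N)})_{N\geq1}$, which holds for a free differential algebra but not for an arbitrary $\mc V$; so for the general case I would carry out the direct adaptation that the reduction makes transparent. Concretely, this means establishing the $2$-fold analogs of Lemma~\ref{lem:lambda-product}, Corollary~\ref{cor:lambda-power}, Lemma~\ref{lem:lambda-inverse} and Corollary~\ref{cor:lambda-neg-power}, and then the double version of the key Lemma~\ref{lem:hn2}, in which the trace and the product $\tr(A(z+\partial)B(z))$ of Lemma~\ref{lem:hn1}(b) are replaced by the multiplication map $\mult$ and the projection $\tint:\mc V\to\mc V/([\mc V,\mc V]+\partial\mc V)$: the quotient by $[\mc V,\mc V]$ supplies the cyclicity that the trace provided, and the quotient by $\partial\mc V$ supplies integration by parts. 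The bulk of the work, and the step most prone to error, is the bookkeeping of the placement of $\partial$ across the tensor sign $\otimes$ in \eqref{20130923:eq2} and of the auxiliary variables produced when \eqref{20130923:eq2} is substituted into this double Lemma~\ref{lem:hn2} — the $2$-fold counterpart of the residue manipulations in \eqref{eq:hn2-pr2a} and \eqref{eq:hn-pr2} — where factors of the noncommutative $\mc V$ may be reordered only modulo $[\mc V,\mc V]+\partial\mc V$, after applying $\tint\Res_z\mult$.
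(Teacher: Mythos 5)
Your proposal is essentially correct, and its operative content coincides with the paper's own (very terse) proof: the paper disposes of this theorem in one sentence, saying the results ``can be proved adapting the proofs of Theorems \ref{thm:main-adler}, \ref{thm:hn}, \ref{thm:main-bi-adler} and \ref{thm:bi-hn} to the double Poisson vertex algebra setting'' --- which is exactly your fallback plan, and you correctly locate where the real work sits (the $2$-fold analogues of Lemmas \ref{lem:lambda-product}--\ref{lem:hn2}, with $\mult$ and the quotient by $[\mc V,\mc V]+\partial\mc V$ supplying the cyclicity and integration by parts that $\tr$ and $\tint$ provided, and the bookkeeping of $\partial$ across $\otimes$ in \eqref{20130923:eq2}). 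What you add, and the paper does not use as a proof mechanism, is the reduction through the commutative algebras $\mc V_N$: your transfer lemma (double Adler for $A$ implies matrix Adler for $A^{(N)}$ over $\mc V_N$, with the Sweedler factors landing in the $(h,j)$ and $(i,k)$ slots of \eqref{20130917:eq1}) is correct, as is the identity $\varphi_N(\tint h_{n,B})=\tint_N h_{n,B^{(N)}}$, so the $\mc V_N$ picture genuinely proves the theorem for free $\mc V$ and serves as a reliable consistency check in general. You are also right that this route cannot close the argument for arbitrary $\mc V$, since the family $a\mapsto(a^{(N)})_{N\ge1}$ need not be jointly faithful; the paper sidesteps this entirely by asserting the direct adaptation, and only invokes Proposition \ref{20150722:prop1} \emph{after} the theorem, to push the resulting hierarchy down to $\mc V_N$. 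In short: same proof in the end, with a correct and useful auxiliary reduction whose limitation you have diagnosed accurately.
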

Note that by Proposition \ref{20150722:prop1}(a), the sequence $h_{n,B}\in\mc V$, $n\in\mb Z_+$,
defined by \eqref{eq:hn_double} gives rise to a sequence
$$
h_{n,B}^N=\varphi_N(\tint h_{n,B})\,\in\mc V_N\,,
\qquad
n\in\mb Z_+,\, N\geq1
\,,
$$
of Hamiltonian functionals in involution,
and by Proposition \ref{20150722:prop1}(b) we get the integrable Hamiltonian hierarchy
in $\mc V_N$ corresponding to \eqref{eq:hierarchy_dpva}, which can be regarded as the Hamiltonian hierarchy 
for the entries of the matrix hierarchy \eqref{eq:hierarchy_dpva}.
\begin{example}[{\cite[Ex.3.23]{DSKV15b}}]
Consider the noncommutative algebra of differential polynomials in two (noncommutative) variables $u$ and $v$
endowed with the following $2$-fold PVA $\lambda$-bracket:
$$
\ldb u_\lambda u\rdb=1\otimes u-u\otimes1+c(1\otimes1)\lambda\,,
\qquad c\in\mb F\,,\qquad
v \text{ central }
\,.
$$ 
The elements
$h_0=1,\, h_n=\frac1n (u+v)^n$ for $n>0$, are in involution, 
and the corresponding integrable hierarchy of Hamiltonian equations is:
$$
\frac{du}{dt_n} =v(u+v)^n u -u(u+v)^n v +c\partial (u+v)^{n+1}
\,\,,\,\,\,\,
\frac{dv}{dt_n}=0
\,,\,\, n\in \mb Z_+
\,.
$$
For every $N\geq1$, $\mc V_N$ is the commutative algebra of differential polynomials in the variables
$u_{ij}$ and $v_{ij}$, $1\leq i,j\leq N$, endowed with the following PVA $\lambda$-bracket
$$
\{{u_{ij}}_\lambda u_{hk}\}=\delta_{hj}u_{ik}-\delta_{ik}u_{hj}+c\delta_{ik}\delta_{hj}\lambda
\,,\qquad
v_{ij} \text{ central for every } i,j
\,.
$$
The corresponding Hamiltonian densities are
$$
h_0^N=N
\,\,,\,\,\,\,
h_n^N=\frac1n\sum_{i,j=1}^N\sum_{k=0}^n\binom{n}{k}(u^k)_{ij}(v^{n-k})_{ji}
\,,
\qquad
n\geq1
\,,
$$
where in the RHS above we have used equation \eqref{questaserve}.
The Hamiltonian equations become
\begin{align*}
\frac{du_{ij}}{dt_n}
&
=
c\partial ((u+v)^{n+1})_{ij}
+\sum_{h,k=1}^N\left(
v_{ih}((u+v)^n)_{hk} u_{kj} -u_{ih}((u+v)^n)_{hk}v_{kj} \right)
\,,
\\
\frac{dv_{ij}}{dt_n}
&=0
\,,
\qquad
1\leq i,j\leq N, n\in \mb Z_+
\,.
\end{align*}
\end{example}

\appendix

\section{Square roots of \texorpdfstring{$\partial\id_2+Q\in\Mat_{2\times2}\mc V((\partial^{-1}))$}{d+Q}}
\label{sec:app}

Let $\mc K$ be a differential field
with subfield of constant $\mb F$.
Let $Q=\big(q_{ji}\big)_{i,j=1}^2\in\Mat_{2\times2}\mc K$.
The matrix differential operator $A(\partial)=\partial\id_2+Q\in\Mat_{2\times2}\mc K[\partial]$
has Dieudonn\`e determinant of degree $2$,
and it is natural to ask whether $A(\partial)$ has a square root in 
$\Mat_{2\times2}\mc K((\partial^{-1}))$
or over an extension of $\mc K$.
\begin{lemma}\label{lem:roots1}
In a differential field extension $\widetilde{\mc K}$ of $\mc K$,
there exists an invertible matrix $B\in\Mat_{N\times N}\widetilde{\mc K}$
such that
\begin{equation}\label{eq:roots1}
B^\prime+QB=0
\,.
\end{equation}
\end{lemma}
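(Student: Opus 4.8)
The statement is exactly the existence of a fundamental solution matrix for the linear system $B'=-QB$ in a suitable differential field extension; this is the differential-algebra content underlying Picard–Vessiot theory, and the plan is to produce $\widetilde{\mc K}$ by the standard maximal-differential-ideal construction.

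First I would form the differential ring obtained by adjoining to $\mc K$ a generic invertible matrix of unknowns: let $X=(x_{ij})_{i,j=1}^N$ be $N^2$ indeterminates and set $R=\mc K[x_{ij}:1\le i,j\le N][\det(X)^{-1}]$. The derivation $\partial$ of $\mc K$ extends uniquely to a derivation of the polynomial ring $\mc K[x_{ij}]$ once the values $\partial x_{ij}$ are prescribed, and I prescribe them by the matrix identity $X'=-QX$, i.e. $\partial x_{ij}=-\sum_{k=1}^N q_{ik}x_{kj}$; this extends uniquely to the localization $R$ via $\partial(\det(X)^{-1})=-\partial(\det X)\det(X)^{-2}$. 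Thus $R$ is a differential ring extension of $\mc K$ in which the generic matrix $X$ is invertible and satisfies the desired equation by construction.

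Next, by Zorn's Lemma I would choose a maximal element $\mf m$ in the nonempty set of proper differential ideals of $R$; the quotient $R/\mf m$ is then a differentially simple ring, finitely generated over $\mc K$. Granting for the moment that $R/\mf m$ is an integral domain, I set $\widetilde{\mc K}=\Frac(R/\mf m)$, with the derivation extended in the unique way from $R/\mf m$. Since $\mf m$ is proper, $\mc K$ injects into $R/\mf m\subset\widetilde{\mc K}$, so $\widetilde{\mc K}$ is a differential field extension of $\mc K$. Letting $B\in\Mat_{N\times N}\widetilde{\mc K}$ be the image of $X$, the relation $B'+QB=0$ holds in $\widetilde{\mc K}$, and $\det(B)$ is the image of the unit $\det(X)$, hence nonzero, so $B$ is invertible, which is precisely the assertion.

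The one substantive point, and the main obstacle, is verifying that $R/\mf m$ is a domain. Here I would invoke the fact, valid because $\mc K$ and hence $R$ have characteristic $0$ (as $\mb F$ does), that in a $\mb Q$-algebra every prime ideal minimal over a differential ideal is itself a differential ideal (Keigher's lemma; equivalently, the radical of a differential ideal is again differential). Applying this to the differential ideal $(0)$ of the differentially simple ring $R/\mf m$, any minimal prime over $(0)$ is a differential ideal, hence is either $(0)$ or the whole ring; being a proper prime, it must be $(0)$, so $(0)$ is prime and $R/\mf m$ is an integral domain. This closes the argument, and characteristic $0$ is used exactly at this step.
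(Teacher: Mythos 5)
Your proof is correct, and it follows the same underlying reduction as the paper: rewrite $B'+QB=0$ as a first-order linear system of $N^2$ equations in the entries of $B$ and invoke the existence of a full solution space in a differential field extension. The paper simply asserts this existence as a standard fact, whereas you supply the actual construction (generic matrix ring localized at $\det X$, maximal differential ideal, Keigher's lemma to get a domain in characteristic $0$), so your argument is a complete and accurate filling-in of the step the paper leaves to the reader.
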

\begin{proof}
Equation \eqref{eq:roots1} can be rewritten as the following systems
of $N^2$ $1$-st order linear differential equations in the $N^2$ entries $b_{ij}$ of the matrix $B$:
$$
b_{ij}^\prime+\sum_{\ell=1}^Nq_{\ell i}b_{\ell j}=0
\,,
$$
which has an $N^2$ dimensional space of solutions in a differential field extension
$\widetilde{\mc K}$ of $\mc K$.
\end{proof}
\begin{lemma}\label{lem:roots2}
All the possible square roots of the matrix $\id_2\partial$ 
are the matrix pseudodifferential operators with constant coefficients
$\Delta\in\Mat_{2\times 2}\mb F((\partial^{-1}))$
such that $\tr(\Delta)=0$ and $\det(\Delta)=\partial$.
Namely, they are the matrices of the form
\begin{equation}\label{eq:roots2}
\Delta=
\left(\begin{array}{ll}
\alpha(\partial) & \beta(\partial) \\
\beta(\partial)^{-1}(\alpha(\partial)^2-\partial) & -\alpha(\partial)
\end{array}\right)
\,.
\end{equation}
with $\alpha,\beta\in\mb F((\partial^{-1}))$ and $\beta(\partial)\neq0$.
\end{lemma}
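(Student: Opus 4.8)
The plan is to exploit the hypothesis that $\Delta$ has \emph{constant} coefficients, i.e.\ $\Delta\in\Mat_{2\times2}\mb F((\partial^{-1}))$. Since $\partial$ annihilates $\mb F$, the composition product on $\mb F((\partial^{-1}))$ reduces to the ordinary multiplication of formal Laurent series in the central symbol $\partial$; hence $\mb K:=\mb F((\partial^{-1}))$ is a \emph{commutative field} and $\Mat_{2\times2}\mb F((\partial^{-1}))=\Mat_2(\mb K)$. The whole question then reduces to the purely linear-algebraic one of classifying the $\Delta\in\Mat_2(\mb K)$ satisfying $\Delta^2=\partial\,\id_2$.

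First I would record the arithmetic fact on which everything rests: $\partial$ is \emph{not} a square in $\mb K$. Indeed, every nonzero element of $\mb F((\partial^{-1}))$ has a leading term $c\,\partial^{N}$ with $c\in\mb F^\times$ and $N\in\mb Z$, and the leading exponent of a square is always even, while $\partial$ has leading exponent $1$. Therefore the quadratic $X^2-\partial$ has no root in $\mb K$ and is irreducible over it; in particular $\partial\,\id_2$ has no scalar square root.

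Next I would invoke the Cayley--Hamilton identity $\Delta^2-\tr(\Delta)\,\Delta+\det(\Delta)\,\id_2=0$. Substituting $\Delta^2=\partial\,\id_2$ turns this into $\tr(\Delta)\,\Delta=(\partial+\det(\Delta))\,\id_2$. If $\tr(\Delta)\neq0$ this displays $\Delta$ as a scalar matrix $c\,\id_2$, whence $\Delta^2=\partial\,\id_2$ would force $c^2=\partial$, contradicting the previous step; so necessarily $\tr(\Delta)=0$. With vanishing trace, Cayley--Hamilton collapses to $\Delta^2=-\det(\Delta)\,\id_2$, so the equation $\Delta^2=\partial\,\id_2$ is equivalent to fixing the value of $\det(\Delta)$. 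This simultaneously proves the necessity of the two scalar conditions and, read backwards, their sufficiency.

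Finally I would make the family explicit. A traceless matrix has the form $\Delta=\left(\begin{smallmatrix}\alpha&\beta\\\gamma&-\alpha\end{smallmatrix}\right)$ with $\alpha,\beta,\gamma\in\mb K$, and a one-line computation gives $\Delta^2=(\alpha^2+\beta\gamma)\,\id_2$; thus $\Delta^2=\partial\,\id_2$ is exactly the relation $\beta\gamma=\partial-\alpha^2$. Here $\beta=0$ is impossible, since it would again force $\alpha^2=\partial$, so $\beta\neq0$ and $\gamma=\beta^{-1}(\partial-\alpha^2)$, which recovers the family \eqref{eq:roots2} with $\alpha\in\mb K$ arbitrary and $\beta\in\mb K^\times$ arbitrary. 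I expect no serious obstacle in this argument: the only delicate point, used twice, is the elementary observation that $\partial$ is not a square in $\mb F((\partial^{-1}))$, which is exactly what eliminates both the scalar case and the degenerate case $\beta=0$.
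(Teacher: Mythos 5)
There is a genuine gap, and it comes from a misreading of what the lemma asserts. You take the constancy of the coefficients of $\Delta$ as a \emph{hypothesis} (``the plan is to exploit the hypothesis that $\Delta$ has constant coefficients''), whereas in the paper it is the main \emph{conclusion}. The ambient ring here is $\Mat_{2\times2}\mc K((\partial^{-1}))$ for a differential field $\mc K$ (or an extension $\widetilde{\mc K}$), and the lemma classifies \emph{all} square roots of $\partial\id_2$ in that noncommutative ring, the point being that they are forced to lie in the commutative subring $\Mat_{2\times2}\mb F((\partial^{-1}))$. This stronger statement is exactly what Proposition \ref{thm:roots} needs: there one conjugates an arbitrary square root of $\partial+Q$ by $B$ to get a square root of $\partial\id_2$ in $\Mat_{2\times2}\widetilde{\mc K}((\partial^{-1}))$, and then invokes Lemma \ref{lem:roots2} to conclude it has the form \eqref{eq:roots2}. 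Your argument never addresses the possibility of square roots with non-constant coefficients, and your main tool is unavailable there: the Cayley--Hamilton identity with the commutative $\det$ and the step ``$\tr(\Delta)\,\Delta$ scalar $\Rightarrow\Delta$ scalar'' make sense over the commutative field $\mb F((\partial^{-1}))$ but not over the skew ring $\mc K((\partial^{-1}))$. The paper instead works directly with the four component equations $a^2+bc=\partial$, $ab+bd=0$, $ca+dc=0$, $cb+d^2=\partial$ for $a,b,c,d\in\mc K((\partial^{-1}))$: degree considerations force $b\neq0$, the second equation gives $d=-b^{-1}ab$, and comparing the remaining equations yields $b\partial=\partial b$ and then $a\partial=\partial a$, i.e.\ $a,b\in\mb F((\partial^{-1}))$. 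Only after that does the classification reduce to the commutative computation you carried out. So your write-up covers (correctly) the easy half of the proof and omits the half that carries the content.

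Within the constant-coefficient setting your reasoning is sound --- in particular the observation that $\partial$ is not a square in $\mb F((\partial^{-1}))$ because squares have even leading exponent is exactly the degree argument the paper uses to rule out $b=0$. One small remark: your formula $\gamma=\beta^{-1}(\partial-\alpha^2)$ is the correct one (it is what $\alpha^2+\beta\gamma=\partial$ gives, and with it $\det\Delta=-\partial$), so it does not literally ``recover'' \eqref{eq:roots2} as printed, which carries an extra sign in the lower-left entry; that display (and the normalization $\det(\Delta)=\partial$ in the statement) appears to contain a sign slip, and you should not adjust your computation to match it.
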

\begin{proof}
Let $\Delta=\left(\begin{array}{ll}
a(\partial) & b(\partial) \\
c(\partial) & d(\partial)
\end{array}\right)$,
be a square root of $\partial\id_2$.
In other words, the pseudodifferential operators $a,b,c,d\in\mc K((\partial^{-1}))$
satisfy
\begin{equation}\label{eq:roots3}
\begin{split}
& a(\partial)^2+b(\partial)c(\partial)=\partial \\
& a(\partial)b(\partial)+b(\partial)d(\partial) = 0 \\
& c(\partial)a(\partial)+d(\partial)c(\partial) = 0 \\
& c(\partial)b(\partial)+d(\partial)^2=\partial
\,.
\end{split}
\end{equation}
It cannot be $b(\partial)=0$, 
otherwise from the first equation we get $a(\partial)^2=\partial$,
which is impossible by obvious degree considerations.
Hence, $b(\partial)\neq0$,
and from the second equation we get
\begin{equation}\label{eq:roots4}
d(\partial)=-b(\partial)^{-1}a(\partial)b(\partial)
\,.
\end{equation}
Substituting \eqref{eq:roots4}
in the other three equations of \eqref{eq:roots3}, we get
\begin{equation}\label{eq:roots5}
\begin{split}
& a(\partial)^2+b(\partial)c(\partial)=\partial \\
& c(\partial)a(\partial)-b(\partial)^{-1}a(\partial)b(\partial)c(\partial) = 0 \\
& c(\partial)b(\partial)+b(\partial)^{-1}a(\partial)^2b(\partial)=\partial
\,.
\end{split}
\end{equation}
If we multiply the third equation of \eqref{eq:roots5} on the left by $b(\partial)$ 
and on the right by $b(\partial)^{-1}$,
we get
$$
b(\partial)c(\partial)+a(\partial)^2=b(\partial)\partial b(\partial)^{-1}
\,,
$$
which, when compared to the first equation of \eqref{eq:roots5}, implies
\begin{equation}\label{eq:roots7}
b(\partial)\partial=\partial b(\partial)\,,
\end{equation}
namely $b(\partial)\in\mb F((\partial^{-1}))$.
Next, if we multiply on the left by $b(\partial)$ the second equation of \eqref{eq:roots5}
we get
$$
b(\partial)c(\partial)a(\partial)=a(\partial)b(\partial)c(\partial)\,,
$$
and this equation can be rewritten, using the first equation of \eqref{eq:roots5},
as
$$
\partial a(\partial)=a(\partial)\partial\,,
$$
namely $a(\partial)\in\mb F((\partial^{-1}))$.
To conclude, $a(\partial)$ and $b(\partial)$, being with constant coefficients, commute, 
and therefore equation \eqref{eq:roots4} gives $d(\partial)=-a(\partial)$.
Moreover, the first equation of \eqref{eq:roots5} gives
$c(\partial)=-b(\partial)^{-1}(\partial-a(\partial)^2)$.
This proves \eqref{eq:roots2} and the claim.
\end{proof}
\begin{remark}\label{rem:roots2}
It is not true, in general,
that a matrix pseudodifferential operator $A(\partial)\in\Mat_{N\times N}\mc V((\partial^{-1}))$
whose degree of the Dieudonne determinant is equal to $K\geq1$
admits a $K$-th root.
For example, the matrix 
$\left(\begin{array}{ll} \partial^3 & 0 \\ 0 & \partial \end{array}\right)$
has Dieudonne determinant of degree $4$.
But the same computations used in the proof of Lemma \eqref{lem:roots2}
show that this matrix cannot have any square root
(equation \eqref{eq:roots7}
is replaced, in this case, by equation $b(\partial)\partial^3=\partial b(\partial)$,
which does not admit any solution, by obvious degree considerations).
\end{remark}
\begin{proposition}\label{thm:roots}
Let $B\in\Mat_{2\times 2}\widetilde{\mc K}$
be any invertible matrix solving \eqref{eq:roots1}.
Then all possible square roots of the matrix
$A(\partial)=\partial+Q$ 
are of the form 
\begin{equation}\label{eq:roots6}
A^{\frac12}(\partial)
=
B\Delta(\partial)B^{-1}
\,,
\end{equation}
where $\Delta(\partial)$ is an arbitrary root of $\partial\id_2$,
as in Lemma \ref{lem:roots2}.
\end{proposition}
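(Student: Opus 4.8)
The plan is to show that conjugation by the matrix $B$ turns the operator $A(\partial)=\partial\id_2+Q$ into the constant-coefficient operator $\partial\id_2$, and then to transport the classification of square roots of $\partial\id_2$ from Lemma \ref{lem:roots2} back to $A(\partial)$. First I would establish the key intertwining identity. Viewing the invertible matrix $B\in\Mat_{2\times2}\widetilde{\mc K}$ as a $0$-th order matrix pseudodifferential operator, the Leibniz rule gives $\partial\id_2\circ B=B\circ\partial+B'$, where $B'$ denotes multiplication by the derivative matrix. Hence
\[
A(\partial)\circ B=(\partial\id_2+Q)\circ B=B\circ\partial+B'+QB=B\circ\partial,
\]
where the last equality uses the defining relation $B'+QB=0$ from \eqref{eq:roots1}. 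Since $B$ is invertible, this rewrites as $B^{-1}\circ A(\partial)\circ B=\partial\id_2$, i.e. $A(\partial)=B\circ(\partial\id_2)\circ B^{-1}$. I would double-check this computation on symbols using \eqref{eq:prod-symbol}, which yields the same conclusion $A\circ B=B\circ\partial$.

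Next I would use that conjugation by $B$ is an algebra automorphism of $\Mat_{2\times2}\widetilde{\mc K}((\partial^{-1}))$, and in particular commutes with taking $\circ$-squares: for any $\Delta(\partial)$ one has $(B\circ\Delta\circ B^{-1})^{\circ2}=B\circ\Delta^{\circ2}\circ B^{-1}$. Therefore an operator $C(\partial)$ satisfies $C^{\circ2}=A(\partial)$ if and only if $(B^{-1}\circ C\circ B)^{\circ2}=B^{-1}\circ A\circ B=\partial\id_2$, i.e. if and only if $\Delta:=B^{-1}\circ C\circ B$ is a square root of $\partial\id_2$. This sets up a bijection $C\leftrightarrow\Delta=B^{-1}\circ C\circ B$ between square roots of $A(\partial)$ and square roots of $\partial\id_2$; it is injective and surjective precisely because conjugation by $B$ is invertible. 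Finally, Lemma \ref{lem:roots2} describes all square roots $\Delta$ of $\partial\id_2$ explicitly, so every square root of $A(\partial)$ is exactly of the form $C=B\circ\Delta\circ B^{-1}$ with $\Delta$ as in \eqref{eq:roots2}, which is the assertion \eqref{eq:roots6}.

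I expect the only real content to be the intertwining identity $A(\partial)\circ B=B\circ\partial$; everything else is formal once conjugation is recognized as an automorphism compatible with squaring. The one point requiring care is that all of this must be carried out over the extension field $\widetilde{\mc K}$ in which $B$ lives: although the square roots $\Delta$ of $\partial\id_2$ have constant coefficients in $\mb F$ by Lemma \ref{lem:roots2}, their conjugates $B\circ\Delta\circ B^{-1}$ genuinely have coefficients in $\widetilde{\mc K}$, so the square roots of $A(\partial)$ are a priori only defined over $\widetilde{\mc K}$. I would therefore make sure that the statement and the ambient algebra of pseudodifferential operators are understood over $\widetilde{\mc K}$ throughout.
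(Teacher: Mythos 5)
Your proof is correct and follows essentially the same route as the paper's: the key identity $A(\partial)\circ B=B\circ\partial$ (equivalently $B\partial B^{-1}=\partial+Q$, which is how the paper phrases it) reduces the problem to square roots of $\partial\id_2$, which are then classified by Lemma \ref{lem:roots2}. Your write-up is slightly more explicit about conjugation being an automorphism and hence giving a bijection in both directions, but the substance is identical.
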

\begin{proof}
Note that, under the assumption \eqref{eq:roots1} on $B$,
we have $B\partial B^{-1}=\partial+Q$.
Hence, if $A^{\frac12}(\partial)$ is a square root of $A(\partial)$,
then 
$$
(A^{\frac12}(\partial))^2
=
\partial+Q
=
B\partial B^{-1}
\,,
$$
and multiplying on the left by $B^{-1}$ and on the right by $B$, we get
$$
(B^{-1}A^{\frac12}(\partial)B)^2
=
\partial\id_2
\,.
$$
In other words, $B^{-1}A^{\frac12}(\partial)B$ is a square root of $\partial\id$,
i.e. it is $B^{-1}A^{\frac12}B=\Delta(\partial)$,
as in Lemma \ref{lem:roots2}.
The claim follows.
\end{proof}
\begin{remark}\label{rem:roots1}
Consider the local density functions obtained by taking residue of fractional powers of the operator $A(\partial)$
(cf. \eqref{eq:hn}):
$$
h_n
=
\tr\Res_\partial
A^{\frac n2}(\partial)
\,,\,\,n\geq0\,,
$$
where $A^{\frac12}(\partial)=(\partial\id_2+Q)^{\frac12}$ is one of the square roots given by Proposition \ref{thm:roots}.
For even $n=2k$, $A^{\frac n2}(\partial)=B\partial^kB^{-1}$,
so the residue is zero unless $n=-2$.
On the other hand, for odd $n=2k+1$, we have
\begin{equation}\label{eq:remark}
h_n
=
\tr\Res_\partial
B\Delta^n(\partial)B^{-1}
=
\tr\Res_\partial
B\Delta(\partial)\partial^kB^{-1}
\,.
\end{equation}
By the cyclic property $\tint\tr\Res_\partial AB=\tint\tr\Res_\partial BA$,
and since, by Lemma \ref{lem:roots2}, $\Delta(\partial)$ has zero trace,
the element $h_n$ lies in the image of $\partial$ (in $\widetilde{\mc K}$).
%
\end{remark}


\end{document}